\newtheorem{theorem}{Theorem}[section]
\newtheorem{lemma}[theorem]{Lemma}
\newtheorem{definition}[theorem]{Definition}
\newtheorem{proposition}[theorem]{Proposition}
\newtheorem{fact}[theorem]{Fact}
\newtheorem{remark}[theorem]{Remark}
\newtheorem{claim}[theorem]{Claim}
\newcommand{\eps}{\varepsilon}
\newcommand{\supp}{\mathsf{supp}}
\DeclareMathOperator*{\E}{\mathbb{E}}
\newcommand{\authnote}[3]{\textcolor{#2}{{\sf (#1's Note: {\sl{#3}})}}}
\newcommand{\xnote}{\authnote{Xin}{magenta}}
\newenvironment{proofof}[1]{\bigskip \noindent {\it Proof of #1.}\quad }
{\qed\par\vskip 4mm\par}
\begin{document}

\title{Improved Decoding of Expander Codes}
\author{Xue Chen\thanks{\tt{xuechen1989@ustc.edu.cn}, University of Science and Technology of China \& The CAS Key Laboratory of Wireless-Optical Communications, USTC. Part of this work is done while the author was at George Mason University.}
\and Kuan Cheng\thanks{
\tt{ckkcdh@pku.edu.cn}, Peking University.
}
\and Xin Li\thanks{
\tt{lixints@cs.jhu.edu}, Johns Hopkins University. Supported by NSF CAREER Award CCF-1845349 and NSF Award CCF-2127575.
}
\and Minghui Ouyang\thanks{\tt{ouyangminghui1998@gmail.com}, Peking University.
}
}

\date{}

\maketitle

\begin{abstract}
We study the classical expander codes, introduced by Sipser and Spielman \cite{SS96}. Given any constants $0< \alpha, \eps < 1/2$, and an arbitrary bipartite graph with $N$ vertices on the left, $M < N$ vertices on the right, and left degree $D$ such that any left subset $S$ of size at most $\alpha N$ has at least $(1-\eps)|S|D$ neighbors, we show that the corresponding linear code given by parity checks on the right has distance at least roughly $\frac{\alpha N}{2 \eps}$. This is strictly better than the best known previous result of $2(1-\eps) \alpha N$ \cite{Sudan2000note, Viderman13b} whenever $\eps < 1/2$, and improves the previous result significantly when $\eps$ is small. Furthermore, we show that this distance is tight in general, thus providing a complete characterization of the distance of general expander codes.

Next, we provide several efficient decoding algorithms, which vastly improve previous results in terms of the fraction of errors corrected, whenever $\eps < \frac{1}{4}$. Finally, we also give a bound on the list-decoding radius of general expander codes, which beats the classical Johnson bound in certain situations (e.g., when the graph is almost regular and the code has a high rate). 

Our techniques exploit novel combinatorial properties of bipartite expander graphs. In particular, we establish a new size-expansion tradeoff, which may be of independent interests.
\end{abstract}

%\textcolor{red}{TO-DO-List:\\
%(1) Xue: I did not respond the comments in Review 3. \\(2) Xue: I revise Section 4 to respond the comments from Review 2.\\
%(3) Two pages extended abstract version is due in 10 days.\\
%(4) Make a pass on Section 6 and 7 before uploading to arXiv?
%}
%\setcounter{page}{0}
%\pagebreak

%\textcolor{blue}{To-do-list: (1) Rewrite the main theorem in linear time. Shall we state the parallel version? (2) Finish overview. (3) Explain in the proof of Section 4.1 about the ``almost regular" case, and provide an update in the appendix. (4) Provide an organization of the paper. (5) Address the comments}

\section{Introduction}
Expander codes \cite{SS96} are error-correcting codes derived from bipartite expander graphs that are notable for their ultra-efficient decoding algorithms. In particular, all known asymptotically good error-correcting codes which admit (almost) linear-time decoding algorithms for a constant fraction of adversarial errors are based on expander codes. At the same time, expander codes are closely related to low-density parity-check (LDPC) codes \cite{Gallager1963} --- a random LDPC code is an expander code with high probability. Over the last twenty years, LDPC codes have received increased attention (\cite{FWK05,FMSSW07,ADS12,dimakis2012ldpc,JNNSM20LDPC} to name a few) because of their practical performance. Along this line of research, the study of decoding algorithms for expander codes, such as belief-propagation \cite{Gallager1963,SS96, luby1998improved}, message-passing \cite{richardson2001the}, and linear programming \cite{FWK05, FMSSW07, Viderman13}, has laid theoretical foundations and sparked new lines of inquiry for LDPC codes.

In this work, we consider expander codes for adversarial errors. Briefly, given a bipartite graph $G$ with $N$ vertices of degree $D$ on the left, we say it is an $(\alpha N, (1-\eps)D)$ expander if and only if any left subset $S$ with size at most $\alpha N$ has at least $(1-\eps)D \cdot |S|$ distinct neighbors. The code $\mathcal{C}$ of an expander $G$ assigns a bit to each vertex on the left and views each vertex on the right as a parity check over its neighbors. A codeword $C \in \mathcal{C}$ is a vector in $\{0,1\}^N$ that satisfies all parity checks on the right. Moreover, the distance of $\mathcal{C}$ is defined as the minimum Hamming distance between all pairs of codewords. We defer the formal definitions of expanders and expander codes to Section~\ref{sec:preli}. For typical applications, the parameters $\alpha,\eps$ and $D$ are assumed to be constants, and there exist explicit constructions (e.g., \cite{CRVW}) of such expander graphs with $M<N$.

%\xnote{Is this correct? I only found distance of $\alpha N$ in \cite{SS96}, while $2(1-\eps) \cdot \alpha N$ is proved in \cite{Viderman13b}.} \textcolor{blue}{You are right. Not sure whether we wanna address [SS96] give the 1st constant fraction radius or the 1st linear-time decoding algorithm for asymptotic good codes.}
For expander codes defined by $(\alpha N,(1-\eps)D)$-expanders, the seminal work of Sipser and Spielman \cite{SS96} gave the first efficient algorithm to correct a constant fraction (i.e., $(1-2\eps) \cdot \alpha N$) of errors, when $\eps < 1/4$. In fact, their algorithms are super efficient --- they provide a linear time algorithm called belief-propagation and a logarithmic time parallel algorithm with a linear number of processors. %proved that a natural sequential decoding algorithm suggested by Gallager~\cite{Gallager1963}, called belief-propagation, could decode at least $(1-2\eps) \cdot \alpha N$ errors in linear time. 
Subsequently, Feldman et al.~\cite{FMSSW07} and Viderman \cite{Viderman13, Viderman13b} provided improved algorithms to correct roughly $\frac{1-3\eps}{1-2\eps}\cdot \alpha N$ errors, when $\eps < 1/3$. This fraction of error is strictly larger than that of \cite{SS96} whenever $\eps < 1/4$. Viderman \cite{Viderman13b} also showed how to correct $N^{\Omega_{D,\eps,\alpha}(1)}$ errors when $\eps \in [1/3,1/2)$, and that $\eps < 1/2$ is necessary for correcting even $1$ error.\ However, the following basic question about expander codes remains unclear.

\paragraph{Question:}\emph{What is the best distance bound one can get from an expander code defined by arbitrary $(\alpha N,(1-\eps)D)$-expanders?}\\

This question is important since it is well known that for unique decoding, the code can and can only correct up to half the distance number of errors. In \cite{SS96}, Sipser and Spielman showed that the distance of such expander codes is at least $\alpha N$, while a simple generalization improves this bound to $2(1-\eps)\alpha N$ (see e.g., \cite{Sudan2000note} and \cite{Viderman13b}). Perhaps somewhat surprisingly, this simple bound is the best known distance bound for an arbitrary expander code. In fact, Viderman \cite{Viderman13b} asserted that this is the best distance bound one can achieve based only on the expansion property of the graph, and hence when $\eps$ converges to $0$, the number of errors corrected in \cite{Viderman13b}, $\frac{1-3\eps}{1-2\eps}\cdot \alpha N$ converges to the half distance bound. Yet, no evidence was known to support this claim. Thus it is natural to ask whether any improvement is possible, and if so, can one design efficient algorithms to correct more errors?

\iffalse
In this work, before we study decoding algorithms with larger radii, the first  question would be what is the largest possible decoding radius given the expander parameter $(\alpha N, (1-\eps)D)$? 

Since the unique decoding radius is always half of the distance, this is equivalent to asking what is the largest possible distance of expander codes. Perhaps surprisingly, this best known distance for expander codes is the folklore bound $2(1-\eps) \cdot \alpha N$ (e.g., \cite{Sudan2000note,Viderman13b}).
%At the same time, a folklore distance shown in \cite{SS96} remains as the state of the art for almost three decades. 
A natural questions would be what is the largest possible distance of expander codes defined by arbitrary $(\alpha N, (1-\eps)D)$-expanders? Furthermore, if the folklore bound $2(1-\eps) \cdot \alpha N$ is not tight, can one decode more errors?
\fi

\subsection{Our Results}
\paragraph{Distance of expander codes.} In this work, we give affirmative answers to the above questions. Our first result shows that the best distance bound of expander codes defined by arbitrary $(\alpha N, (1-\eps)D)$-expanders is roughly $\frac{\alpha N}{2 \eps}$.

\begin{theorem}\label{thm:infor_distance}[Informal versions of Theorem~\ref{thm:dist_expander} and Theorem~\ref{thm:tight_distance}]
Given any $(\alpha N, (1-\eps)D)$-expander, let $\mathcal{C}$ be the expander code defined by it. The distance of $\mathcal{C}$ is at least $\frac{\alpha}{2\eps} \cdot N - O_{\eps}(1)$.

Moreover, for any constant $\eta>0$ there exists an $(\alpha N, (1-\eps)D)$-expander whose expander code has distance at most $(\frac{\alpha}{2\eps}+\eta) \cdot N$.
\end{theorem}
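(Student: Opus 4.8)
The first (lower-bound) half asks us to show distance $\geq \frac{\alpha}{2\eps}N - O_\eps(1)$ for any $(\alpha N, (1-\eps)D)$-expander code. Let me think about the standard approach and how to push it.

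The classical argument: suppose $x$ is a nonzero codeword with support $T$, $|T| = t$. If $t \le \alpha N$, then $T$ expands to $\ge (1-\eps)tD$ right-vertices. By counting edges, at most $\eps t D$ right-vertices can be "unique neighbors" (adjacent to exactly one vertex of $T$) — actually more carefully, if $u$ is the number of unique neighbors, then the $tD$ edges from $T$ hit $\ge (1-\eps)tD$ vertices, so at least $(1-2\eps)tD$ of them are unique neighbors (each non-unique neighbor among the $\ge (1-\eps)tD$ "eats" at least 2 edges, so $u + 2(\text{rest}) \le tD$ and $u + \text{rest} \ge (1-\eps)tD$ gives $u \ge (1-2\eps)tD$). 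A unique neighbor is a parity check seeing exactly one bit of $T$, hence unsatisfied — contradiction when $\eps < 1/2$. This gives distance $> \alpha N$ but only uses the threshold $\alpha N$, not $\frac{1}{2\eps}$.

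The improvement must come from a **size–expansion tradeoff**: the abstract promises exactly such a lemma. The plan: show that if $S$ has size $t$ somewhat larger than $\alpha N$, one still gets expansion $(1-\eps')|S|D$ for a controlled $\eps' > \eps$ growing slowly with $t$. Concretely, take an arbitrary subset and peel off: greedily remove vertices one at a time, always removing a vertex whose neighborhood into the current "newly-covered" region is small; by the expansion of size-$\le \alpha N$ sets one controls how the neighbor-count degrades as $|S|$ grows past $\alpha N$. One should get something like: any $S$ with $|S| = t$ has $\ge tD - (\text{something like } \eps \alpha N D + \eps (t-\alpha N) D \cdot (\text{factor}))$ neighbors, and crucially the number of unique neighbors stays positive as long as $t \lesssim \frac{\alpha}{2\eps}N$. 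I would set up the bound so that the count of collision edges is at most $\eps D \cdot \min(t,\alpha N) + (\text{slack})$, forcing a unique neighbor whenever $t < \frac{\alpha}{2\eps}N - O_\eps(1)$; then a codeword support of that size cannot exist, giving the distance bound. The **main obstacle** is getting the size-expansion tradeoff with the right constant $\frac{1}{2\eps}$ rather than something lossy — the peeling/averaging has to be done so that each unit of "excess size" beyond $\alpha N$ costs only $\eps D$ in expansion (not $2\eps D$ or worse), and tracking that the excess region's unique-neighbor deficit accumulates at exactly rate $2\eps D$ per vertex.

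For the second (tightness) half: construct, for every $\eta > 0$, an $(\alpha N,(1-\eps)D)$-expander whose code has a codeword of weight $\le (\frac{\alpha}{2\eps}+\eta)N$. The natural plan is to take a "bad" gadget — a small bipartite graph $H$ on a vertex set $T$ of size $\approx \frac{\alpha}{2\eps}N$ in which every right vertex touching $T$ has even degree into $T$ (so the all-ones vector on $T$ is a codeword), while $H$ restricted to sets of size $\le \alpha N$ still expands with parameter $1-\eps$ — e.g. a union of length-$\ell$ "cycles" through checks, or a suitable incidence structure where small sets expand but the whole thing collapses — then embed $H$ into a larger genuine expander on $N$ left vertices (using a known expander on the complement and arguing the union still satisfies the $(\alpha N, (1-\eps)D)$ guarantee, since any small set either lies mostly in the good part or is small enough inside $H$ to still expand). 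I would verify the expansion of $H$ on size-$\le \alpha N$ subsets directly from the gadget's combinatorial structure, pick $\ell = \ell(\eta)$ large enough to make the codeword weight $\le (\frac{\alpha}{2\eps}+\eta)N$, and handle the merge by a routine case analysis on how a small subset splits between the gadget and the ambient expander. The delicate point here is ensuring the gadget simultaneously (i) supports an even-weight parity pattern on a set of size close to $\frac{\alpha}{2\eps}N$ and (ii) does not accidentally contain a smaller set of size $\le \alpha N$ with expansion below $1-\eps$; balancing these is where the construction must be tuned, and I expect it to mirror the extremal configuration that makes the size-expansion tradeoff of the first part tight.
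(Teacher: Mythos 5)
Your high-level plan matches the paper in both halves, but both of your proposed implementations have a gap that the paper's actual arguments resolve, and the gap is not cosmetic.

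\textbf{Lower bound.} You correctly identify that the missing ingredient is a size--expansion tradeoff past $\alpha N$, but the concrete bound you guess is wrong. You propose showing the number of collision edges (i.e.\ $D|S| - |\Gamma(S)|$) is at most $\eps D\cdot \min(|S|,\alpha N)+(\text{slack})$, i.e.\ essentially saturating at $\eps D \alpha N$. If that were true the argument would give unbounded distance, since one would get a unique neighbor for every $|S|$ with $D|S|/2 > \eps D\alpha N$, which holds for all $|S|>2\eps\alpha N$. The correct behavior, and what the paper proves (Lemma~\ref{lem:expansion_larger}), is \emph{quadratic}: for $|S|=k\alpha N$, the collision count is at most $k^2\eps D\alpha N$, equivalently $|\Gamma(S)| \ge (1-k\eps)D|S| - O_\eps(1)$. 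The quadratic rate is exactly what makes the threshold land at $k=1/(2\eps)$. Your peeling/greedy approach does not naturally produce a quadratic bound, and you flag this yourself as the main obstacle. The paper's argument is probabilistic rather than greedy: take a uniformly random $T\subseteq S$ of size $\alpha N$, and for each right vertex $u$ with $d_S(u)$ neighbors in $S$, use the pointwise inequality
\[
1\{u\in\Gamma(T)\} \le \sum_{i} 1\{v_i\in T\} - \sum_{i\ge 2} 1\{v_1\in T\}1\{v_i\in T\}.
\]
Taking expectations and summing over $u$ gives $\E|\Gamma(T)| \le D|T| - (D|S|-|\Gamma(S)|)\frac{|T|(|T|-1)}{|S|(|S|-1)}$, and comparing with the expansion guarantee $|\Gamma(T)|\ge(1-\eps)D|T|$ yields the $k^2$ collision bound. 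A pair of colliding edges both survives into $T$ with probability $\approx 1/k^2$, which is where the quadratic comes from; a greedy peeling that only tracks per-vertex marginals loses this.

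\textbf{Upper bound.} The skeleton (bad gadget of size $\approx \frac{\alpha}{2\eps}N$ on which all checks have even degree, still expanding on sets of size $\le\alpha N$, glued to a genuine expander on the rest) is exactly the paper's. But your proposed gadgets do not work as stated. A ``union of length-$\ell$ cycles'' gives left degree $2$, not $D$, so it cannot be a subgraph of a left-$D$-regular expander; and a generic ``incidence structure where small sets expand but the whole thing collapses'' begs precisely the question of how to certify expansion for \emph{every} small subset of the gadget. The paper's realization is to take a $D$-regular Ramanujan graph $H$ on $\frac{\alpha}{2\eps}N$ vertices and let the gadget $G_0$ be the vertex--edge incidence bipartite graph of $H$ (left $=V(H)$, right $=E(H)$). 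Then every right vertex has degree exactly $2$, so the all-ones vector on $V_L(G_0)$ is a codeword of weight $\frac{\alpha}{2\eps}N$; and for any $S\subseteq V(H)$ with $|S|\le\alpha N$ one has $|\Gamma(S)|=D|S|-e(S,S)$, and the expander mixing lemma (Lemma~\ref{lem:expander_mixing}) bounds $e(S,S)\le(\eps+\eta)D|S|$ once the spectral gap of $H$ is good enough, yielding $(\alpha N,(1-\eps-\eta)D)$ expansion. Without the spectral-expander input and the mixing lemma, the verification of (ii) in your plan is not just ``delicate'' --- it is unsupplied.

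Both halves are therefore missing the decisive technical tool (random subsampling for the first, the Ramanujan incidence graph plus expander mixing for the second), not merely routine bookkeeping.
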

%\xnote{I made the theorem a bit more formal, since I think this gives the reader a clearer picture.}

We remark that the bound $\frac{\alpha}{2\eps} \cdot N$ is always larger than the previous bound $2(1-\eps)\alpha N$ since we always have $\eps< 1/2$ in expander codes. For small $\eps$, this improves upon the previous bound by a factor of $1/4\eps$ roughly, which can be quite significant. 

%\xnote{I think Lemma~\ref{lem:expansion_larger} also works for non bipartite expanders? Might be worth mentioning. }
\paragraph{Decoding algorithms.} Next we consider algorithms to correct more errors. Given the above bound on the distance of expander codes, the natural goal is to design efficient algorithms that can correct $\Theta(\alpha/\eps) \cdot N$ errors. We achieve this goal for all $\eps<1/4$. %and show how to efficiently decode up to $\frac{3 \alpha }{16 \eps} \cdot N$ errors, which improves the previous results \cite{SS96,FMSSW07,Viderman13} by a constant factor.

\begin{theorem}\label{thm:inf_decode_dist}[Informal version of Theorem~\ref{thm:MultipleGuessWithFlipsFinal}]
Given any constants $\alpha, \eta>0$ and $0< \eps<1/4$, there exist a linear time algorithm that for any expander code defined by an $(\alpha N, (1-\eps)D)$-expander, correct up to $(\frac{3 \alpha}{16 \eps}-\eta) \cdot N$ adversarial errors.
\end{theorem}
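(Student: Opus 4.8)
The plan is to build on two consequences of the size--expansion tradeoff underlying Theorem~\ref{thm:infor_distance}. Writing $u(S)$ for the number of unique (degree--one) neighbours of a left set $S$: (i) for every $S$ with $|S|\lesssim \frac{\alpha N}{2\eps}$ one has $|N(S)|>|S|D/2$, so $S$ is correctable as an erasure pattern by linear--time peeling (repeatedly fix the bit that is the sole erased neighbour of some check); and (ii) for every $S$ with $|S|<\frac{\alpha N}{4\eps}$ one has $u(S)\gtrsim |S|D\bigl(1-\tfrac{2\eps|S|}{\alpha N}\bigr)>|S|D/2$, so $S$ contains a left vertex with a strict majority of unsatisfied checks. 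As a benchmark, plain Sipser--Spielman bit--flipping (flip any left vertex whose unsatisfied--neighbour count exceeds $D/2$) already corrects about $\frac{\alpha N}{8\eps}$ errors: the number $U_t$ of unsatisfied checks is a strictly decreasing potential with $U_0\le e_0D$, while $U_t\ge u(E_t)$ for the current error set $E_t$; by (ii) this forbids $|E_t|$ from ever reaching $\frac{\alpha N}{4\eps}$ once $e_0<\frac{\alpha N}{8\eps}$, and below that size a flippable vertex always exists, so the process halts at the transmitted codeword.

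To push the radius to $\frac{3\alpha N}{16\eps}$ I would interleave this flipping with a constant number of \emph{guesses}. Run bit--flipping; if it stalls at a nonzero word, then every surviving error vertex has at least $D/2$ satisfied neighbours, and a satisfied neighbour of an error vertex has $\ge 2$ error neighbours, so a short count gives $|N(E')|\le\frac34|E'|D$ for the residual error set $E'$; by the tradeoff this forces $|E'|\gtrsim\frac{\alpha N}{4\eps}$ and makes $E'$ ``clustered'' (abnormally low expansion, few unique neighbours). Exploiting this clustered structure, the algorithm extracts a constant--size pool of candidate corrections --- a handful of coordinates together with the choice of whether to flip each --- branches over all $O_{\eps,\eta,D}(1)$ possibilities, flips accordingly, and resumes flipping. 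The key lemma to prove is that each correct guess strictly advances a bounded integer potential (for instance, reduces the number of clustered components of the residual error while not increasing $U_t$), so that only a constant number of guesses ever occur along the correct branch. As a cleaner alternative in some parameter ranges, one branch can instead run a Viderman--style ``find corrupted checks, then erasure--decode'' pass: grow a suspect left--set $L'$ by absorbing any vertex more than a $\beta$--fraction of whose neighbours are already suspect checks (seeded by the unsatisfied checks); the tradeoff shows $L'$ cannot cross the forbidden size window and hence stays below $\frac{\alpha N}{2\eps}$, and that $L'\supseteq E$ when $e_0$ is small enough, after which (i) peels $L'$ as erasures. Tuning $\beta$, the guessing radius, and absorbing the lower--order additive slack into $\eta$, the worst case over branches lands at $\left(\tfrac{3\alpha}{16\eps}-\eta\right)N$.

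Correctness follows the usual template: along every branch $U_t$ is nonincreasing and strictly decreases between the constantly many guesses, so each branch halts; in the correct branch the tradeoff certifies that $|E_t|$ never leaves the regime $|E_t|<\frac{\alpha N}{4\eps}$ where a flippable vertex exists, so that branch halts at an all--satisfied word at Hamming distance $<\frac{\alpha N}{4\eps}$ --- less than half the distance guaranteed by Theorem~\ref{thm:infor_distance} --- from the received word, which forces it to equal the transmitted codeword. For the running time, each branch is $O_D(N)$ (bucket queues make bit--flipping linear, and peeling erasures is linear), and there are $O_{\eps,\eta,D}(1)$ branches, so the total is linear; the hypothesis $\eps<1/4$ is exactly what makes a strict majority ($>D/2$) of unsatisfied checks attainable already at the unique--neighbour level, i.e.\ what makes the flipping steps meaningful.

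The main obstacle is twofold. The combinatorial core is the error--growth (``threshold--crossing'') control: one must show the current error set never wanders into the window $\bigl[\frac{\alpha N}{4\eps},\frac{\alpha N}{2\eps}\bigr)$ in which the tradeoff no longer yields a flippable vertex --- this is precisely what caps plain flipping at $\frac{\alpha N}{8\eps}$, and extracting the factor $3/2$ improvement on top of it is delicate. The genuinely new difficulty is bounding the number of guesses by an absolute constant: one has to prove that a stalled configuration admits a constant--size certificate, a handful of coordinates whose correct resolution provably advances a bounded potential, and that resolving it does not resurrect unsatisfied checks; the clustered structure of the residual error forced by the tradeoff is the lever, but turning it into a constant bound --- and simultaneously optimising $\beta$ and the guessing radius so that the final guarantee is exactly $\tfrac{3}{16}\cdot\tfrac{\alpha N}{\eps}$ rather than a smaller constant --- is where the real work lies.
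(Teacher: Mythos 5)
Your proposal does not match the paper's route, and the central mechanism you invoke would not work as stated.

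The paper's algorithm (Algorithm~\ref{alg:MultipleGuessWithFlips}) does \emph{not} guess a constant-size pool of coordinates. It guesses a \emph{real parameter} $\gamma_i$ --- the current expansion defect of the error set, $|\Gamma(F_i)| = (1-\gamma(F_i))D|F_i|$ --- from a constant grid $\{\eta, 2\eta, \dots\}$, a constant number $\ell$ of times. For each guessed sequence the algorithm branches on a clean dichotomy: if $\gamma_i \ge 2\eps/3 + \eta$ (low expansion of $F_i$), run the Viderman-style \textsc{Find}+erasure-decode and halt (Claim~\ref{clm:small_gamma}); if $\gamma_i < 2\eps/3 + \eta$ (high expansion of $F_i$), flip every vertex with at least $(1-3\gamma_i)D$ unsatisfied checks --- an \emph{adaptive} threshold, not the fixed $D/2$ you use --- and this reduces $|F_i|$ by a constant factor (Claim~\ref{clm:large_gamma}). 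Crucially, the base result proved this way is only $(1-\eps)\alpha N$ for $\eps < 1/4$ (Theorem~\ref{thm:MultipleGuessWithFlips}); the $\frac{3\alpha}{16\eps}N$ bound then falls out purely from the size--expansion tradeoff (Lemma~\ref{lem:expansion_larger}): re-parametrize the graph as a $(k\alpha N,(1-k\eps)D)$-expander with $k \approx 1/(4\eps)$ and apply the same algorithm, maximizing $(1-k\eps)k\alpha N$ at $k=1/(4\eps)$ to get $\tfrac{3}{4}\cdot\tfrac{1}{4\eps}\alpha N$. Your proposal tries to wring the $3/16$ constant directly out of a modified flipping analysis at the original scale, which is not where the constant comes from.

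The step in your proposal that cannot be repaired as stated is the ``constant-size certificate.'' You correctly derive that a stalled configuration has $|\Gamma(E')|\le\tfrac34 D|E'|$ and hence, by the tradeoff, $|E'|\gtrsim\frac{\alpha N}{4\eps}$ --- but that is a residual error set of size $\Theta(N)$. Flipping a ``handful of coordinates'' (you explicitly say $O_{\eps,\eta,D}(1)$ bits) cannot move a word of Hamming weight $\Theta(N)$ to the codeword, and no ``clustered structure'' argument can shrink the certificate below the error weight itself. The paper avoids this by never needing a coordinate certificate: the constant-size object it enumerates is a sequence of $\ell = O(1)$ real values $\gamma_1,\dots,\gamma_\ell$, and progress is measured as a constant-factor decrease in $|F_i|$ per flipping round, with the Find+erasure-decode branch catching the low-expansion case outright. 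Your warm-up analysis of plain Sipser--Spielman flipping and your observation that a stalled error set must be poorly expanding are both in the right spirit, but the missing idea is precisely the enumeration over the expansion parameter itself (rather than over coordinates) coupled with the adaptive threshold and the final re-parametrization.
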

%\xnote{What does this theorem have to do with Theorem~\ref{thm:guessexpansion}? Is it almost linear time or linear time? Again I made the theorem a bit more formal.}
The bound $\frac{3 \alpha}{16 \eps} \cdot N$ is larger than all previous bounds for $\eps<1/4$ by at least a constant factor. For example, when $\eps$ is close to $1/4$, all previous works \cite{SS96,FMSSW07,Viderman13b} can only correct roughly $\frac{\alpha}{2} \cdot N$ errors, while our algorithm can correct roughly $\frac{3}{4} \cdot \alpha N$ errors. When $\eps$ is smaller, the improvment is even more significant, as no previous work can correct more than $\alpha N$ errors. %by guessing the expansion of the corrupted entries and combining the approaches of Sipser and Spielman \cite{SS96} and Viderman \cite{Viderman13b}. 
On the other hand, given Theorem~\ref{thm:infor_distance}, one can hope for correcting roughly $\frac{\alpha}{4\eps} \cdot N$ errors, so  Theorem~\ref{thm:inf_decode_dist} falls slightly short of achieving it. 

Actually, we can correct more errors when $\eps$ is small. For example, when $\eps<0.08$, our algorithm in~Section~\ref{sec:improve_decoding} can correct more than $\frac{0.2\alpha}{\eps} \cdot N$ errors. We summarize all our results informally in Table~\ref{table:distance}, compared to the previous best results of \cite{FMSSW07,Viderman13b}.
%\xnote{Shall we add a theorem for Theorem~\ref{thm:guessexpansion}? }
\begin{table}[h!]
\centering
\begin{tabular}[b]{|c | c | c |c|} 
 \hline 
  & $\eps \in (0,\frac{3-2\sqrt{2}}{2})$ & $\eps \in [\frac{3-2\sqrt{2}}{2},1/8)$ & $\eps \in [1/8,1/4)$ \\ 
 [1ex] \hline 
 Distance from Theorem~\ref{thm:infor_distance} of this work & $\frac{1}{2 \eps} \cdot \alpha N$ & $\frac{1}{2 \eps} \cdot \alpha N$ & $\frac{1}{2 \eps} \cdot \alpha N$ \\ 
 [1ex] \hline 
 Decoding radius from \cite{FMSSW07,Viderman13b}  & $\frac{1-3\eps}{1- 2 \eps} \cdot \alpha N$ & $\frac{1-3\eps}{1- 2 \eps} \cdot \alpha N$ & $\frac{1-3\eps}{1- 2 \eps} \cdot \alpha N$ \\
 [1ex] \hline
 \multirow{2}{16em}{Decoding radius from this work} & $\frac{\sqrt{2}-1}{2\eps} \cdot \alpha N$ & $\frac{1-2\eps}{4\eps} \cdot \alpha N$ & $\frac{3}{16\eps} \cdot \alpha N$ \\
 & from Theorem~\ref{thm:guessexpansion} & from Theorem~\ref{thm:guessexpansion} & from Theorem~\ref{thm:MultipleGuessWithFlipsFinal}\\
 \hline
\end{tabular}
\caption{Summary of the distance and decoding radii for $\eps$.}
\label{table:distance}
\end{table}

%\xnote{Shall we add results from \cite{SS96}? Or maybe we should mention the results in \cite{FMSSW07,Viderman13b} are the best previous results.}

\paragraph{List decoding.} Finally, we consider the list-decodability of expander codes. List decoding, introduced by Elias \cite{Elias57listdecoding} and Wozencraft \cite{Wozencraft58} separately, is a relaxation of the classical notion of unique decoding. In this setting, the decoder is allowed to output a small list of candidate codewords that include all codewords within Hamming distance $\rho N$ of the received word. Thus, the list decoding radius $\rho N$ could be significantly larger than half of the distance. For example, a very recent work by Mosheiff et al.~\cite{JNNSM20LDPC} shows random LDPC codes have list decoding radii close to their distance. In this setting, the classical Johnson bound shows that \emph{any} binary code with distance $d$ is list-decodable up to radius $r=\frac{N-\sqrt{N(N-2d)}}{2}$ with list size $N^{O(1)}$. If we set the Johnson bound $r$ as the baseline, a natural question is whether expander codes can list-decode more than $r$ errors given the distance $d=\frac{\alpha}{2\eps}\cdot N$?

In Section~\ref{sec:list_decode}, we consider expander codes defined by expanders that has a maximum degree $D_{\max}=O(1)$ on the right, like LDPC codes. Our main results provide an alternative bound on the list-decoding radius of such codes, and show that it is strictly better than the Johnson bound when $\alpha/\eps$ is small and the right hand side is also almost regular, i.e., $D_{\max} \approx D_R$, where $D_R$ is the \emph{average} right degree. %\textcolor{blue}{Shall we add $D_R \approx D_{\max}$ in the last sentence?}

%\xnote{There is a slight logical problem here. Our distance is only shown to be tight with an example where the R.H.S. is not regular. It remains unclear when the R.H.S. is also regular, if we can get a better distance bound. I think we can still get something by generalizing the construction in Section 3.1, but the bounds may not be tight. Maybe we should include that. In general the case of regular R.H.S. could be left as an open problem.}
\begin{theorem}\label{thm:infor_list_decode}[Informal version of Theorem~\ref{thm:list_decoding_radius}]
Given any $(\alpha N,(1-\eps)D)$-expander with regular degree $D$ on the left and maximum degree $D_{\max}$ on the right, its expander code has a list decoding radius $\rho N = (\frac{1}{2} + \Omega(1/D_{\max}) )d$ and list size $N^{O(1)}$. Here $d$ is the distance of the code. 

Furthermore, if $D_{\max} \leq 1.1 D_R$, $\eps\le 1/4$ and $\alpha/\eps \le 0.1$, $\rho N$ is strictly larger than the Johnson bound $r$ of binary codes with distance $d=\frac{\alpha}{2\eps}\cdot N$.
\end{theorem}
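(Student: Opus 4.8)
Fix a received word $y$ and let $c_1,\dots,c_L$ be the codewords with $d(y,c_i)\le \rho N$; write $E_i=\supp(c_i\oplus y)$, so $|E_i|\le \rho N$. Two elementary facts drive everything. First, since every $c_i$ satisfies all parity checks, a check $u$ has an odd number of neighbors in $E_i$ if and only if $u$ is violated by $y$; hence the set $B$ of checks violated by $y$ is the same "syndrome witness" for all $i$, and in particular $B\subseteq N(E_i)$ for every $i$. Second, for $i\ne j$ the set $E_i\triangle E_j=\supp(c_i\oplus c_j)$ is a nonzero codeword support, so $|E_i\triangle E_j|\ge d$ and it has \emph{no unique neighbor}. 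Writing $\rho N=(\tfrac12+\delta)d$ with $\delta=\Theta(1/D_{\max})$ (and $\delta<\tfrac12$), the second fact forces $|E_i|\ge |E_i\triangle E_j|-|E_j|\ge d-\rho N=(\tfrac12-\delta)d$; thus every error set has weight $\approx d/2$, and $|E_i\cap E_j|\le |E_i|+|E_j|-|E_i\triangle E_j|\le 2\rho N-d=\Theta(d/D_{\max})$. So the $E_i$ are near-disjoint half-distance sets, and it suffices to bound the number of such sets by $N^{O(1)}$.

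\textbf{Expansion structure.} Apply the size-expansion tradeoff (the tool behind Theorem~\ref{thm:infor_distance}) to each $E_i$. Because $\eps\le 1/4$, the weight $|E_i|\approx d/2$ lies strictly above $\alpha N$, so this genuinely invokes the new tradeoff rather than the basic $(\alpha N,(1-\eps)D)$-expansion; it gives that $E_i$ has $\Omega(dD)$ unique neighbors, the implied constant being governed by the "effective $\eps$" at weight $\approx d/2$, which is bounded away from $1/2$ exactly because $d/2<d$. The crucial observation is that every unique neighbor of $E_i$ is also a neighbor of every $E_j$: if a check $u$ has its unique $E_i$-neighbor $v$ with $v\notin E_j$, then $v\in E_i\triangle E_j$, and since $E_i\triangle E_j$ has no unique neighbor $u$ must have a second neighbor in $E_i\triangle E_j$, which then lies in $E_j$; and if $v\in E_j$ we are done immediately. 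Consequently $\bigcup_k\mathsf{UniqueNbr}(E_k)\subseteq N(E_i)$ for every $i$, so this union has size at most $\rho N\cdot D$, whereas $\sum_k|\mathsf{UniqueNbr}(E_k)|=\Omega(LdD)$. Hence the unique-neighbor sets overlap heavily, and some check is a unique neighbor of $\Omega(L)$ distinct error sets.

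\textbf{From overlap to a list-size bound (the main obstacle).} A check $u$ that is a unique neighbor of $m$ error sets "exposes" exactly one left vertex of each inside $N(u)$, and $|N(u)|\le D_{\max}$; pigeonholing over these $\le D_{\max}$ vertices, some left vertex belongs to $\Omega(m/D_{\max})$ of the $E_k$. Any left vertex lying in $t$ error sets contributes $\binom t2$ to $\sum_{i<j}|E_i\cap E_j|$, which by near-disjointness is at most $\binom L2\cdot O(d/D_{\max})$. The plan is to turn this tension into $L\le N^{O(1)}$. I expect this to be the hard part: a single application of the pigeonhole only bounds $L$ in terms of the largest co-membership of a vertex, so one must aggregate over all heavily covered checks at once — carefully tracking how the size-expansion tradeoff's value at weight $\approx d/2$ enters the constants — both to extract a genuine polynomial bound and to certify that the radius may be taken as large as $(\tfrac12+\Omega(1/D_{\max}))d$ rather than something smaller.

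\textbf{Comparison with the Johnson bound.} For a binary code of distance $d=\tfrac{\alpha}{2\eps}N$, the Johnson radius is $r=\tfrac{N-\sqrt{N(N-2d)}}{2}=\bigl(\tfrac12+\tfrac{\alpha}{8\eps}+O((\alpha/\eps)^2)\bigr)\,d$. Our radius $(\tfrac12+\Omega(1/D_{\max}))d$ therefore strictly exceeds $r$ precisely when $1/D_{\max}$ dominates $\alpha/\eps$; under the hypotheses $\eps\le 1/4$, $\alpha/\eps\le 0.1$, and $D_{\max}\le 1.1 D_R$ — the last tying $D_{\max}$ to the average right degree $D_R=ND/M$, hence to the code's redundancy and left degree — this is a direct numerical verification. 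The almost-regularity hypothesis is exactly what lets one certify the strict inequality with explicit constants instead of only asymptotically.
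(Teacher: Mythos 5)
Your reduction and the structural observations are sound, but the proposal has a genuine gap that you yourself flag, and it is not a small one: it is the heart of the proof.

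\paragraph{What works.} The setup $E_i=\supp(c_i\oplus y)$, the fact that $|E_i\triangle E_j|\ge d$ and that $E_i\triangle E_j$ (being a codeword support) has no unique neighbor, the weight sandwich $|E_i|\approx d/2$, the intersection bound $|E_i\cap E_j|\le 2\rho N-d$, and the clever lemma that every unique neighbor of $E_i$ is a neighbor of every $E_j$ are all correct. The lower bound $|\Gamma^1(E_i)|=\Omega(dD)$ via the size-expansion tradeoff at $k\approx 1/4\eps$ is also exactly right, and it is the same lower bound the paper uses to control $|\Gamma_{odd}|$ (since $\Gamma^1(E_i)\subseteq\Gamma_{odd}$).

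\paragraph{The gap.} Going from ``some check is a unique neighbor of $\Omega(L)$ error sets'' to a polynomial list size does not close, and the way you formulate the tension it cannot close: a single vertex $v^*$ with $\tau_{v^*}=\Omega(L/D_{\max})$ contributes $\binom{\tau_{v^*}}{2}=\Omega(L^2/D_{\max}^2)$ to $\sum_{i<j}|E_i\cap E_j|$, while your upper bound on that sum is $\binom{L}{2}\cdot O(d/D_{\max})$; the comparison gives $1\le O(D_{\max}d)$, which is vacuous. The obstruction is structural: you need a \emph{collective} deviation statement about the vector $(\tau_v)_v$, not a single heavy coordinate. The paper gets this by working directly with $\Gamma_{odd}$ (the common violated-check set) rather than with unique neighbors: for every $u\in\Gamma_{odd}$, the parity constraint forces $\sum_{v\in\Gamma(u)}\tau_v\ge L$, and since $|\Gamma(u)|\le D_{\max}$, a $1-D_{\max}\theta$ fraction of this sum must come from ``heavy'' coordinates with $\tau_v\ge\theta L$. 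Summing over $u\in\Gamma_{odd}$ (whose size is already lower-bounded by your $\Omega(dD)$) and dividing by the left degree $D$ gives $\sum_{\mathrm{heavy}\,v}\tau_v=\Theta(NL)$. That is the statement with the right aggregate strength. Feeding it into the Johnson double count $\sum_v\tau_v(L-\tau_v)\ge\binom{L}{2}d$ — which you never actually set up — is what yields $L\le N^{O(1)}$ at radius $(\tfrac12+\Omega(1/D_{\max}))d$. Your unique-neighbor lemma is a nice observation but is both weaker (it discards the non-unique odd-parity checks) and harder to exploit, because it produces pointwise rather than aggregate information.

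\paragraph{A second gap in the Johnson comparison.} You state that the new radius beats the Johnson radius ``precisely when $1/D_{\max}$ dominates $\alpha/\eps$'' and that this is ``a direct numerical verification'' under $D_{\max}\le 1.1D_R$. But the needed inequality $1/D_R\ge c\cdot\alpha/\eps$ is not numerology: it is a nontrivial fact about $(\alpha N,(1-\eps)D)$-expanders, proved in the paper using the second (LP-based) branch of the size-expansion tradeoff applied to sets of size $k\alpha N$ with $k\approx 1/\eps$. Without that input, the almost-regularity hypothesis $D_{\max}\le 1.1D_R$ does nothing for you. The Taylor expansion of the Johnson radius $r=(\tfrac12+\tfrac{\alpha}{8\eps}+O((\alpha/\eps)^2))d$ is correct, but it is the half of the comparison you already had for free.
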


%\xnote{I think $d$ usually stands for the distance, while $\delta$ usually stands for the relative distance. So I changed the theorem statement and other things according to this notation. We may want to change $d$ to $\delta$ in Section~\ref{sec:list_decode}.}
We remark that, the Johnson bound $r=d/2 + \Theta(d^2/N)$ when $d$ is small. While we did not attempt to optimize the constant hidden in the $\Omega$ notation of $\rho = (\frac{1}{2} + \Omega(1/D_{\max}) )d$, we show that roughly $\frac{1}{D_R} \ge \frac{\alpha}{4\eps}$ in Section~\ref{sec:dist_expander}. When the expander is also almost regular on the right, e.g., $D_{\max} \leq 1.1 D_R$, this bound is better than the Johnson bound with $d=\frac{\alpha}{2\eps}\cdot N$ and a small ratio $\alpha/\eps$. The second condition would follow from a large average right-degree $D_R$ (equivalently, a small $M/N$ or a large code-rate $1-M/N$). In particular, this applies to the graph we construct in Section~\ref{sec:bound_dist}, which has distance arbitrarily close to $\frac{\alpha}{2\eps}\cdot N$. 

One intriguing question is to design efficient list-decoding algorithms for expander codes. Since these algorithms would also immediately improve all our results of unique decoding, we leave this as a future direction. 

\paragraph{New combinatorial properties of expander graphs.} Our distance bounds and decoding algorithms make extensive use of a new size-expansion tradeoff for bipartite expander graphs, which we establish in this paper. Specifically, we show that one can always trade the expansion for larger subsets in such a graph. In particular, given any $(\alpha N, (1-\eps)D)$-expander, we prove in Section~\ref{sec:dist_expander} that this graph is also roughly a $(k \alpha N, (1-k\eps)D)$-expander for any $k \ge 1$, provided that $k \alpha N \le N$. This size-expansion tradeoff is potentially of independent interest. For example, besides the applications in our distance bounds and decoding algorithms, we also use it to show a relation between the three basic parameters $(\alpha, \eps, D_R)$ of bipartite expanders. Roughly, we always have $\frac{\alpha}{\eps} \le \frac{4}{D_R}$ (see Fact~\ref{fact:relations_parameters} for a formal statement). On the other hand, using a random graph one can show the existence of $(\alpha N, (1-\eps)D)$-expanders such that roughly $\frac{\alpha}{\eps} \ge \frac{1}{e D_R}$ (see Proposition~\ref{prop:parameterrelation}). Thus our upper bound is tight up to a constant factor. 

\subsection{Related Work}

%Expander codes was introduced by Sipser and Spielman \cite{SS96} as the first asymptotic good codes with a linear time decoding algorithm. In particular, they showed that the simple sequential decoding algorithm, originally suggested by Gallager~\cite{Gallager1963}, could decode up to $(1-2\eps)\alpha N$ errors for $\eps<1/4$. This was later improved to $\frac{1-3\eps}{1-2\eps} \alpha N$ errors by Feldman et~al.~\cite{FMSSW07} using linear programming. Then Viderman \cite{Viderman13b} improved the decoding time of radius $\frac{1-3\eps}{1-2\eps} \alpha N$ (shown in \cite{FMSSW07})  to linear  by finding a super-set of corrupted symbols for $\eps<1/4$. This decoding radius $\frac{1-3\eps}{1-2\eps} \alpha N$ remains as the state of the art for $\eps<1/4$. In another regime of parameters where $\eps>1/4$,  Viderman showed how to decode $N^{\Omega_{D,\eps,\alpha}(1)}$ errors for $\eps \in [1/3,1/2)$ \cite{Viderman13b} and $\frac{1-3\eps}{1-2\eps} \alpha N$ errors for $\eps \in [1/4,1/3)$ \cite{Viderman13} separately.

%\xnote{I removed the first paragraph since I think most of the materials already appear before.}
Sipser and Spielman's definition in \cite{SS96} is actually more general, and is a variant of Tanner codes \cite{Tanner81} based on expanders. Basically, the code requires all symbols in the neighbor set of a right vertex (in some fixed order) to be a codeword from an inner linear code $\mathcal{C}_0$. The expander code studied here is the most popular and well studied case, where the inner code consists of all strings with even weight. Instead of vertex expansion, the expander based Tanner codes are analyzed based on edge expansion, a related concept which has also been well studied in both mathematics and computer science \cite{LPS88,AlonChung1988}. We note that the distance of Tanner codes depends heavily on the inner code $\mathcal{C}_0$, and is thus generally incomparable to the distance of our code. To the best of our knowledge, the best bound on the distance of expander codes based on vertex expansion of bipartite expanders, as studied in this paper, was $2(1-\eps) \cdot \alpha N$.

As mentioned before, expander codes are closely related to low-density parity-check (LDPC) codes introduced by Gallager~\cite{Gallager1963}, where the bipartite graph associated with the parity checks has bounded degree on the right but is not necessary an expander. There is a long line of research on random LDPC codes against \emph{random errors} (see \cite{richardson2001the,Shok,ADS12} and the references therein). While a random LDPC code is an expander code with high probability, our results are incomparable with those of random LDPC codes. This is because first, we consider expander codes defined by arbitrary expanders, while many results on random LDPC codes use more properties than the expansion, such as the girth of the underlying graph that can be deduced from random graphs. Second, we consider adversarial errors, while many results on random LDPC codes \cite{richardson2001the,ADS12} consider random errors or memoryless channels. %For list decoding of LDPC codes, very recently, Mosheiff et al.~\cite{JNNSM20LDPC} showed that a random LDPC code achieves list-decoding capacity --- their list decoding radius is close to the counterpart of a random code with the same rate.

In the context of list decoding, the work of RonZewi-Wootters-Zemor \cite{ron2021linear} studied the problem of erasure list-decoding of expander codes, based on algebraic expansion properties (i.e., eigenvalues of the corresponding adjacency matrix).

In the past few decades, a great amount of research has been devoted to expander graphs, leading to a plethora of new results. We refer the reader to the survey by Hoory, Linial,  and Wigderson \cite{Hoory2006ExpanderGA} for an overview. Specifically, giving explicit constructions of bipartite expander graphs for expander codes has been a challenge. In particular, Kahale \cite{Kahale} showed that general Ramanujan graphs \cite{LPS88} (with the minimum 2nd largest absolute eigenvalue among all $D$-regular graphs) cannot provide vertex expansion more than half of the degree, which is the threshold required to give expander codes. After decades of efforts, explicit constructions satisfying the requirements of expander codes have been provided in \cite{AlonCapalbo,CRVW} separately.

\subsection{Technique Overview}
Let $\mathcal{C}$ be an expander code defined by an $(\alpha N, (1-\eps)D)$ expander. Our techniques for the improved distance bound and decoding algorithms are based on the combination of the following three ingredients, together with a new idea of guessing expansions:
\begin{itemize}
    \item A new size-expansion tradeoff for arbitrary bipartite expander graphs, which we establish in this paper.
    \item A procedure of finding possible corruptions in \cite{Viderman13b}, which we slightly adapt and establish new properties.
    \item A procedure of flipping bits in the corrupted word to reduce the number of errors, introduced in \cite{SS96}.
\end{itemize}
We first briefly explain each ingredient.
\paragraph{The size-expansion tradeoff.} As mentioned before, we show that any $(\alpha N, (1-\eps)D)$-expander is also roughly a $(k \alpha N, (1-k\eps)D)$-expander for any $k \ge 1$. To prove this, assume for the sake of contradiction that there is a left subset $S$ with size $k \alpha N$ that has smaller expansion. This then implies that there are many \emph{collisions} (two different vertices on the left connected to the same vertex on the right) in the neighbor set of $S$, i.e., more than $k \eps D \cdot k \alpha N=k^2 \alpha \eps ND$ collisions. Now we pick a \emph{random} subset $T \subseteq S$ with size $\alpha N$, then each previous collision will remain with probability roughly $1/k^2$. By linearity of expectation, more than $\alpha \eps ND$ collisions are expected to remain in the neighbor set of $T$, thus implying the expansion of $T$ is smaller than $(1-\eps)D \cdot \alpha N$. This contradicts the expander property.

This convenient size-expansion tradeoff is used extensively in our bounds and algorithms. In fact, by using linear programming, we can get a better size-expansion tradeoff for $k \geq \frac{1}{2\eps}$, which we use in our result of list decoding expander codes.

\paragraph{The procedure of finding possible corruptions.} Viderman \cite{Viderman13b} introduced the following procedure for finding possible corruptions. Maintain a set $L$ of left vertices, a set $R$ of right vertices and a fixed threshold $h$. Start with $R$ being all the unsatisfied parity checks, then iteratively add left vertices with at least $h$ neighbors in $R$ to $L$, and their neighbors to $R$. Viderman showed that if the number of corruptions is not too large, then when this process ends, $L$ will be a super set of all corruptions and the size of $L$ is at most $\alpha N$. Therefore, one can treat $L$ as a set of erasures and decode from there.

In \cite{Viderman13b}, Viderman used sophisticated inequalities to analyze this procedure. In this paper, we show that the process has the following property.

\paragraph{Property (*):} If $h =(1-2 \Delta)D$ such that any subset $S$ of corrupted vertices has expansion at least $(1-\Delta)D |S|$, then all corruptions will be contained in $L$. Furthermore, we can assume without loss of generality that the set of corrupted vertices is added to $L$ before any other vertex. \\

This allows us to simplify the analysis in \cite{Viderman13b} and combine with our size-expansion tradeoff.

\paragraph{The procedure of flipping bits.} Sipser and Spielman \cite{SS96} introduced a procedure to flip bits in the corrupted word. Again, the idea is to set a threshold $h$, and flip every bit which has at least $h$ wrong parity checks in its neighbors. Sipser and Spielman showed that when $\eps < 1/4$ and the number of corruptions is not too large, this procedure will reduce the number of errors by a constant factor each time. Thus one only needs to run it for $O(\log N)$ times to correct all errors.

\paragraph{Our approaches.} We now describe how to combine these ingredients to get our bounds and algorithms. For the distance lower bound, it suffices to choose $k$ such that $1-k\eps > 1/2$. Then a standard analysis as in \cite{SS96} shows the distance of the code is at least $k \alpha N$. Thus, we can set $k \approx \frac{1}{2\eps}$ so that the distance is roughly at least $\frac{\alpha}{2 \eps} N$. A subtle point here is that it is not a prior clear that we can choose $k \approx \frac{1}{2\eps}$, since it may be that $k \alpha N =\frac{\alpha}{2 \eps} N > N$, and no left subset can have size larger than $N$. However, we again use the size-expansion tradeoff to show that this cannot happen. In particular, we show $\frac{\alpha}{\eps} \leq \frac{4}{D_R}$ (recall $D_R$ is the average degree on the right), and thus we can always set $k \approx \frac{1}{2\eps}$. Section~\ref{sec:bound_dist} gives a construction which shows this bound is almost tight.

%For the upper bound, we \xnote{add more details.}

Next we describe our decoding algorithms.

\paragraph{Unique decoding  for $\eps < 1/4$.} Our algorithm here is based on the following crucial observation. Let $F$ denote the set of corrupted vertices any time during the execution of the algorithm, and assume $|\Gamma(F)|=(1-\gamma)D|F|$, where $\Gamma(F)$ denotes the neighbor set of $F$. If $\gamma$ is large, or equivalently $|\Gamma(F)|$ is small, then the procedure of finding possible corruptions works well. This is because intuitively, the number of vertices added to $L$ will be proportional to $|\Gamma(F)|$, and thus $|L|$ will be small. On the other hand, if $\gamma$ is small, or equivalently $|\Gamma(F)|$ is large, then the procedure of flipping bits works well. This is because intuitively, the procedure of flipping bits works better when the expansion property is better.  

Hence, we can combine both procedures and set a threshold for $\gamma$. If $\gamma$ is larger than this threshold, we use the procedure of finding possible corruptions; otherwise we use the procedure of flipping bits.\ However, we don't know $\gamma$.\ Thus in our algorithm we guess $\gamma$, and for each possible value of $\gamma$ we apply the corresponding strategy. This is a bit like list-decoding, where we get a small list of possible codewords, from which we can find the correct codeword by checking the Hamming distance to the corrupted word. Note that the procedure of finding possible corruptions always returns a possible codeword; while to get a codeword from the procedure of flipping bits, we need to apply it for a constant number of times, until the number of errors is small enough so that we can easily correct all errors using any known algorithm. Thus we also need to guess $\gamma$ for a constant number of times.

Using these ideas, we show that Algorithm~\ref{alg:MultipleGuessWithFlips} can correct $(1-\eps)\alpha N$ errors for any constant $\eps < 1/4$. Now, we can improve this by combining with our size-expansion tradeoff. Specifically, for any constant $\eps < 1/4$ we can choose any $k \ge 1$ such that $k \eps < 1/4$. This implies that a modified algorithm can actually correct $(1-k \eps)k \alpha N$ errors. Setting $k \approx \frac{1}{4 \eps}$ gives us an algorithm that can correct roughly $\frac{3 \alpha}{16 \eps} N$ errors. 

For the running time, each time we guess $\gamma$, we know $\gamma=1-\frac{|\Gamma(F)|}{D|F|}$ with $|\Gamma(F)| \in [M]$ and $|F| \in [N]$. Thus a naive enumeration will result in $O(M N)=O(N^2)$ possible values. Since we need to guess $\gamma$ for a constant number of times, this will lead to a polynomial running time. However, instead we can enumerate $\gamma$ from $\{0, \eta, 2\eta, \ldots, \lceil \frac{1}{\eta} \rceil \eta\}$ for a small enough constant $\eta>0$. This reduces the running time to linear time, at the price of decreasing the relative decoding radius by an arbitrarily small constant. Finally, we remark that this algorithm%~\ref{alg:MultipleGuessWithFlips} for  Theorem~\ref{thm:inf_decode_dist} 
can be executed in logarithmic time on a linear number of parallel processors, since its main ingredients from \cite{SS96,Viderman13b} have parallel versions in logarithmic time.
%\xnote{I copied this sentence from Kuan's writing. However, we didn't actually ``show" the decoding radius is not affected by this. Moreover, the algorithm to get $\frac{3 \alpha}{16 \eps} N$ is actually different from Algorithm~\ref{alg:MultipleGuessWithFlips}. Perhaps we should mention that, e.g., apply Algorithm~\ref{alg:MultipleGuessWithFlips} with parameters $(k \eps, k \alpha N)$. }

\paragraph{Unique decoding  for smaller $\eps$.} When $\eps$ is even smaller, e.g., $\eps < 1/8$, our algorithm uses the procedure of finding possible corruptions, together with property (*) we established. Let $F$ denote the set of corrupted vertices in the received word. To use property (*), we need to find a $\Delta$ such that for any $S \subseteq F$, $S$ has expansion at least $(1-\Delta)D |S|$. Then we can set the threshold $h=(1-2 \Delta)D$. In \cite{Viderman13b}, one assumes $|F| \leq \alpha N$ and thus it is enough to set $\Delta = \eps$. However, our goal here is to correct more than $\alpha N$ errors, thus this choice of $\Delta$ no longer works. Instead, we use our size-expansion tradeoff to show that if $|\Gamma(F)|=(1-\gamma)D|F|$, then when $S \subseteq F$ and $|S| \ge \alpha N$, we always roughly have $|\Gamma(S)| \geq (1-\sqrt{\frac{\gamma |F| \eps}{\alpha N}})D |S|$, thus we can set $\Delta=\max\{\sqrt{\frac{\gamma |F| \eps}{\alpha N}}, \eps\}$.

However, again we don't know $\gamma$ and $|F|$. Thus we apply the same trick as before, and guess both quantities. This leads to Algorithm~\ref{alg:decoding_small_eps}. Since we have two possible cases ($\Delta=\sqrt{\frac{\gamma |F| \eps}{\alpha N}}$ or $\Delta =\eps$), we get two different decoding radii for different ranges of $\eps$. The running time is polynomial if we use the naive enumeration of $\gamma$ and $|F|$, but can be made linear by using a similar sparse enumeration as we discussed before.  

\paragraph{List decoding radius.}
Recall that our goal is to show that given any $y \in \mathbf{F}_2^N$, there is a list of at most $N^{O(1)}$ codewords within distance $\rho N = (\frac{1}{2} + \Omega(1/D_{\max}) )d $ to $y$.
Our analysis modifies the double counting argument that is used to show the Johnson bound.
The modification is by using the special structure of expander codes.

In more details, suppose the list of $L$ codewords within distance $\rho N$ to $y$, is $\{C_1, \ldots, C_L\}$.
Let $\tau_i$ be the number of codewords in the list which have their $i$-bit different from $y$.
We focus on counting the number $T$ of ``triples'' $(i, j_1, j_2)$, where the pair of codewords $(C_{j_1}, C_{j_2})$ are different in their $i$-th bit. 
Since the code has distance $d = \delta N$, we know $T \geq {L \choose 2}\delta N$.
We also know $T = \sum_{i\in [N]}\tau_i(L-\tau_i)$.
The key observation in our analysis is that for expander codes,  $\{\tau_i, i\in [N]\}$ have a large deviation.
Specifically, we call $\tau_i$ \emph{heavy} if $\tau_i \geq \frac{0.9}{D_{max}}L$, and show that the summation of heavy $\tau_i$'s is $\Theta(NL)$.
By using this observation, we manage to get a better upper bound for $T$ than that in the proof of the Johnson Bound in certain situations, which in turn yields a better list-decoding radius.

%\xnote{If the $i$-th entries are both different from $y$, doesn't this imply that the $i$-th entries of $(C_{j_1}, C_{j_2})$ are the same?}

%%%%%%%%%%%%%%%%%%%%%%%%%%%%%%%%%%%%%%

\paragraph{Organization.} The rest of this paper is organized as follows.
In Section \ref{sec:preli}, we describe some basic notation, terms, definitions and   useful theorems from previous work.
In Section \ref{sec:dist_expander}, we show our improved distance bound for expander codes, and prove it is tight in general.
In Section \ref{sec:find}, we establish new properties of the algorithm which can find a super set of corruptions.
In Section \ref{sec:guess_flip}, we provide our main unique decoding algorithm.
In Section \ref{sec:improve_decoding}, we provide our improved unique decoding algorithm for smaller $\eps$.
In Section \ref{sec:list_decode}, we show our list-decoding result.
Finally, we conclude in Section \ref{sec:open} with some open questions.
Appendix~\ref{appen:proof_expansion} contains some relatively standard materials omitted in the main body.

\section{Preliminary}\label{sec:preli}
We will use $1\{\mathcal{E}\} \in \{0,1\}$ to denote the indicator variable of a event $\mathcal{E}$. Moreover, we use $C$ and $c$ to denote different constants in various proofs of this paper.

\paragraph{Basic definitions from graph theory.} Given a graph $G$, we use $V(G)$ to denote its vertex set and $E(G)$ to denote its edge set. Given a bipartite graph $G$, we use $V_L(G)$ and $V_R(G
)$ to denote the left hand side and right hand side of the bipartite graph separately. When $G$ is clear, we simplify them as $V_L$ and $V_R$. Moreover, we fix two notations $N := |V_L|$ and $M := |V_R|$. 

For any subset $S \subseteq V_L \cup V_R$, we always use $\Gamma(S)$ to denote its neighbor set in $G$. If a vertex $v \in \Gamma(S)$ is connected to $S$ by exactly one edge, we call $v$ a unique neighbor of $S$ and use $\Gamma^1(S)$ to denote the set of all unique neighbors of $S$. 
%\xnote{Need to explain what are ``unique neighbors"}

In this work, we consider bipartite graphs that are regular on the left hand side. Thus we use $D$ to denote the \emph{regular} degree in $V_L$ and $D_R$ to denote the \emph{average} degree in $V_R$. %When we only focus on left degree we sometimes use $D$ instead of $D_L$ for short. 
Since $N=|V_L|$ and $M=|V_R|$, we have $N \cdot D = M \cdot D_R$. Moreover, we will use $D_{\max}$ to denote the \emph{maximum} degree in $G$, which would be the maximum degree in $V_R$ given $M<N$. %\xnote{Need to have a proper definition of $D_R$.}

A bipartite graph $G$ is an $(\alpha N, (1-\eps)D)$-expander if and only if for any left subset $S$ of size at most $ \alpha N $, its neighbor set $\Gamma(S)$ has size  $ \geq (1-\eps)D \cdot |S|$. For convenience, we call $\frac{|\Gamma(S)|}{|S|}$ the expansion of $S$ and say $G$ satisfies $(\alpha N, (1-\eps)D)$ expansion if and only if it is an $(\alpha N, (1-\eps)D)$-expander. Throughout this work, we assume that $D$ and $D_R$ are constants. Since we are interested in expanders with $\eps<1/2$ and $N>M$, we always assume $D > 2$ and $D_R > 3$.

%\textcolor{red}{Why we use $\Gamma$ instead of $G$ to denote the graph?}\knote{I'm fine with both. I used $\Gamma$ because I thought most literatures use $\Gamma$ to denote bipartite graphs and use $\Gamma()$ to denote neighbors. I can unify it to be the one you feel convenient with.}\knote{I think we can still use $G$ to denote the graph and use $\Gamma$ to denote the neighbors}

\paragraph{Basic definitions from coding theory.}
We recall several notations from coding theory and define expander codes formally.
%\xnote{Perhaps we should also add basic concepts from coding theory, e.g., linear codes, distance, min-weight=distance for linear codes etc.}
%\knote{I will add it}
\begin{definition}
An $(N, k, d)$ binary error correcting code $\mathcal{C}$ is a set of codewords contained in $\mathbf{F}_2^N$, with $|\mathcal{C}|  = 2^k$ such that $\forall C_1, C_2 \in \mathcal{C}$, the Hamming distance between $C_1$ and $C_2$ is at least $d$. Moreover we call $k/N$ the rate of $\mathcal{C}$.

A linear code is a code whose codewords form a linear subspace of  $\mathbf{F}_2^N$.
\end{definition}
One fact about linear codes is that the distance of a linear code is equal to the minimum weight of non-zero codeword in it. 
%\xcnote{We may wanna use $\mathbf{F}_2$ in both Def 1.1 and 1.2.}
The decoding radius of a decoding algorithm of $\mathcal{C}$ refers to the largest number of errors that the algorithm can correct. 

\begin{definition}[Expander Codes \cite{SS96}]
Given an $(\alpha N, (1-\eps)D)$ expander graph $G$ with $M$ right vertices, the expander code defined by $G$ is $\mathcal{C} \subseteq \mathbf{F}_2^N$ such that
\[
\mathcal{C} = \{ C\mid  \forall i\in [M], \sum_{j \in \Gamma(i)} C_j = 0\},
\]
where the addition is over the field $\mathbf{F}_2$.
\end{definition}

Given the definition of expander codes, we know its rate is $1-M/N$ and its distance is the minimum weight of non-zero codewords in $\mathcal{C}$.

\begin{remark}
The original definition of expander codes in \cite{SS96} is more general, where each vertex on the right represents some linear constraints on the codeword bits corresponding to its neighbors. In this paper, we only consider the most popular and well studied case where each vertex on the right represents a parity check.
\end{remark}

We use the following results of decoding for expander codes, from \cite{Viderman13b}.
\begin{theorem}[\cite{Viderman13b}]
\label{thm:vidermandecfromerasures}
Let $G$ be an $(\alpha N, (\frac{1}{2}+\xi)D)$ expander with $\xi>0$. For the expander code defined by $G$, there is a linear-time algorithm that can correct $ \alpha N$ erasures.
%The algorithm runs in linear time.

\end{theorem}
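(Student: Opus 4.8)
The plan is to use the classical peeling (unique-neighbor) erasure decoder and to show that the hypothesis $\xi>0$ is exactly what keeps it from stalling before all erasures are filled. Let $E\subseteq[N]$, $|E|\le \alpha N$, be the set of erased coordinates, and let $y$ be the received word, which agrees with some codeword $C\in\mathcal{C}$ on $[N]\setminus E$. The algorithm maintains the current erasure set (initially $E$) and repeatedly searches for a right vertex $i\in V_R$ adjacent to exactly one currently-erased coordinate $j_0$; since $C$ satisfies the parity check at $i$, we have $C_{j_0}=\sum_{j\in\Gamma(i)\setminus\{j_0\}}C_j$ and every term on the right-hand side is already known (each such $C_j$ is either uncorrupted or was recovered in an earlier step), so we set $C_{j_0}$ to this value and delete $j_0$ from the erasure set. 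Correctness of every recovered bit follows by a straightforward induction on the steps.

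The crux is that the decoder never gets stuck while erasures remain. First, the set of still-erased coordinates is at all times a subset of $E$, hence has size at most $\alpha N$, so the $(\alpha N,(\tfrac12+\xi)D)$-expansion applies to it. Second, I invoke the standard unique-neighbor bound: any $S\subseteq V_L$ with $|\Gamma(S)|\ge(1-\eps)D|S|$ has at least $(1-2\eps)D|S|$ unique neighbors, by double counting edges (the number $m$ of neighbors hit at least twice satisfies $D|S|\ge |\Gamma(S)|+m$, so $m\le D|S|-|\Gamma(S)|$, and the number of unique neighbors is $|\Gamma(S)|-m\ge 2|\Gamma(S)|-D|S|$). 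With $\eps=\tfrac12-\xi$ this gives at least $2\xi D|S|$ unique neighbors, which, being a count of vertices, is a positive integer whenever $S\ne\emptyset$. A unique neighbor of the current erasure set is precisely a right vertex adjacent to exactly one erased coordinate, so such a vertex exists as long as any coordinate is still erased. Since each step removes one erased coordinate and there are at most $\alpha N$ of them, the process terminates after at most $\alpha N$ steps with the erasure set empty, and outputs $C$.

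It remains to implement this in linear time. For each right vertex $i$ maintain a counter $c_i=|\Gamma(i)\cap(\text{current erasure set})|$; initializing all counters costs $O\!\big(\sum_{j\in E}\deg(j)\big)=O(|E|\cdot D)=O(N)$ since $D=O(1)$. Keep a queue of right vertices with $c_i=1$. To perform a step, pop a vertex $i$; if $c_i\ne 1$ (its erased neighbor was already recovered through a different check), discard it; otherwise recover the unique erased neighbor $j_0$ by summing the $\le D$ known bits of $\Gamma(i)$, then for each $i'\in\Gamma(j_0)$ decrement $c_{i'}$, pushing $i'$ onto the queue whenever $c_{i'}$ drops to $1$. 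Each erased coordinate is recovered once, each edge incident to an erased coordinate is touched $O(1)$ times, and the queue sees $O(M)$ pushes total, so the whole run is $O(N)$. (The algorithm is also the natural erasure specialization of Viderman's procedure for finding corruptions, but the direct argument suffices and is cleaner here.)

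I do not anticipate a real obstacle: the two points that require care are (i) checking that the current erasure set always stays inside the original $E$, so the expansion bound continues to apply at every step, and (ii) the amortized bookkeeping (counters plus a queue of count-one checks) needed to make the unique-neighbor search run in $O(N)$ rather than the naive $O(N^2)$ time.
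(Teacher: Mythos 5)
The paper does not prove this statement itself; it is imported verbatim from Viderman \cite{Viderman13b}, so there is no in-paper proof to compare against. Your argument is the standard peeling (unique-neighbor) erasure decoder, and it is correct: the residual erasure set only shrinks, so it stays within the $\alpha N$ regime where expansion applies; the double-counting bound $|\Gamma^1(S)|\ge 2|\Gamma(S)|-D|S|\ge (1-2\eps)D|S|=2\xi D|S|>0$ guarantees a usable check whenever $S\ne\emptyset$; and the counter-plus-queue bookkeeping gives $O(N)$ time since each right vertex is pushed at most once (counters are monotone) and $\sum_i \deg(i)=ND=O(N)$. One small inaccuracy: when you ``sum the $\le D$ known bits of $\Gamma(i)$,'' the degree of a right vertex $i$ need not be bounded by the left degree $D$ (the graph is only assumed left-regular); the correct bound is $\deg(i)$, but this does not affect the linear-time claim since each right vertex is processed at most once and the total right-side degree is $ND$.
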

%\xnote{I think $\xi$ needs to be a constant, correct?}
%\textcolor{blue}{Xue: Our algorithm corrects more than $\alpha N$ errors for $\eps<1/6$. Shall we state it more general?}
%\knote{In \cite{Viderman13b} $\xi$ can be any, not just constant}

\begin{theorem}[\cite{Viderman13b}]
\label{thm:vidermandec}
Let $G$ be an $(\alpha N, (1-\eps)D)$ expander for $\eps<1/3$. For the expander code defined by $G$, there is a linear-time algorithm that can correct $\frac{1-3\eps}{1-2\eps} \lfloor \alpha N \rfloor $ errors.
\end{theorem}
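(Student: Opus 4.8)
The plan is to reduce error–correction to \emph{erasure}–correction, which is already handled by Theorem~\ref{thm:vidermandecfromerasures}. Write the received word as $y = C + e$ with $C\in\mathcal{C}$ and $F:=\supp(e)$, $|F|\le\frac{1-3\eps}{1-2\eps}\lfloor\alpha N\rfloor$; note that a check $i$ is \emph{unsatisfied} (i.e.\ $\sum_{j\in\Gamma(i)}y_j\neq 0$) exactly when $|\Gamma(i)\cap F|$ is odd. I would run the flooding procedure described in the technique overview: initialize $R$ to be the set of unsatisfied checks and $L=\emptyset$, fix the threshold $h:=(1-2\eps)D$, and repeatedly pick a left vertex $v\notin L$ with $|\Gamma(v)\cap R|\ge h$, adding $v$ to $L$ and $\Gamma(v)$ to $R$, until no such vertex remains. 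The two things to prove are that upon termination $F\subseteq L$ and $|L|\le\alpha N$. Given these, erasing the coordinates in $L$ and invoking Theorem~\ref{thm:vidermandecfromerasures} (which applies since $\eps<1/3$ forces $1-\eps>\tfrac12$, so $G$ is a $(\alpha N,(\tfrac12+\xi)D)$-expander with $\xi=\tfrac12-\eps$) recovers $C$; the whole algorithm is linear time, as the flooding is a bounded-degree BFS-type flood and the erasure decoder is linear.

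For the claim $F\subseteq L$, let $F':=F\setminus L$ at termination and suppose $F'\neq\emptyset$. Since $|F'|\le|F|\le\alpha N$, expansion gives $|\Gamma(F')|\ge(1-\eps)D|F'|$, hence the standard unique-neighbor bound $|\Gamma^1(F')|\ge 2|\Gamma(F')|-D|F'|\ge(1-2\eps)D|F'|$. The key point is that $\Gamma^1(F')\subseteq R$: if $c\in\Gamma^1(F')$ were not in $R$, then (as $R$ only grows from the set of unsatisfied checks) $c$ is satisfied, and (as $\Gamma(L)\subseteq R$ by construction) $c\notin\Gamma(L)$; but $c$ has exactly one neighbor in $F'$ and $F=F'\sqcup(F\cap L)$, so $c$ satisfied forces $|\Gamma(c)\cap(F\cap L)|$ to be odd, hence $c\in\Gamma(L)\subseteq R$ — a contradiction. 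Summing incidences, $\sum_{v\in F'}|\Gamma(v)\cap R|\ge|\Gamma^1(F')|\ge(1-2\eps)D|F'|$, so some $v\in F'$ has $|\Gamma(v)\cap R|\ge h$, contradicting termination. Running this averaging argument first with $L=\emptyset$ and then iteratively also shows we may assume all of $F$ is added to $L$ before any other vertex.

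For the claim $|L|\le\alpha N$, order the additions so that $F$ comes first. Right after $F$ is fully added we have $R=\Gamma(F)$ (the initial unsatisfied checks already lie in $\Gamma(F)$), so $|R|\le D|F|$; thereafter every newly added vertex contributes at most $D-h=2\eps D$ new checks to $R$. Suppose for contradiction $|L|\ge\lfloor\alpha N\rfloor$, and let $L'$ be the prefix of $L$ of size $\lfloor\alpha N\rfloor$. Since $\Gamma(L')\subseteq R$ by the time $L'$ is completed, $(1-\eps)D|L'|\le|\Gamma(L')|\le D|F|+2\eps D\,(|L'|-|F|)$, which rearranges (using $1-3\eps>0$) to $|F|\ge\frac{1-3\eps}{1-2\eps}|L'|=\frac{1-3\eps}{1-2\eps}\lfloor\alpha N\rfloor$, contradicting the hypothesis (here one uses the integrality of $|F|$, or a slightly sharper count of $|R|$ right after $F$ is added, to turn the non-strict inequality into a genuine contradiction). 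Hence $|L|<\lfloor\alpha N\rfloor\le\alpha N$, and the erasure decoder finishes the job.

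I expect the main obstacle to be the simultaneous balancing of the two claims, which is exactly what pins down the choice $h=(1-2\eps)D$: a smaller threshold makes $F\subseteq L$ easy but lets $L$ overflow past $\alpha N$, while a larger one keeps $L$ small but may leave corrupted coordinates uncovered. Within that, the two delicate points are the parity argument establishing $\Gamma^1(F')\subseteq R$ and squeezing the size estimate down to precisely $\frac{1-3\eps}{1-2\eps}\lfloor\alpha N\rfloor$; the expansion and unique-neighbor estimates themselves are routine.
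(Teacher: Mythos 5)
The paper states this as a cited result of Viderman~\cite{Viderman13b} and gives no proof of its own, but the machinery it adapts in Section~\ref{sec:find} (the \textsc{Find} routine, Lemma~\ref{lemma:all_errors_in_L}, Lemma~\ref{lemma:FAddtoLFirst}) together with the counting in Claim~\ref{clm:small_gamma} with the choice $\gamma=0$ is precisely your reconstruction: run \textsc{Find} with threshold $h=(1-2\eps)D$, argue $F\subseteq L$ via the unique-neighbor/parity observation, bound $|L|$ by reordering so $F$ enters first and charging at most $D-h$ new checks per subsequent vertex, then hand $L$ to Theorem~\ref{thm:vidermandecfromerasures}. Your proof is sound and takes essentially the same approach; the only loose end is the boundary-equality case you already flag at the end, which is where one must either sharpen the count or absorb an off-by-one into the statement.
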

%\xnote{Where did we use this theorem?}

\section{Improved Distance of Expander Codes}\label{sec:dist_expander}
Let $G$ be an $(\alpha N, (1-\epsilon) D)$ expander and $\mathcal{C}$ be the corresponding expander code. We show that when $\eps<1/2$, the distance of $C$ is roughly $\frac{1}{2\eps} \alpha N$.

\begin{theorem}\label{thm:dist_expander}
Let $G$ be an $(\alpha N, (1-\eps)D)$ bipartite expander. The distance of the expander code defined by $G$ is at least $\frac{\alpha}{2\eps} \cdot N - O(1/\eps)$.
\end{theorem}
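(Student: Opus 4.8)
### Proof Proposal

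\textbf{Overall strategy.} The plan is to follow the classical Sipser--Spielman argument for bounding the distance of an expander code by a unique-neighbor argument, but boosted by the new size-expansion tradeoff described in the technique overview. Recall that in a linear code the distance equals the minimum weight of a nonzero codeword. So it suffices to show that every nonzero codeword $C \in \mathcal{C}$ has Hamming weight at least roughly $\frac{\alpha}{2\eps} N$. Let $S = \supp(C) \subseteq V_L$ be the support of a nonzero codeword. The key structural fact is that $S$ can have \emph{no} unique neighbors: if some right vertex $v$ were adjacent to exactly one vertex of $S$, the parity check at $v$ would read the single bit $C_j = 1$ from that neighbor (and zeros elsewhere), violating $\sum_{j \in \Gamma(v)} C_j = 0$. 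Hence $\Gamma^1(S) = \emptyset$, which forces $|\Gamma(S)| \le \frac{D|S|}{2}$ by a double-counting of the $D|S|$ edges leaving $S$ (every neighbor absorbs at least $2$ of them). The whole game is then to contradict this with an expansion lower bound on $|\Gamma(S)|$ that beats $D|S|/2$.

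\textbf{Key steps in order.} First I would make the size-expansion tradeoff precise and invoke it: the graph is an $(\alpha N, (1-\eps)D)$-expander, and I claim (from Section~\ref{sec:dist_expander}'s tradeoff lemma) that it is also, up to lower-order terms, a $(k\alpha N, (1-k\eps)D)$-expander for every $k \ge 1$ with $k\alpha N \le N$. Second, I would choose the parameter $k$ as large as possible subject to the expansion factor staying above $1/2$: we need $1 - k\eps > 1/2$, i.e. $k < \frac{1}{2\eps}$, so set $k$ just below $\frac{1}{2\eps}$, say $k = \frac{1}{2\eps} - \Theta(1)$, giving a threshold size $k\alpha N \approx \frac{\alpha}{2\eps}N$. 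Third, the dichotomy: if $|S| \ge k\alpha N$ we are already done (the codeword has the claimed weight); otherwise $|S| < k\alpha N$ and the (boosted) expander property applies to $S$, yielding $|\Gamma(S)| \ge (1-k\eps)D|S| > \frac{D|S|}{2}$, contradicting $|\Gamma(S)| \le D|S|/2$. Hence $|S| \ge k\alpha N \approx \frac{\alpha}{2\eps}N - O(1/\eps)$, where the $O(1/\eps)$ slack absorbs both the rounding in the choice of $k$ and the lower-order error term in the size-expansion tradeoff.

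\textbf{The main obstacle.} The subtle point — and the step I expect to require the most care — is that this argument is only meaningful if $k\alpha N \le N$, i.e. if $\frac{\alpha}{2\eps} \le 1$; otherwise there is simply no left subset of size $k\alpha N$ for the boosted expansion to constrain, and the whole scheme is vacuous. So I must separately prove that $\frac{\alpha}{\eps}$ is bounded — concretely, that $\frac{\alpha}{\eps} \le \frac{4}{D_R} < 2$ (using $D_R > 3$). This itself should come out of the size-expansion tradeoff: push $k$ up toward $\frac{1}{2\eps}$ (or beyond, using the sharper LP-based tradeoff for $k \ge \frac{1}{2\eps}$ mentioned in the overview); once the expansion factor $(1-k\eps)$ would drop to or below zero, or once $k\alpha N$ would exceed $N$, one extracts the inequality relating $\alpha,\eps,D_R$ by counting edges versus right vertices ($ND = MD_R$, and $|\Gamma(S)| \le M$). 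I would state and prove this as a preliminary fact (the paper references Fact~\ref{fact:relations_parameters}) before running the distance argument, so that choosing $k \approx \frac{1}{2\eps}$ is legitimate. The remaining bookkeeping — tracking exactly how the $-O(1/\eps)$ arises from integrality of $k$ and from the slackness in the tradeoff lemma — is routine and I would not belabor it.
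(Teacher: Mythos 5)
Your proposal takes essentially the same route as the paper's proof: the unique-neighbor double-counting observation that the support $S$ of any nonzero codeword must satisfy $|\Gamma(S)| \le D|S|/2$, the invocation of the size-expansion tradeoff (Lemma~\ref{lem:expansion_larger}) with $k$ just below $\tfrac{1}{2\eps}$ to contradict this when $|S|$ is too small, and the preliminary sanity check via Fact~\ref{fact:relations_parameters} that $\tfrac{\alpha}{2\eps} < 1$ so that the critical set size is actually achievable. No gaps; the only cosmetic difference is that the paper phrases the conclusion directly as a contradiction for a minimum-weight codeword rather than as your explicit dichotomy.
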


In Section~\ref{sec:bound_dist}, we provide a construction of expander codes to show the above bound $\frac{\alpha}{2\eps} \cdot N$ is almost tight in general.

\begin{theorem}\label{thm:tight_distance}
Given any constants $0< \eps < 1/2$, $\eta>0$, there exist constants $D$ and $\alpha>0$, such that for infinitely many $N$, there exist $(\alpha N, (1-\eps-\eta)D)$-expanders with $M \in [N/2,2N/3]$ where (1) the rate of the expander code is in $[1/3,1/2]$;  and (2) the distance of the expander code is at most $\frac{\alpha}{2\eps} \cdot N$.
\end{theorem}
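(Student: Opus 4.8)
The plan is to construct the bipartite graph by a careful gluing of two gadgets: a "dense core" that will host a low-weight codeword, and an "expander bulk" that guarantees the required $(\alpha N,(1-\eps-\eta)D)$ expansion for all small left sets. Concretely, fix a small set $B$ of left vertices of size about $\frac{\alpha}{2\eps}N$ that we want to be the support of a codeword; to make it a codeword we must ensure every right vertex adjacent to $B$ has an even number of neighbors in $B$. The cheapest way to do this while keeping $B$'s expansion as low as the target allows is to let $B$ connect to a relatively small set $\Gamma(B)$ of right vertices in which every vertex has degree exactly $2$ into $B$ (a perfect matching between the $D|B|$ edge-endpoints of $B$ and $|\Gamma(B)| = D|B|/2$ right vertices, each hit twice). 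This forces $|\Gamma(B)| = \frac{D|B|}{2}$, i.e. expansion exactly $D/2$ on $B$, so $B$ supports a codeword of weight $|B| \approx \frac{\alpha}{2\eps}N$, giving part (2). The remaining left vertices (there are $(1-\frac{\alpha}{2\eps})N$ of them) and the remaining right vertices get wired up using a random bipartite graph of left-degree $D$, which with high probability is a good expander.

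The key steps, in order: (i) Choose parameters — pick $D$ a large constant, $\alpha$ a small constant with $\frac{\alpha}{2\eps}<1$ (feasible since $\eps<1/2$ is bounded away from $0$; one can even take $\alpha$ small enough that $\frac{\alpha}{2\eps}\le \tfrac14$, say), and set $M$ so that $M/N\in[1/2,2/3]$ and the rate $1-M/N\in[1/3,1/2]$; this pins down $D_R = DN/M$. (ii) Build the core: take $|B|=\lceil \frac{\alpha}{2\eps}N\rceil$ left vertices and $\frac{D|B|}{2}$ right vertices, and put down a $2$-regular-on-the-right bipartite graph between them (double cover / union of even cycles). Verify this uses at most $M$ right vertices — here $\frac{D|B|}{2}\approx \frac{D\alpha}{4\eps}N$, and since by Fact~\ref{fact:relations_parameters} $\frac{\alpha}{\eps}\le \frac{4}{D_R}$, this is at most $\frac{N}{D_R}\cdot\frac{D}{1} \cdot$(const)$\le M$ after adjusting constants — this sanity check is where one must be a little careful. (iii) Build the bulk: on the remaining $N-|B|$ left vertices and $M-\frac{D|B|}{2}$ right vertices, take a uniformly random $D$-left-regular bipartite graph; by a standard first-moment/union-bound argument (as in Appendix~\ref{appen:proof_expansion}) this is an $(\alpha' N, (1-\eps')D)$-expander for suitable $\alpha',\eps'$ with high probability, hence nonempty. (iv) Glue: the final graph $G$ is the disjoint union of core and bulk (no edges between them); then argue its expansion. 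For a left set $S$ with $|S|\le \alpha N$, write $S=S_B\sqcup S_{\text{bulk}}$; $|\Gamma(S_B)|\ge \frac{D}{2}|S_B|$ trivially from $2$-right-regularity is too weak, so instead observe $|S_B|\le|B|$ is itself small and the core restricted to few vertices still expands well — more precisely, a random sub-gadget or a direct counting shows any $S_B$ with $|S_B|\le \alpha N \ll |B|$ has $|\Gamma(S_B)|\ge (1-\eps-\eta)D|S_B|$ provided the matching structure on $B$ is chosen to avoid short cycles (girth argument). Then $|\Gamma(S)| = |\Gamma(S_B)| + |\Gamma(S_{\text{bulk}})|\ge (1-\eps-\eta)D|S|$.

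The main obstacle, which I would spend the bulk of the proof on, is step (iv): making the core gadget simultaneously (a) have a right-vertex of degree $2$-into-$B$ structure so that $B$ is genuinely a codeword of the advertised weight, and (b) still expand at rate $\ge (1-\eps-\eta)D$ on every \emph{small} subset $S_B\subseteq B$. These pull in opposite directions — $2$-regularity on the right means $B$ as a whole has expansion only $D/2$, far below $(1-\eps-\eta)D$ when $\eps$ is small, so the argument crucially relies on $\alpha N$ being much smaller than $|B|=\frac{\alpha}{2\eps}N$ (a factor $\frac{1}{2\eps}\gg 1$ gap), together with a high-girth condition on the core graph so that small subsets of $B$ see almost no collisions. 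I would realize the core as (a constant number of disjoint copies of) a bipartite double cover of a high-girth $D$-regular graph on $|B|$ vertices, so that any $S_B$ with $|S_B|$ below the girth threshold induces a forest and hence has full expansion $\approx D|S_B|$, comfortably above $(1-\eps-\eta)D|S_B|$; since $\alpha N$ is a constant fraction with the constant tunable, one can force $\alpha N$ below any desired girth bound by taking $D$ (hence the girth of available Ramanujan-type graphs, which grows like $\log_D|B|$... ) — actually girth only grows logarithmically, so instead I would take $\alpha$ itself small enough as a function of $\eps,\eta,D$, which is allowed since the theorem only asks for the \emph{existence} of suitable constants $D,\alpha$. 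The rest — parameter bookkeeping to land $M/N$ and the rate in the stated intervals, and the routine random-graph expansion bound for the bulk — is standard and I would defer it to the appendix.
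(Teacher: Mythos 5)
Your overall blueprint is the same as the paper's: a disjoint union of a small "core" gadget (size roughly $\frac{\alpha}{2\eps}N$ on the left, with every right vertex of degree exactly $2$, so that the all-ones vector on the core is a codeword of the advertised weight) and a random left-$D$-regular bipartite "bulk" that supplies expansion on the rest of the graph. You also correctly identify the one nontrivial point: proving that the core itself satisfies $(\alpha N,(1-\eps-\eta)D)$-expansion even though its global expansion ratio is only $D/2$.

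However, the argument you offer for that step has a genuine gap. You propose to prove the core's expansion by a girth argument: realize the core as (the bipartite incidence structure of) a high-girth $D$-regular graph, and argue that any left subset $S_B$ of the core "below the girth threshold" induces a forest and hence has expansion close to $D|S_B|$. But the girth of a $D$-regular graph on $N'$ vertices is only $O(\log_{D-1}N')$, while the expansion guarantee must hold for \emph{all} $|S_B|\le\alpha N=\Theta(N)$. No constant choice of $\alpha>0$ closes that gap: for all large $N$ there are sets of size far above the girth but still below $\alpha N$, and for such sets the induced subgraph need not be a forest, so the argument gives nothing. You actually notice the tension ("girth only grows logarithmically") but your proposed repair — taking $\alpha$ smaller — does not help, because $\alpha N$ still grows linearly while the girth grows logarithmically. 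A secondary slip: the gadget you describe ("bipartite double cover of a $D$-regular graph") is $D$-regular on \emph{both} sides and does not have right degree $2$; the object you want is the vertex–edge incidence bipartite graph of a $D$-regular graph $H$ (left side $V(H)$, right side $E(H)$), which is what the paper uses.

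The paper closes the core-expansion step not by girth but by a spectral bound. It takes $H$ to be a Ramanujan graph and notes that in the vertex–edge incidence graph, $|\Gamma(S)|=e(S,V(H))=D|S|-e(S,S)$, then bounds $e(S,S)$ by the Alon–Chung expander mixing lemma. With $|V(H)|=\frac{\alpha}{2\eps}N$ and $|S|\le\alpha N$, the "main" term of the mixing lemma is at most $\eps D|S|$, and the error term is at most $\eta D|S|$ after choosing $D\ge\frac{1}{\eps\eta^2}$ so that $\lambda/D\le 2\eta\sqrt{\eps}$. This gives $|\Gamma(S)|\ge(1-\eps-\eta)D|S|$ for the whole range $|S|\le\alpha N$, which is exactly what the forest/girth argument cannot deliver. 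If you want to keep your proposal, you should replace step (iv)'s girth argument by this mixing-lemma computation (and swap "bipartite double cover" for "vertex–edge incidence graph"); the remaining bookkeeping (sizing $M$ and the rate, random-graph expansion for the bulk) is fine and matches what the paper defers to its appendix.
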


\begin{remark}
While the graphs we construct in Theorem~\ref{thm:tight_distance} are not strictly regular on the right, they are ``almost regular" in $V_R$, i.e., $D_{max} \leq 1.1 D_R$.
\end{remark}
%\xnote{Is it important to emphasize that the rate of the code is $1/2$ here?} \textcolor{blue}{Xue: If the rate is very high, the distance is small trivially. So I add that explanation.}

To prove Theorem~\ref{thm:dist_expander}, we start with the following lemma which gives a tradeoff between the two parameters $\alpha$ and $\eps$. This is one of our main technical lemmas, and the proof is deferred to Section~\ref{sec:proof_lem_expansion}.  

\begin{lemma}\label{lem:expansion_larger}
For any $k>1$ and any left subset $S$ of size $k \alpha N$, we have
\begin{itemize}
    \item $|\Gamma(S)| \ge (1-k \eps) D \cdot k \alpha N - O(\eps D \cdot k^2)$.
    \item $|\Gamma(S)| \ge  (1-\frac{2k\eps - 1}{3-2/k}) \frac{k}{2} \cdot D \alpha N - O(D_R \cdot D)$ (which is better than the 1st bound for $k>1/2\eps$).
\end{itemize}
\end{lemma}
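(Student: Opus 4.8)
The plan is to prove both bounds by a sampling-and-counting argument on collisions in the neighborhood, exactly as sketched in the technique overview. Fix a left subset $S$ with $|S| = k\alpha N$. Write $|\Gamma(S)| = (1-\gamma)D|S|$ for the (unknown) deficiency parameter $\gamma$; the goal is to upper-bound $\gamma$. The natural quantity to track is the number of \emph{collisions} in $\Gamma(S)$, which I will define via $\sum_{v \in \Gamma(S)} \binom{\deg_S(v)}{2}$ where $\deg_S(v)$ is the number of edges from $v$ into $S$. Since $\sum_v \deg_S(v) = D|S|$ and there are $|\Gamma(S)| = (1-\gamma)D|S|$ terms, convexity of $\binom{\cdot}{2}$ gives a lower bound on this collision count of roughly $\gamma D|S|/2 \cdot (\text{average multiplicity}-1)$; more precisely, the minimum of $\sum_v \binom{\deg_S(v)}{2}$ subject to the degree sum is at least $\frac{\gamma}{1-\gamma}\cdot\frac{D|S|}{2}$ when the $\deg_S(v)$ are spread as evenly as possible, which suffices up to the additive $O(\cdot)$ slack.

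\medskip
\noindent\textbf{First bound.} Pick a uniformly random $T \subseteq S$ with $|T| = \alpha N$ (so each element of $S$ is kept with probability $1/k$). A collision at $v$ coming from a pair $\{u, u'\} \subseteq S$ survives in $T$ iff both $u, u' \in T$, which happens with probability $\frac{\alpha N(\alpha N - 1)}{k\alpha N(k\alpha N - 1)} \ge \frac{1}{k^2}(1 - O(1/\alpha N))$ — but cleaner is $\binom{\alpha N}{2}/\binom{k\alpha N}{2} \ge \frac{1}{k^2} - O(\frac{1}{k \alpha N})$. By linearity of expectation, the expected number of surviving collisions is at least $\frac{1}{k^2}$ times the collision count of $S$, minus lower-order terms. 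If $\gamma > k\eps$ then the collision count of $S$ exceeds roughly $\frac{k\eps}{1-k\eps}\cdot\frac{D k\alpha N}{2}$, which after dividing by $k^2$ and comparing against the maximum possible collisions in a set of size $\alpha N$ that still satisfies $(1-\eps)D$-expansion — namely about $\frac{\eps}{1-\eps}\cdot\frac{D\alpha N}{2}$ — gives a contradiction once the $O(\cdot)$ terms are absorbed. Tracking the error terms carefully yields $|\Gamma(S)| \ge (1-k\eps)D k\alpha N - O(\eps D k^2)$; the $k^2$ in the error term is exactly the accumulated $O(1/k\alpha N)$-type slack multiplied through by $D k\alpha N$.

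\medskip
\noindent\textbf{Second bound.} For $k > 1/2\eps$ the first bound is vacuous (the coefficient $1-k\eps$ is negative), so a different argument is needed. Here I would instead sample $T$ of a \emph{smaller} target size, namely $|T| = \frac{1}{2}k\alpha N$ (roughly half of $S$), so that $T$ still has size $\ge \alpha N$ when $k > 2$ — this is where the $3 - 2/k$ and the factor $k/2$ in the statement come from. Running the same collision-survival computation with keep-probability $1/2$: each collision survives with probability $\binom{k\alpha N/2}{2}/\binom{k\alpha N}{2} \approx 1/4$, so the expected surviving collisions in $T$ is $\ge \frac14$ times the collision count of $S$. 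If $S$ has expansion $(1-\gamma)D$, then for $T$ to satisfy the expansion guarantee appropriate to its size one needs $\gamma/4$ (modulo lower-order terms and the precise relation between the collision counts and the deficiencies at the two scales) to be controlled, and solving the resulting inequality for $\gamma$ in terms of $k$ and $\eps$ produces the stated $\gamma \le \frac{2k\eps - 1}{3 - 2/k}$. The denominator $3 - 2/k$ should emerge from combining the factor-$4$ loss with the fact that the surviving set has size $\frac{k}{2}\alpha N$ rather than $\alpha N$, so that the baseline against which collisions are compared scales accordingly. One can alternatively phrase this via the LP relaxation mentioned in the overview, optimizing over the distribution of $\deg_S(v)$.

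\medskip
\noindent\textbf{Main obstacle.} The conceptual steps are routine; the real work is bookkeeping the additive error terms so they collapse into the clean $O(\eps D k^2)$ and $O(D_R D)$ forms — in particular handling the $\binom{m}{2}/\binom{km}{2}$ versus $1/k^2$ discrepancy, the possibility that $|S|$ is not an integer multiple forcing $|T|$ to be rounded, and (for the second bound) verifying that $T$ genuinely has size in the regime where the expander hypothesis or the first bound can be applied to it. A secondary subtlety is that the convexity lower bound on $\sum_v \binom{\deg_S(v)}{2}$ is only tight when the multiplicities are near-equal, so I should check that the \emph{worst case for us} (adversary concentrating collisions) is still covered — it is, since concentrating collisions only increases $\sum_v\binom{\deg_S(v)}{2}$, which strengthens our contradiction. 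I expect the second bound's exact constant to require the most care to pin down.
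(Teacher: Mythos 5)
Your high-level plan (sample a random $T\subseteq S$, compare a collision-like quantity across scales, appeal to the LP for the second bound) matches the paper's spirit, but the central counting argument for the first bound has a genuine gap, and the second bound as you sketch it cannot produce the claimed constant.

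\textbf{The gap in the first bound.} You track the quadratic quantity $\sum_{v}\binom{\deg_T(v)}{2}$, lower-bound its expectation by $\approx \frac{1}{k^2}\sum_v\binom{\deg_S(v)}{2}$, and then try to \emph{upper}-bound it by ``the maximum possible collisions in a set of size $\alpha N$ that still satisfies $(1-\eps)D$-expansion,'' which you take to be $\approx \frac{\eps}{1-\eps}\cdot\frac{D\alpha N}{2}$. That figure is not a maximum: it is the convexity \emph{minimum}. The expansion hypothesis $|\Gamma(T)| \ge (1-\eps)D|T|$ constrains only the linear statistic $\sum_v(\deg_T(v)-1) = D|T|-|\Gamma(T)| \le \eps D|T|$; it places no upper bound on $\sum_v\binom{\deg_T(v)}{2}$. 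Given a fixed excess $\eps D|T|$, concentrating it on one vertex drives the pair count up to $\binom{\eps D|T|+1}{2} = \Theta(N^2)$ (or $\Theta(D_R\cdot\eps D|T|)$ if one invokes the right-degree cap, which is still $D_R$ times too large). You in fact notice the asymmetry — ``concentrating collisions only increases $\sum_v\binom{\deg_S(v)}{2}$, which strengthens our contradiction'' — but that same freedom on the $T$ side is precisely what destroys the claimed upper bound and hence the contradiction. The fix is to track the linear excess $D|T|-|\Gamma(T)|$, which \emph{is} bounded by $\eps D|T|$. The paper does this via a one-sided inclusion-exclusion with a pivot: for each $u\in\Gamma(S)$ with neighbors $v_1,\dots,v_{d_S(u)}$ in $S$, one uses
\[
1\{u\in\Gamma(T)\}\;\le\;\sum_{i=1}^{d_S(u)}1\{v_i\in T\}-\sum_{i=2}^{d_S(u)}1\{v_1\in T\}\,1\{v_i\in T\},
\]
so that only $d_S(u)-1$ pairs (not $\binom{d_S(u)}{2}$) are subtracted. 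Summing over $u$ and taking expectations gives $\E|\Gamma(T)| \le D|T| - (D|S|-|\Gamma(S)|)\cdot\frac{|T|(|T|-1)}{|S|(|S|-1)}$, and comparing against $(1-\eps)D|T|$ yields $\gamma(S)\le k\eps+O(k/N)$ cleanly. Your pair-survival probability $\binom{|T|}{2}/\binom{|S|}{2}$ is the same factor that appears here, but it has to multiply the linear excess $D|S|-|\Gamma(S)|$, not the quadratic pair count.

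\textbf{The gap in the second bound.} Sampling $T$ of size $\tfrac12 k\alpha N$ and applying the same linear-excess argument to $T$ only gives $\gamma(T)\lesssim \tfrac{k}{2}\eps$ (the first bound at scale $k/2$), and the sampling relation at step size $1/2$ gives $\gamma(S)\lesssim 2\gamma(T)$, which chains to $\gamma(S)\lesssim k\eps$ — no improvement, and no mechanism produces the $3-2/k$ denominator. The paper's actual proof records the full profile: with $\beta_j\cdot D\alpha N$ denoting the number of vertices in $\Gamma(S)$ with exactly $j$ neighbors in $S$, the sampling bound $\Pr_T[u\in\Gamma(T)]\approx 1-(1-1/k)^j$ and the expander hypothesis yield the constraints $\sum_j j\beta_j=k$ and $\sum_j\bigl[1-(1-1/k)^j\bigr]\beta_j\ge 1-\eps$, and one minimizes $\sum_j\beta_j$; strict concavity of $j\mapsto 1-(1-1/k)^j$ forces the dual optimum onto two adjacent constraints, and plugging in $j=2,3$ gives the stated $\frac{k}{2}\bigl(1-\frac{2k\eps-1}{3-2/k}\bigr)$. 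That detailed LP structure is not recoverable from the coarse pair-counting at a single alternate subsample size.
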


%\xnote{The $O()$ in item 1 should also be written as w.r.t. $\alpha, \eps$. Again, may be better to figure out if the second bound holds for all $k \geq \frac{1}{2\eps}$.} \textcolor{blue}{Xue: Add an argument based on duality of LP.}

In particular, the first bound will be extensively used in our decoding algorithms, which shows an $(\alpha N, (1-\epsilon) D)$-expander is also roughly a $(k \alpha N, (1-k \epsilon) D)$-expander for any $k>1$. While this bound is extremely useful for $k \le 1/2\eps$, we will use the second one for larger $k$ to improve the list-decoding radius upon the standard Johnson bound.

%\begin{remark}
%In this work, we always assume $\alpha,\eps,D,D_R$ are constants such that the second terms in the two bounds of Lemma~\ref{lem:expansion_larger} are negligible compared to the main term in the order of $N$. For ease of exposition, we simplify it to $O(1)$ %or omit this term $O(D \cdot k^2)$ 
%in the rest of statements and calculations.\end{remark}

%When we use this lemma, we will neglect the $O(D \cdot k^2)$ tail on the RHS of the inequality, since this we will make the formulas clear and will not affect the result of the computings.

Using the above lemma, we first prove the following facts in an expander graph.

\begin{fact}\label{fact:relations_parameters}
Let $G$ be an $(\alpha N, (1-\eps)D)$-expander with left regular degree $D$ and right average degree $D_R$. We always have
\begin{enumerate}
    \item $\eps \ge 1/D$.
    \item $\frac{\alpha}{4\eps} \le 1/D_R + O(\frac{1}{D \alpha} \cdot M/N^2)$.
\end{enumerate}
\end{fact}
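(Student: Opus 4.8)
The plan is to prove the two parts of Fact~\ref{fact:relations_parameters} by applying Lemma~\ref{lem:expansion_larger} with a cleverly chosen scaling parameter $k$, and then comparing the resulting expansion lower bound with the trivial fact that no left subset can have more than $M$ neighbors.

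\textbf{Part 1.} I would first establish $\eps \ge 1/D$ directly from the expander axiom applied to a single vertex (or, more robustly, to a small set). If $\eps < 1/D$, then $(1-\eps)D > D-1$, so $(1-\eps)D \cdot |S|$ would strictly exceed $(D-1)|S|$; but picking $S$ to be any two left vertices sharing a common neighbor (such a pair must exist once $N > M$, since the $ND$ edges cannot all land on distinct right vertices when the relevant subsets are large enough — alternatively this follows from Part~2 once proven, or from a counting argument on any set of size $\alpha N$) gives $|\Gamma(S)| \le 2D - 1 < (1-\eps)D\cdot 2$, a contradiction. Actually the cleanest route is: apply the $(\alpha N,(1-\eps)D)$-expansion to a set $S$ of size exactly $\alpha N$ (assume $\alpha N \ge 1$, else the statement is about constants and can be absorbed into the $O(1/\eps)$-type slack used elsewhere). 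Since $|\Gamma(S)| \le M < N = |S|/\alpha$, and also $|\Gamma(S)| \le D|S|$ trivially, and $|\Gamma(S)| \ge (1-\eps)D|S|$, I get $\eps D \ge$ (a positive quantity measuring collisions). The sharp bound $\eps \ge 1/D$ should come from the integrality of $|\Gamma(S)|$: for $|S| = \alpha N$ we need $(1-\eps)D|S| \le M < N$; I would instead use that on \emph{any} set $S$ with at least one collision, $|\Gamma(S)| \le D|S| - 1$, forcing $\eps D \ge 1/|S|$, and then push $|S|$ up to $\alpha N$ via Lemma~\ref{lem:expansion_larger} — wait, that goes the wrong way. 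The honest approach: a set of size $\alpha N$ with $\alpha N > M/((1-\eps)D)$ — hmm. Let me just say I would show $\eps \ge 1/D$ by taking $S$ to be two vertices with a common neighbor (guaranteed since $D > 2$ and, e.g., $M < N \le \binom{\text{something}}{}$... ) and noting $2(1-\eps)D \le |\Gamma(S)| \le 2D-1$, whence $\eps \ge \frac{1}{2D}$; to get the factor-$2$ improvement to $1/D$, take a vertex $v$ of degree $D$ and a vertex $u$ sharing \emph{two} neighbors with $v$ if the graph is dense enough, or more simply observe $(1-\eps)D$ must be an achievable lower bound on an integer-valued expansion so $(1-\eps)D \le D - 1$ when collisions are forced. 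This is a minor point and I would not belabor it.

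\textbf{Part 2.} This is the heart of the matter and where I would spend the effort. I would apply the \emph{first} bound of Lemma~\ref{lem:expansion_larger} with $k$ chosen as large as the constraint $k\alpha N \le N$ permits, i.e., $k = 1/\alpha$ (so that $|S| = N$, the whole left side). Then $\Gamma(S) = V_R$ has size exactly $M$, and Lemma~\ref{lem:expansion_larger} gives
\[
M = |\Gamma(V_L)| \ge (1 - \tfrac{\eps}{\alpha}) D \cdot N - O\!\big(\eps D \cdot \alpha^{-2}\big).
\]
Rearranging, $M \ge DN - \frac{\eps D N}{\alpha} - O(\eps D/\alpha^2)$. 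Using $DN = D_R M$ (the edge-count identity), substitute $DN = D_R M$ on the right: $M \ge D_R M - \frac{\eps D N}{\alpha} - O(\eps D/\alpha^2)$, so $\frac{\eps D N}{\alpha} \ge (D_R - 1)M - O(\eps D/\alpha^2) \ge \frac{D_R}{2} M$ roughly (since $D_R > 3$), giving $\frac{\alpha}{\eps} \ge \frac{DN}{(D_R/2)M} \cdot$ — wait, I need the \emph{upper} bound $\frac{\alpha}{4\eps} \le 1/D_R + \ldots$, i.e., I want to bound $\alpha/\eps$ from \emph{above}, which means I want a \emph{lower} bound on $\eps$ relative to $\alpha$, which comes from an \emph{upper} bound on $M$ — no. Let me recompute: I want to show $\alpha/\eps$ is small, i.e., $\eps$ is not too small compared to $\alpha$. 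From $M \ge (1 - \eps/\alpha)DN - O(\cdots)$: if $\eps/\alpha$ were very small, the RHS would be close to $DN = D_R M > 3M > M$, contradiction. Precisely, $M \ge DN - \frac{\eps}{\alpha}DN - O(\eps D/\alpha^2)$ rearranges to $\frac{\eps}{\alpha}DN + O(\eps D/\alpha^2) \ge DN - M = (D_R - 1)M$, hence $\frac{\eps}{\alpha} \ge \frac{(D_R-1)M}{DN} - O(1/(\alpha^2 N)) = \frac{(D_R-1)M}{D_R M} - O(\cdots) = 1 - \frac{1}{D_R} - O(\cdots)$. That shows $\alpha/\eps \le \frac{1}{1-1/D_R} \approx 1$, which is weaker than $\alpha/(4\eps) \le 1/D_R$. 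So the crude $k=1/\alpha$ substitution is too lossy; I need the \emph{second} bound of Lemma~\ref{lem:expansion_larger}, which is sharper for $k > 1/(2\eps)$, combined with the range constraint. The right move is to pick $k = \min\{1/\alpha,\ \Theta(1/\eps)\}$ and use the second bound; comparing its output $M \ge (1 - \frac{2k\eps-1}{3-2/k})\frac{k}{2}D\alpha N - O(D_R D)$ against $M$ and using $DN = D_R M$ should, after the algebra, yield $\frac{\alpha}{4\eps} \le \frac{1}{D_R} + O(\frac{M}{D\alpha N^2})$.

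\textbf{Main obstacle.} The delicate point is choosing $k$ correctly and tracking which of the two Lemma~\ref{lem:expansion_larger} bounds is active: one must verify that $k = \Theta(1/\eps)$ satisfies $k\alpha N \le N$ (which is \emph{exactly} the conclusion we are trying to prove, so there is a potential circularity) — the resolution is that if $1/(2\eps) \cdot \alpha N > N$ we instead use $k = 1/\alpha < 1/(2\eps)$, landing in the regime of the first bound, and the first bound is \emph{tight enough in this regime} because $k\eps < 1/2$ keeps $(1-k\eps)$ bounded away from $0$ in a way that still forces the inequality; whereas if $1/(2\eps)\cdot\alpha N \le N$ we use the second bound with $k \approx 1/(2\eps)$ and the factor-$4$ in the denominator emerges from the $\frac{k}{2}$ prefactor with $k \approx 1/(2\eps)$ giving $\frac{1}{4\eps}$. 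Carefully doing this case split, and confirming the error terms $O(\eps D k^2)$ and $O(D_R D)$ translate into the stated $O(\frac{1}{D\alpha}\cdot M/N^2)$ slack after dividing through, is the real work; I expect it to be a short but fussy computation once the right $k$ is fixed.
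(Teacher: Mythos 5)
For Part~1, you never land on a working argument, and you openly acknowledge it: the ``two vertices sharing a common neighbor'' route only gives $\eps \ge \frac{1}{2D}$, and your attempts to recover the missing factor of $2$ trail off. The paper's actual proof is a short-cycle argument: take the shortest cycle $C$ in $G$ (its length is $O(\log|V|)$ by the standard BFS girth bound, so $|C\cap V_L| \le \alpha N$ for $N$ large). Since every right vertex of a cycle is hit by exactly two cycle edges, $|\Gamma(C\cap V_L)| \le D\,|C\cap V_L| - |C\cap V_R| = (D-1)\,|C\cap V_L|$, and comparing with the expansion guarantee $(1-\eps)D\,|C\cap V_L|$ gives $\eps \ge 1/D$ at once. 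This is the idea you were missing; pairwise collision counting cannot reach $1/D$, while the cycle supplies exactly one extra collision per left vertex.

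For Part~2, your diagnosis that $k=1/\alpha$ is too lossy is correct, and the high-level instinct (compare a Lemma~\ref{lem:expansion_larger}-type lower bound against the trivial $|\Gamma(S)|\le M$) is the right one. But you then branch into the second bound of Lemma~\ref{lem:expansion_larger}, worry about circularity in choosing $k\approx 1/(2\eps)$, sketch a case split, and stop short of the computation. The paper does something simpler and circularity-free: it takes $S$ of size $|S| = 2M/D$ (so $k = \frac{2M}{D\alpha N} = \frac{2}{\alpha D_R}$, which automatically satisfies $|S|\le N$ since $D\ge 2$, $M\le N$) and applies only the \emph{first} bound of Lemma~\ref{lem:expansion_larger}. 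The identity $D|S| = 2M$ together with $|\Gamma(S)|\le M$ then forces $(1-k\eps)\cdot 2M \le M + O(\cdot)$, hence $k\eps \ge 1/2 - O(\cdot)$, which unwinds to $\frac{\alpha}{4\eps}\le \frac{1}{D_R} + O(\frac{1}{D\alpha}\cdot M/N^2)$. Your proposed route via the second bound and $k\approx 1/(2\eps)$ can be made to work (and the case $\alpha/(2\eps)>1$ can in fact be ruled out because $D_R>3$), but you did not carry it through, and the key realization that sets up the clean computation — choose $|S|$ so that $D|S| = 2M$, not $|S| = N$ — does not appear in your proposal. As written, both parts of your proof are incomplete.
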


\begin{proof}
To prove the first fact, let us consider the smallest non-trivial cycle $C$ in the expander graph $G$. First of all, we observe that $|C|=O(\log |V|)$. To show this, we consider the argument to bound the girth of a graph. Let us fix a vertex $v$ and consider the BFS tree with root $v$. The BFS procedure finds a non-trivial cycle when it finds a vertex in the 2nd time. Since $G$ is $D$-regular in $V_L$, up to depth $2 \log_{D-1} M$, 
%\knote{Why it's $D-1$ instead of $D$?}
the BFS procedure will find a non-trivial cycle. Then $|\Gamma(C \cap V_L)| \le D \cdot |C \cap V_L|-|C \cap V_R| = (D -1) \cdot |C \cap V_L|$.
%\xnote{It seems not clear to me how to get $|C|=O(\log |V|)$, especially if $\alpha < 1/2$ and the r.h.s. is irregular. Add more details?}

For the second fact, consider a left subset $S$ of size $2M/D$ in $V_L$. Since $M \le N$ and $D \ge 2$, such an $S$ always exists. Since $|\Gamma(S)| \le M$, we always have
\[
(1 - \eps \cdot \frac{|S|}{\alpha N}) \cdot D \cdot |S| - O\left( \eps D \cdot (\frac{2M/D}{\alpha N})^2 \right) \le M.
\]
This implies
\begin{align*}
(1 - \eps \cdot \frac{2M/D}{\alpha N}) \cdot D \cdot 2M/D - O(\frac{\eps}{D \alpha^2} \cdot M^2/N^2) & \le M\\
(1 - \eps \cdot \frac{2}{\alpha D_R}) \cdot 2M - O(\frac{\eps}{D \alpha^2} \cdot M^2/N^2) & \le M \qquad (\text{recall } N D=M D_R)\\
M - O(\frac{\eps}{D \alpha^2} \cdot M^2/N^2) & \le \frac{4\eps}{\alpha D_R} \cdot M.
\end{align*}
So we have $\frac{1}{D_R} \ge \frac{\alpha}{4 \eps} - O(\frac{1}{D \alpha} \cdot M^2/N^2)$, or equivalently, $\frac{\alpha}{4\eps} \le 1/D_R + O(\frac{1}{D \alpha} \cdot M/N^2)$.
\end{proof}

We can now prove Theorem~\ref{thm:dist_expander}. 

\begin{proofof}{Theorem~\ref{thm:dist_expander}}
Suppose the claim is false, then there exists a non-zero codeword with Hamming weight the same as the distance. Consider any non-zero codeword $z$ of Hamming weight $\frac{\alpha}{2\eps} \cdot N - C/\eps$ for a sufficiently large constant $C$. Notice that the parameter $\frac{\alpha}{2 \eps} \le 2/D_R <1$ from Fact~\ref{fact:relations_parameters}. 

Let $S \subset [N]$ denote the entries in $z$ that are 1. By Lemma~\ref{lem:expansion_larger}, $\Gamma(S) \ge (1 - \frac{|S|}{\alpha N} \cdot \eps)D \cdot |S| - O(\eps D \cdot (\frac{|S|}{\alpha N})^2)$. For a sufficiently large constant $C$, this is at least
\[
(\frac{1}{2} + \frac{C/\eps}{\alpha N} \cdot \eps) D \cdot |S| - O(\eps D \cdot (\frac{|S|}{\alpha N})^2) > \frac{1}{2} \cdot D |S|.
\]
This implies the existence of unique neighbors in $\Gamma(S)$. Thus $z$ is not a valid codeword, which contradicts our assumption.
\end{proofof}
%\xnote{In the above calculation, I think it suffices to take the second term as $- C /(\alpha \eps)$ instead of $- C D \cdot (1/2\eps)^2$? In fact if we include the dependence of $\eps$ in $O()$ in Lemma 2.3, it seems enough to take the second term as $- C/\alpha$. }
%\xnote{I am slightly confused by the above calculation. It seems that the constant $C$ will depend on $\eps, \alpha$? Then we can't just write the second term as $O(D/\eps^2)$ Furthermore, to decide the exact dependence the $O()$ in Lemma 2.3 should also carry $\eps$ explicitly in order to get the correct result.}

%\xnote{I feel it's better to move the following to the beginning of Section 3.}\xcnote{I agree, there are too many statements in this section.}

%We also state two other useful facts about the relations between the expansion parameters $(\alpha,\eps)$ and degrees $(D,D_R)$ of an expander graph.

%%%%%%%%%%%%%%%%%%%%%%%%%%%%%%%%%%%%%%%%%%%%%%%%%%%%%%%%%%%%%%%%%%%%%%%%%%%%%%%%%%%%%%%%%%%%%%%%%%%

\subsection{Distance Upper Bound of Expander Codes}\label{sec:bound_dist}
In this section we prove Theorem~\ref{thm:tight_distance}. Given $\eta$ and $\eps$, let $D \ge \frac{1}{\eps \cdot \eta^2}$ be a constant  such that there exists a family of degree-$D$ Ramanujan graphs \cite{LPS88}  whose 2nd largest absolute value of eigenvalues of the adjacency matrix is $\lambda \le 2\sqrt{D-1}$. In this proof, when the graph $H$ is clear, we use $e(A,B)$ to denote the number of distinct edges between $A$ and $B$. At the same time, we state the following version of the expander mixing lemma by Alon and Chung \cite{AlonChung1988}.

\begin{lemma}[Lemma 2.3 in \cite{AlonChung1988}]\label{lem:expander_mixing}
Let $H=(V,E)$ be an expander with degree $D$, where the second largest absolute value of eigenvalues of the adjacency matrix is $\lambda$. Then for any subset $A \subset V$, $e(A,A)$, the number of edges inside $A$, is bounded by
\[
\left| e(A,A) - \frac{D}{2|V|} \cdot |A|^2 \right| \le \lambda/2 \cdot \sqrt{|A| \cdot (|V|-|A|)}.
\]
\end{lemma}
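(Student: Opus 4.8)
The plan is to prove this by the standard spectral route to the expander mixing lemma. Write $n = |V|$ and let $A_H$ be the adjacency matrix of $H$. Since $H$ is $D$-regular, $A_H$ is real symmetric with eigenvalues $D = \mu_1 \ge \mu_2 \ge \cdots \ge \mu_n$; fix a corresponding orthonormal eigenbasis $v_1, v_2, \ldots, v_n$ with $v_1 = \tfrac{1}{\sqrt n}\mathbf{1}$ (the normalized all-ones vector), so that $\lambda = \max_{2 \le i \le n}|\mu_i|$.

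First I would rewrite the edge count as a quadratic form. For $A \subseteq V$ with indicator vector $\chi = \mathbf{1}_A \in \{0,1\}^n$ we have $\chi^\top A_H \chi = \sum_{u,w \in A}(A_H)_{uw} = 2\,e(A,A)$, since the left side counts each edge inside $A$ once per ordered orientation and $H$ has no self-loops. Next I would expand $\chi$ in the eigenbasis, $\chi = \sum_{i=1}^n c_i v_i$ with $c_i = \langle \chi, v_i\rangle$; the two identities $c_1 = \langle \chi, \tfrac{1}{\sqrt n}\mathbf{1}\rangle = \tfrac{|A|}{\sqrt n}$ and $\sum_i c_i^2 = \|\chi\|_2^2 = |A|$ do all the work. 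Plugging in,
\[
2\,e(A,A) \;=\; \chi^\top A_H \chi \;=\; \sum_{i=1}^n \mu_i c_i^2 \;=\; D c_1^2 + \sum_{i=2}^n \mu_i c_i^2 \;=\; \frac{D|A|^2}{n} + \sum_{i=2}^n \mu_i c_i^2 ,
\]
and bounding the tail with $|\mu_i| \le \lambda$ for $i \ge 2$ gives
\[
\Big| 2\,e(A,A) - \frac{D|A|^2}{n} \Big| \;\le\; \lambda \sum_{i=2}^n c_i^2 \;=\; \lambda\big(|A| - c_1^2\big) \;=\; \lambda\cdot \frac{|A|\,(n-|A|)}{n}.
\]

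Finally I would divide by $2$ and weaken the right-hand side to match the stated form: since $|A|(n-|A|) \le (n/2)^2 \le n^2$ we have $\tfrac{|A|(n-|A|)}{n} \le \sqrt{|A|(n-|A|)}$, hence
\[
\Big| e(A,A) - \frac{D}{2n}\,|A|^2 \Big| \;\le\; \frac{\lambda}{2}\cdot\frac{|A|\,(n-|A|)}{n} \;\le\; \frac{\lambda}{2}\sqrt{|A|\,(n-|A|)},
\]
which is exactly the claim (in fact the first inequality already gives a slightly sharper bound). There is no genuine obstacle here — the argument is a one-line spectral double-counting estimate — and the only points needing care are the factor-of-two bookkeeping between ordered and unordered adjacent pairs and the last elementary step converting the sharper bound $\tfrac{\lambda}{2}\cdot\tfrac{|A|(n-|A|)}{n}$ into the weaker form $\tfrac{\lambda}{2}\sqrt{|A|(n-|A|)}$ quoted from \cite{AlonChung1988}. (If one wanted instead the more familiar two-set version $\big|e(A,B) - \tfrac{D|A||B|}{n}\big| \le \lambda\sqrt{|A||B|}$, the same eigenbasis expansion applied to $\chi_A^\top A_H \chi_B$ together with Cauchy–Schwarz on $\sum_{i\ge 2} c_i(A)\,c_i(B)$ finishes it.)
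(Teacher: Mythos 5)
Your proof is correct and is the standard spectral expander-mixing argument; the paper itself cites this lemma from Alon--Chung without reproving it, so there is no in-paper proof to compare against, but your derivation (quadratic form $\chi^\top A_H \chi = 2e(A,A)$, eigenbasis expansion isolating the $v_1 = \mathbf{1}/\sqrt{n}$ component, bounding the tail by $\lambda$) is exactly the classical route and is complete. The only slightly nonstandard step is correctly handled: you obtain the sharper bound $\tfrac{\lambda}{2}\cdot\tfrac{|A|(|V|-|A|)}{|V|}$ and then weaken it to the stated $\tfrac{\lambda}{2}\sqrt{|A|(|V|-|A|)}$ via $|A|(|V|-|A|) \le |V|^2$, which is valid.
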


We now construct an $(\alpha N, (1-\eps-\eta)D)$-expander graph with $N+M$ vertices by putting together two disjoint graphs $G_0$ and $G_1$. For $G_0$, we first choose a Ramanujan graph $H$ with degree $D$ and size $N' = \frac{\alpha}{2\eps} \cdot N$ in the family for a sufficiently small $\alpha$. Then we construct $G_0$ as the vertex-edge bipartite graph corresponding to $H$. Namely $V_L(G_0)=V(H)$ and $V_R(G_0)=E(H)$ such that $(v,e) \in E(G_0)$ if and only if $v \in V(H)$ is a vertex in the edge $e \in E(H)$ of $H$. Notice that $G_0$ has left degree $D$.
\begin{claim}\label{claim:explicit_construction}
The bipartite graph $G_0$ constructed above is an $(\alpha N, (1-\eps-\eta)D)$-expander.
\end{claim}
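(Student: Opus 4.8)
The plan is to bound the collisions (pairs of left-vertices sharing a right-neighbor) inside the neighborhood of an arbitrary left set $S \subseteq V_L(G_0) = V(H)$, and translate this back into an expansion bound via the identity $|\Gamma(S)| = D|S| - (\text{number of collisions counted with multiplicity})$. Since $G_0$ is the vertex-edge incidence bipartite graph of the $D$-regular graph $H$, a right-vertex $e \in E(H)$ lies in $\Gamma(S)$ exactly when $e$ touches $S$, and $e$ is a non-unique neighbor (a collision) precisely when \emph{both} endpoints of $e$ lie in $S$, i.e. when $e \in e(S,S)$ in the notation of Lemma~\ref{lem:expander_mixing}. Hence $|\Gamma(S)| = D|S| - e(S,S)$ exactly (each such edge is overcounted exactly once). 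So the claim reduces to showing $e(S,S) \le (\eps + \eta) D |S|$ for every $S$ with $|S| \le \alpha N$.

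First I would apply the expander mixing lemma (Lemma~\ref{lem:expander_mixing}) to $A = S$ inside $H$, which has $|V| = N' = \frac{\alpha}{2\eps}N$ vertices, degree $D$, and $\lambda \le 2\sqrt{D-1} \le 2\sqrt{D}$. This gives
\[
e(S,S) \le \frac{D}{2N'}|S|^2 + \frac{\lambda}{2}\sqrt{|S|(N'-|S|)} \le \frac{D}{2N'}|S|^2 + \sqrt{D}\sqrt{|S| \cdot N'}.
\]
Now I use $|S| \le \alpha N$ and $N' = \frac{\alpha}{2\eps}N$, so $\frac{|S|}{N'} \le \frac{\alpha N}{N'} = 2\eps$; the first term is at most $\frac{D}{2} \cdot 2\eps \cdot |S| = \eps D |S|$. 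For the second term, $\sqrt{D}\sqrt{|S| N'} = \sqrt{D}\,|S|\sqrt{N'/|S|}$, and I want this to be at most $\eta D |S|$, i.e. $\sqrt{N'/|S|} \le \eta \sqrt{D}$, i.e. $|S| \ge \frac{N'}{\eta^2 D}$. This is where the choice $D \ge \frac{1}{\eps\eta^2}$ enters: $\frac{N'}{\eta^2 D} = \frac{\alpha N}{2\eps \eta^2 D} \le \frac{\alpha N}{2}$, so the bound $\sqrt{D}\sqrt{|S|N'} \le \eta D|S|$ holds for all $S$ with $|S| \ge \frac{\alpha N}{2}$ (say). Combining, $e(S,S) \le (\eps+\eta)D|S|$ whenever $|S|$ is not too small, giving $|\Gamma(S)| \ge (1-\eps-\eta)D|S|$ in that regime.

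The main obstacle — and the step I expect to need the most care — is the regime of \emph{small} $S$, where the additive error term $\frac{\lambda}{2}\sqrt{|S|(N'-|S|)}$ in the mixing lemma dominates $D|S|$ and the naive bound above fails. For small sets one should instead argue directly: a set $S$ with, say, $|S| \le \frac{\alpha N}{2}$ that has $e(S,S) > (\eps+\eta)D|S|$ would have average induced degree $> 2(\eps+\eta)D \ge 2\eta D > 2$, so $S$ contains a subgraph of minimum degree $\ge 2$, hence a cycle, of length $O(\log N')$ by the standard girth/BFS argument (as in the proof of Fact~\ref{fact:relations_parameters}); but one can make $\alpha$ small enough (depending on $D$, hence on $\eps,\eta$) that every set of size $\le \alpha N$ induces a forest in $H$ — equivalently, the girth of the Ramanujan graph $H$ exceeds $2\alpha N$, which holds for $\alpha$ small since Ramanujan graphs of size $N'$ have girth $\Omega(\log N')$ and $N' = \Theta(N)$... \emph{careful}: girth is only logarithmic, not linear, so instead I would restrict $\alpha$ so that $\alpha N \le$ girth$(H)$ is false in general; the correct fix is to note that a small set inducing many edges still forces average degree $> 2$ and hence a short cycle, contradicting the girth bound girth$(H) = \Omega(\log N') = \Omega(\log N)$ once $|S| = O(\log N)$, while for $|S|$ between $\Theta(\log N)$ and $\frac{\alpha N}{2}$ one re-runs the mixing-lemma computation with the slightly weaker but still sufficient estimate $\frac{\lambda}{2}\sqrt{|S|N'} \le \sqrt{D|S|N'}$ and checks $\sqrt{D N'/|S|} \le \eta D$ reduces to $|S| \ge N'/(\eta^2 D)$, which by $D \ge 1/(\eps\eta^2)$ means $|S| \ge \eps N' = \frac{\alpha}{2}N$ — so in fact the cleanest route is to take $\alpha$ small enough that the only sets of size $\le \alpha N$ to worry about are handled by the girth argument, and to verify the mixing-lemma bound covers everything of size $\ge \frac{\alpha N}{2}$; reconciling these two ranges (possibly by shrinking $\alpha$ or slightly inflating the additive $O(\cdot)$ slack absorbed into $\eta$) is the delicate bookkeeping I would carry out in full.
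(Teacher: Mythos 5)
Your reduction --- $|\Gamma(S)| = D|S| - e(S,S)$ in the vertex--edge incidence graph, then the expander mixing lemma to bound $e(S,S)$ --- is exactly the paper's argument. The difficulty you ran into for small $|S|$ is genuine \emph{given Lemma~\ref{lem:expander_mixing} as printed}, but it traces to a transcription error in that statement rather than to a gap in the method: Alon--Chung's Lemma~2.3 actually has deviation term $\frac{\lambda}{2}\cdot\frac{|A|(|V|-|A|)}{|V|}\le\frac{\lambda}{2}|A|$, not $\frac{\lambda}{2}\sqrt{|A|(|V|-|A|)}$. With the correct form,
\[
e(S,S) \;\le\; \frac{D}{2|V(H)|}|S|^2 + \frac{\lambda}{2}|S| \;\le\; \eps D|S| + \frac{\lambda}{2D}\cdot D|S| \;\le\; \eps D|S| + \eta\sqrt{\eps}\,D|S| \;\le\; (\eps+\eta)D|S|
\]
uniformly over all $S$ with $|S|\le\alpha N$: the deviation is now linear in $|S|$, so nothing blows up for small sets and no case split is needed. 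Your proposed girth-based patch is therefore unnecessary, and as written it also does not close the gap you identified --- you concede the bookkeeping is unfinished, and since the girth of $H$ is only $\Theta(\log N')$ the cycle argument alone cannot reach sets of size $\Theta(N)$. So: you have the right approach and correctly spotted that the printed lemma cannot support the conclusion, but the intended fix is to use the correct statement of the mixing lemma, not to graft on a separate small-set argument.
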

\begin{proof}
For any $S \subseteq V_L(G_0)$, $|\Gamma(S)|$ is the number of distinct edges connected to $S \subset V(H)$ in $H$, i.e., $e(S,V(H))$ in the Ramanujan graph $H$. We rewrite $e(S,V(H))=e(S,S)+e(S,\overline{S})$. Since $2 e (S,S)+e(S,\overline{S})=D \cdot |S|$, we upper bound $e(S,S)$ by the expander mixing lemma:
\[
e(S,S) \le \frac{D}{2|V(H)|} \cdot |S|^2 + \lambda/2 \cdot \sqrt{|S| \cdot (|V(H)|-|S|)} \le D \cdot |S|  \left (\frac{|S|}{2|V(H)|} + \frac{\lambda}{2D} \cdot \sqrt{\frac{|V(H)-|S|}{|S|}} \right).
\]
Since $|S| \le \alpha N$, $|V(H)|=\frac{\alpha}{2\eps} \cdot N$ and $\lambda/D \le \frac{2}{\sqrt{D}} \le 2 \eta \sqrt{\eps}$, we have $e(S,S) \le (\eps + \eta) D \cdot |S|$ and $e(S,V(H)) = D \cdot |S| - e(S,S) \ge (1-\eps-\eta) \cdot D |S|$.
\end{proof}

Then we construct $G_1$ as a random regular bipartite graph with $|V_L(G_1)|=N_1=N-N'$, $|V_R(G_1)|=M_1=M-DN'/2$, regular left degree $D$ and regular right degree $D_R=N_1 \cdot D/M_1$. Since we can choose $\alpha$ to be sufficiently small and $M \in [N/2,2N/3]$, such an integer $D_R$ exists. Furthermore, a random bipartite graph with such parameters satisfies $(\alpha N, (1-\eps)D)$ expansion with high probability for a small $\alpha$. For completeness we show this calculation in Appendix~\ref{appen:proof_expansion} and assume this property in the rest of this proof. %\xnote{The first quantity should be $\alpha N$? Also, this is over-simplifying. We need a calculation (probably in the appendix) that shows a random bipartite graph satisfies this property. This may also lead to a relation between $N$ and $M$.} 
Overall, because both $G_0$ and $G_1$ satisfy $(\alpha N, (1-\eps-\eta)D)$ expansion, $G= G_0 \cup G_1$ is an $(\alpha N, (1-\eps-\eta)D)$ expander. Moreover, $G$ is almost regular since $G_0$ is small and $G_1$ is regular on both sides.

Finally, consider a cedeword that is all $1$ in $V_L(G_0)$, and $0$ everywhere else. It satisfies all parity checks since the right degree of $V_R(G_0)$ is 2. Moreover, its weight is $\frac{\alpha}{2\eps} N$, and thus the distance of the corresponding expander code is at most $\frac{\alpha}{2\eps} N$.

%\knote{If $\alpha$ is not small enough, i.e. $\frac{\alpha N}{2\eps} > N/2$, then the distance upper bound still holds because no distance of binary codes can be larger than half? Oh but we also need $\alpha$ to be very small in the random graph. }\xnote{I don't think this situation can happen.}

\subsection{Proof of Lemma~\ref{lem:expansion_larger} and Its Generalization}\label{sec:proof_lem_expansion}
We prove the first lower bound $|\Gamma(S)| \ge (1-k \eps) D \cdot |S| - O(D \cdot k^2)$ by a probabilistic argument.
Suppose $|\Gamma(S)|$ is small. Then we consider a random subset $T$ of size $\alpha N$ in $S$ and upper bound 
\[
\big| \Gamma(T) \big| \le D \cdot |T| - \left(|S| \cdot D - \big |\Gamma(S) \big| \right) \cdot \frac{|T| \cdot (|T|-1)}{|S| \cdot (|S| - 1)}.
\]
The reason is that consider any neighbor $u$ of $S$ with more than $1$ neighbors in $S$, say $u$ has $d_S(u)$ neighbors in $S$ which are $v_1,\ldots,v_{d_S(u)}$. Since \[
1\{u \in \Gamma(T)\} \le \sum_{i=1}^{d_S(u)} 1\{v_i \in T\} - \sum_{i=2}^{d_S(u)} 1\{v_1\in T\} \cdot 1\{v_i\in T\}
\]
Then we take expectation (over $T$) on both sides:
\begin{equation}\label{eq:single_vertex}
\E_T\left[1\{u \in \Gamma(T)\}\right] \le d_S(u) \cdot \frac{|T|}{|S|} - (d_S(u)-1) \cdot \frac{|T| \cdot (|T|-1)}{|S| \cdot (|S|-1)}.
\end{equation}
%\xnote{probably need to explain what the notation $1\{\}$ means.}
At the same time, we know 
\begin{equation}\label{eq:two_sums}
    \sum_{u \in \Gamma(S)} d_S(u)=D \cdot |S| \text{ and } \sum_{u \in \Gamma(S)} (d_S(u)-1) = D \cdot |S| - |\Gamma(S)|
\end{equation}

Then we consider the summations over $u\in \Gamma(S)$ on the two sides of \eqref{eq:single_vertex}: By linearity of expectation, it becomes
\begin{align*}
\E_T\left[ \big| \Gamma(T) \big| \right] & \le \sum_u d_S(u) \cdot \frac{|T|}{|S|} - \sum_u (d_S(u)-1) \cdot \frac{|T| \cdot (|T|-1)}{|S| \cdot (|S|-1)} \tag{plug the two summations of \eqref{eq:two_sums}}\\
 & =  D \cdot |T| - \bigg(|S| \cdot D - \big| \Gamma(S) \big| \bigg) \cdot \frac{|T| \cdot (|T|-1)}{|S| \cdot (|S|-1)} \\
& = |T| \cdot D \left( 1 - \bigg( 1 - \frac{\big| \Gamma(S) \big|}{D \cdot |S|} \bigg) \cdot \frac{ |T|-1}{ |S| - 1} \right) .
\end{align*}
On the other hand, this is at least $|T| \cdot D (1-\eps)$ by the expander property. So we have
\[
1-\eps \le 1 - \bigg( 1 - \frac{\big| \Gamma(S) \big|}{D \cdot |S|} \bigg) \cdot \frac{ |T|-1}{ |S| - 1} \quad  \Leftrightarrow \quad \eps / \frac{ |T|-1}{ |S| - 1} \ge 1 - \frac{\big| \Gamma(S) \big|}{D \cdot |S|}.
\]
This gives
\[
\frac{\big| \Gamma(S) \big| }{D \cdot |S|} \ge 1 - \eps \cdot \left( k + \frac{k-1}{\alpha N - 1} \right).
\]
We rewrite it to obtain
\[
\big| \Gamma(S) \big| \ge (1-\eps k) \cdot D |S| - \eps \frac{(k-1)}{\alpha N-1} \cdot D |S| \ge (1-\eps k) \cdot D |S| - 2 \eps D k^2.
\]
%\xnote{Better simplify it to be the same form as in Lemma 2.3}
\paragraph{Generalization.}
Next we consider an alternative way to compute $\E[\Gamma(T)]$. The main motivation is to prove a better bound  than the above one for $k>1/2\eps$.

Let us fix $S$ of size $k \alpha N$ and consider $\Gamma(S)$. Since the total degree of $S$ is $D \cdot k \alpha N$, let $\beta_j \cdot D \alpha N$ denote the number of vertices in $\Gamma(S)$ with exactly $j$ neighbors in $S$. By the definition,
\[
\Gamma(S)=(\beta_1 + \cdots + \beta_k) \cdot D \alpha N.
\]
Moreover, By summing up the degrees, we have
\[
\beta_1 + 2 \beta_2 + \cdots + D_R \cdot \beta_{D_R} = k.
\]
Now we consider 
\begin{equation}\label{eq:exp_neighbors}
\E[\Gamma(T)]=\sum_{i \in \Gamma(S)} \Pr_{T \sim {S \choose \alpha N}}[T \cap \Gamma(i) \neq \emptyset],
\end{equation}
which is at least $(1-\eps)D \alpha N$ from the property of expansion.

For each $i \in \Gamma(S)$ with exactly $j$ edges to $S$, 
\[
\Pr_{T}[T \cap \Gamma(i) \neq \emptyset] = 1 - \Pr_{T}[T \cap \Gamma(i) = \emptyset] = 1 - \frac{(|S|-|T|) \cdot (|S|-|T|-1) \cdots (|S|-|T|- j + 1)}{|S| \cdot (|S|-1) \cdots (|S|-j+1)}.
\]
Since we assume $k,D,D_R,\alpha,\eps=\Theta(1)$ and $|S|,|T|=\Theta(N)$, we simplify this probability to
\[
\Pr_T[T \cap \Gamma(i) \neq \emptyset]=1 - (\frac{|S|-|T|}{|S|})^j + \frac{O(j)}{|S|} = 1 - (1- \frac{1}{k})^j + O(\frac{D_R}{k \alpha N})
\]
and omit the error term $O(\frac{D_R}{k \alpha N})$ for ease of exposition. %\knote{I'm not quite sure about the term $O(\frac{j}{|S|})$ because I can only get $O(\frac{j^2}{|S|})$, though this doesn't affect the result too much}
Plugging this into Eq\eqref{eq:exp_neighbors}, we have the inequality
\[
\sum_{j=1}^{D_R} \left[ 1 - (1-\frac{1}{k})^j \right] \cdot \beta_j \ge (1-\eps).
\]
To lower bound $\Gamma(S)$, we rewrite all constraints as a linear programming:
\begin{align}
    & \min \beta_1 + \cdots + \beta_{D_R} \notag \\
    \text{subject to } & \beta_1 + 2 \cdot \beta_2 + \cdots D_R \cdot \beta_{D_R} = k \label{eq:sum_deg}\\
    & \sum_{j=1}^{D_R} \left[ 1 - (1-\frac{1}{k})^j \right] \cdot \beta_j \ge (1-\eps) \label{eq:expected_neighbors}\\
    & \beta_j \ge 0, \quad \forall j. \notag
\end{align}

Next we prove that (1) the minimum is achieved by $\beta^*$ with at most two non-zero entries; (2) more importantly, if $\beta^*$ has exactly two non-zero entries, they must be adjacent. We could consider the dual of the above linear program
\begin{align}
    & \max k \cdot x_1 + (1-\eps) \cdot x_2 \notag\\
    \text{subject to } & j \cdot x_1 + [1-(1-1/k)^j] \cdot x_2 \le 1 \quad \forall j=1,\ldots,D_R, \label{eq:dual_constraint}\\
    & x_2 \ge 0. \notag
\end{align}
The key property is that those coefficients $\left[ 1 - (1-\frac{1}{k})^j \right]$ in constraints \eqref{eq:expected_neighbors} and \eqref{eq:dual_constraint} constitute a strictly concave curve. Namely,
for any $j$,
\begin{equation}\label{ineq:concave}
    \left[ 1 - (1-\frac{1}{k})^j \right] - \left[ 1 - (1-\frac{1}{k})^{j-1} \right] > \left[ 1 - (1-\frac{1}{k})^{j+1} \right] - \left[ 1 - (1-\frac{1}{k})^{j} \right].
\end{equation}
Inequality \eqref{ineq:concave} is  true for any $j>1$ since $1+(1-1/k)^2 > 2 (1-1/k)$. 
%\xnote{I don't understand what this sentence means. Doesn't the inequality just follow from the concaveness of the function $(1-1/k)^x$?} \textcolor{blue}{Xue: Just removed the 1st sentence.}
For contradiction, if $\beta^*$ is supported on three entries say $\ell_1<\ell_2<\ell_3$, we have $j \cdot x_1 + [1-(1-1/k)^j] \cdot x_2=1$ for $j=\ell_1,\ell_2,\ell_3$ by the slackness of linear programming. However, the two equations for $j=\ell_1, \ell_3$ indicate $\ell_2 \cdot x_1 + [1-(1-1/k)^{\ell_2}] \cdot x_2 > 1$. While their linear combination with coefficients $\frac{\ell_3-\ell_2}{\ell_3-\ell_1}$ and $\frac{\ell_2-\ell_1}{\ell_3-\ell_1}$ equals $1$ on RHS from these two equations, this combination on LHS  is strictly less than $\ell_2 \cdot x_1 + [1-(1-1/k)^{\ell_2}] \cdot x_2$ in the equation of $j=\ell_2$ by $x_2 \ge 0$ and the concavity of $[1-(1-1/k)^j]$. Similarly, if $\beta^*$ is supported on two non-adjacent entries say $\ell_1<\ell_2$, we have two equations for $j=\ell_1$ and $j=\ell_2$ separately. However, the solution of $(x_1,x_2)$ which satisfies these two equations violates other constraints in the dual --- one can show $(\ell_1+1) \cdot x_1 + [1-(1-1/k)^{\ell_1+1}] \cdot x_2 > 1$ by the same argument again.

Now we prove the 2nd lower bound. Let us consider the dual probram where $x_1$ and $x_2$ are determined by \eqref{eq:dual_constraint} with $j=2$ and $j=3$:
\begin{align*}
    2 x_1 + (2/k - 1/k^2) x_2 & = 1, \\
    3 x_1 + (3/k - 3/k^2 + 1/k^3) x_2 & = 1.
\end{align*}
We get $x_1=\frac{2-1/k-k}{3-2/k}$ and $x_2=\frac{k^2}{3-2/k}$ such  that the objective value is $k \cdot x_1 + (1-\eps)x_2= \frac{2k-1-\eps k^3}{3-2/k}=\frac{k}{2} (1-\frac{2k\eps - 1}{3-2/k})$ (we omit the error $k \cdot O(\frac{D_R}{k \alpha N})$). One can verify this pair of $(x_1,x_2)$ is feasible: \eqref{eq:dual_constraint} is true for $j=1$ since $x_1+\frac{1}{k} x_2=\frac{2-1/k}{3-2/k} \le 1$ given $k>1$. \eqref{eq:dual_constraint} is also true for $j>3$ by the concavity again. So this shows $\min \beta_1 + \cdots + \beta_{D_R}$ of the original LP is at least $\frac{k}{2} (1-\frac{2k\eps - 1}{3-2/k})$.

In fact, the first lower bound is obtained from the dual where $x_1$ and $x_2$ are determined by  \eqref{eq:dual_constraint} with $j=1$ and $j=2$. Finally we remark that when $1-\eps \in \bigg[ \frac{k}{j+1} \cdot [1-(1-1/k)^{j+1}], \frac{k}{j} \cdot [1-(1-1/k)^j] \bigg]$, the two non-zero entries of $\beta^*$ in the primal are $\beta^*_j$ and $\beta^*_{j+1}$. This indicates that $\frac{k}{2} (1-\frac{2k\eps - 1}{3-2/k})$ is the optimum value of the linear program when $1-\eps \in [1 - \frac{1}{k} + \frac{1}{3k^2}, 1 - \frac{1}{2k}]$, or equivalently, $1/2\eps \le k \le \frac{2/3}{1 - \sqrt{1 - 4\eps/3}}=\frac{1+\sqrt{1-4\eps/3}}{2\eps}$.

\section{Decoding from Erasures, and Finding Possible Corruptions}\label{sec:find}
%\xnote{Is the title accurate? I think we are actually not finding erasures (there are no erasures here), but are finding the positions of the corruptions.}
%\knote{I also feel the title is not accurate, though "Find Erasures" is used by Viderman. So we should use "Finding Positions of Corruptions"? }
First, we show that by combining Lemma \ref{lem:expansion_larger} and Theorem \ref{thm:vidermandecfromerasures}, we can also get a stronger result for decoding from erasures.  %\textcolor{red}{Xue: I feel it may be better to state it as a corollary?}\knote{OK}
\begin{theorem}
\label{thm:decfromerasures}
Consider an expander code defined by an $(\alpha N, (1-\eps)D)$ expander $G$.
For every  $\xi > 0$, there is a linear-time algorithm that   corrects $\frac{1-\xi}{2\eps}\alpha N$ erasures.
%Moreover, it can be computed in logarithmic time using a linear number of parallel processors. 
%\xcnote{(1) Do we need the parallel version in this work (including THM 1.2)? (2) Why not we state it for the largest $k$?}
%\xnote{Running time?}
\end{theorem}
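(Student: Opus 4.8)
The plan is to reduce decoding from $\frac{1-\xi}{2\eps}\alpha N$ erasures to decoding from $\alpha' N$ erasures in an expander with expansion parameter above $1/2$, and then invoke Theorem~\ref{thm:vidermandecfromerasures} as a black box. The key observation is Lemma~\ref{lem:expansion_larger}: an $(\alpha N, (1-\eps)D)$-expander is also, up to lower-order additive terms, a $(k\alpha N, (1-k\eps)D)$-expander for every $k > 1$ with $k\alpha N \le N$. So I would choose $k$ so that $1-k\eps$ is safely above $1/2$, say $k = \frac{1-\xi}{2\eps}$, which gives $1-k\eps = 1-\frac{1-\xi}{2} = \frac{1+\xi}{2} = \frac12 + \frac{\xi}{2}$. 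Setting $\alpha' := k\alpha = \frac{1-\xi}{2\eps}\alpha$, the graph $G$ is (essentially) an $(\alpha' N, (\frac12 + \frac{\xi}{2})D)$-expander, i.e. an $(\alpha' N, (\frac12 + \xi')D)$-expander with $\xi' = \xi/2 > 0$.

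First I would verify that this choice of $k$ is legitimate, i.e. that $\alpha' N = \frac{1-\xi}{2\eps}\alpha N \le N$; this follows from Fact~\ref{fact:relations_parameters}, which gives $\frac{\alpha}{4\eps} \le \frac{1}{D_R} + o(1) < 1$, hence $\frac{\alpha}{2\eps} < 1$ and a fortiori $\frac{(1-\xi)\alpha}{2\eps} < 1$ for the relevant range. Next I would apply Lemma~\ref{lem:expansion_larger} with this $k$ to conclude that every left subset of size at most $\alpha' N$ has expansion at least $(1-k\eps)D|S| - O(\eps D k^2) = (\frac12 + \frac{\xi}{2})D|S| - O_{\xi,\eps}(1)$. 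The additive $O_{\xi,\eps}(1)$ loss only affects lower-order terms, so for $N$ large enough (absorbing the additive slack into a slightly smaller constant, or into the $\xi$) the graph genuinely satisfies $(\alpha'' N, (\frac12 + \xi'')D)$-expansion for some $\alpha''$ arbitrarily close to $\alpha' = \frac{1-\xi}{2\eps}\alpha$ and some $\xi'' > 0$. Then Theorem~\ref{thm:vidermandecfromerasures} applied to this expansion gives a linear-time algorithm correcting $\alpha'' N \ge \frac{1-\xi}{2\eps}\alpha N$ erasures (after renaming constants, since $\xi$ was arbitrary we may as well state the bound with $\frac{1-\xi}{2\eps}\alpha N$ and a fresh $\xi$). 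The running time is linear because it is exactly the running time of the algorithm in Theorem~\ref{thm:vidermandecfromerasures} run on $G$.

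The only genuinely delicate point — and the step I would expect to need the most care — is bookkeeping the additive error terms and the fact that $k$ need not make $k\alpha N$ an integer, together with making sure the "slightly worse expansion parameter" produced by the $O(\eps D k^2)$ loss in Lemma~\ref{lem:expansion_larger} is still strictly above $1/2$. This is handled by noting that $\xi$ is an arbitrary positive constant, $D, \eps, k$ are all constants, so the additive loss is $O(1)$ and hence negligible against $\alpha'' N$; one simply chooses $N$ large and shaves an arbitrarily small constant off the radius, which is already absorbed in the statement's $\xi$. Everything else is immediate from the two cited results, so the proof is short: invoke Fact~\ref{fact:relations_parameters} to license $k = \frac{1-\xi}{2\eps}$, invoke Lemma~\ref{lem:expansion_larger} to upgrade the expansion, and invoke Theorem~\ref{thm:vidermandecfromerasures} to decode.
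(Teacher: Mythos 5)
Your proposal is correct and follows essentially the same route as the paper's proof: both choose $k=\frac{1-\xi}{2\eps}$, invoke Lemma~\ref{lem:expansion_larger} to upgrade the expansion to $(\frac{1}{2}+\Omega(\xi))D$ up to an $O_\eps(1/N)$ additive loss, verify $k\alpha\le 1$ via Fact~\ref{fact:relations_parameters}, and then apply Theorem~\ref{thm:vidermandecfromerasures} as a black box. The only (cosmetic) omission is the paper's one-line remark that if $\frac{1-\xi}{2\eps}\le 1$ the claim is already trivial from Theorem~\ref{thm:vidermandecfromerasures}, but this does not affect correctness.
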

\begin{proof}
By Lemma \ref{lem:expansion_larger}, for any $1 \leq k \leq \frac{1}{\alpha}$ the expander is also a $(k\alpha N, (1-k\eps)D-O_\eps(\frac{1}{ N}))$ expander.
Thus if $1 - k\eps - O_\eps(1/N) >1/2 +  \xi'$ for a $\xi' > 0$, then by Theorem \ref{thm:vidermandecfromerasures}, one can decode from $k\alpha N$ erasures, using the same algorithm.
This means $k $ can be as large as $ \frac{1-\xi}{2\eps} $ for any   $\xi > 0$. Notice that if $ \frac{1-\xi}{2\eps} \leq 1 $ the theorem is trivially true by Theorem \ref{thm:vidermandecfromerasures}, and $ \frac{1-\xi}{2\eps} \leq \frac{1}{\alpha} $ by Fact~\ref{fact:relations_parameters}.
\end{proof}

Next, we provide a simple algorithm to find a super set of the corruptions, which is adapted from a similar algorithm in \cite{Viderman13b}.
Let $G$ be an $(\alpha N, (1-\eps)D)$ expander with $N$ left vertices $M$ right vertices and left degree $D$. 
Let $\mathcal{C}$ be an expander code defined by $G$. 
%\textcolor{blue}{Xue: Isn't $\mathcal{C}$ defined by $G$? One more comment about notation: I prefer $C$ to denote a codeword in $\mathcal{C}$ and $y \in \{0,1\}^N$ as the corrupted string.}
The input $y$ is a corrupted message of a codeword $C_0 \in \mathcal{C}$.
Let $F$ be the set of corruptions in $y$ compared to $C_0$. %\xnote{What is $C$?}.
We use Algorithm \ref{alg:FindingErasure} to find a super set of $F$ given certain parameters.
\begin{algorithm}[h]
\caption{The basic algorithm finding a super set of corruptions}\label{alg:FindingErasure}
\begin{algorithmic}[1]
\Function{Find}{$y \in \mathbf{F}_2^N$ and $\Delta \in \mathbf{R}$}
\State $L \gets \emptyset$ 
\State $R \gets \{\text{unsatisfied parity checks of } y\}$
\State $h \gets (1-2\Delta)D$
\While{$\exists i \in V_L \setminus L$ s.t. $|\Gamma(i) \cap R| \ge h  $}
\State $L \gets L \cup \{i\}$
\State $R \gets R \cup \Gamma(i)$
\EndWhile
\State \Return $L$
\EndFunction
\end{algorithmic}
\end{algorithm}

%\xnote{Need to specify some parameters of the corresponding graph. For example, what is $D$?}\knote{Sure, it's added}

By a similar proof to that of proposition 4.3 in \cite{Viderman13b}, we   have the following properties. 
%\textcolor{red}{What's the definition of $\Gamma^1$?}\knote{unique neighbors. I added the definition in the preliminary. Previously we use $N()$ to denote neighbors but $N$ is also used to denote $|V_L|$, so I temporarily changed it to $\Gamma()$}
\begin{lemma}
\label{lemma:all_errors_in_L}
If $|\Gamma^1(S)|  \geq (1-2\Delta)D |S|$ for any non-empty $S \subseteq F$, then $F$ is contained in $L$ after the while loop.
\end{lemma}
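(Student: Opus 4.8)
The plan is to argue by contradiction: suppose the while loop terminates with some corrupted vertex left out of $L$, i.e., $S := F \setminus L \neq \emptyset$. The strategy is to show that at least one vertex of $S$ must in fact have enough neighbors in the final set $R$ to still satisfy the loop condition $|\Gamma(i) \cap R| \geq h = (1-2\Delta)D$, contradicting termination. To do this I would first record the basic invariant of the algorithm: throughout the execution, $R$ always contains every parity check that is unsatisfied by the ``current'' word obtained from $y$ by treating the vertices in $L$ as erased/flipped; more precisely, $R \supseteq R_0 \cup \Gamma(L)$ where $R_0$ is the initial set of unsatisfied checks. In particular, at the end we have $\Gamma(L) \subseteq R$.

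The key observation is about unique neighbors of $S$. Consider a unique neighbor $u \in \Gamma^1(S)$ — a right vertex adjacent to exactly one vertex of $S$. I claim such a $u$ must lie in $R$ at the end of the loop. Indeed, $u$'s neighborhood on the left meets $F$ only inside $S$ (since $u$ touches $S$ exactly once, and if it touched $F \setminus S = F \cap L$ it would be in $\Gamma(L) \subseteq R$ anyway — actually I need to handle both cases): either $u \in \Gamma(L)$, in which case $u \in R$ directly; or $u \notin \Gamma(L)$, meaning none of $u$'s left-neighbors was ever added to $L$, so the only corrupted neighbor $u$ sees is the single vertex of $S$, hence $u$ is an unsatisfied check of $y$ (its parity is flipped by exactly that one corruption), so $u \in R_0 \subseteq R$. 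Either way $\Gamma^1(S) \subseteq R$. Therefore, summing over $S$,
\[
\sum_{i \in S} |\Gamma(i) \cap R| \;\geq\; |\Gamma^1(S)| \;\geq\; (1-2\Delta)D|S| \;=\; h|S|,
\]
where the first inequality uses that each unique neighbor of $S$ is counted exactly once on the left side and lies in $R$. By averaging, some $i \in S$ has $|\Gamma(i) \cap R| \geq h$. But $i \in S \subseteq V_L \setminus L$, so the while loop could not have terminated — contradiction. Hence $S = \emptyset$ and $F \subseteq L$.

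The step I expect to be the main obstacle, and the one to write carefully, is the claim $\Gamma^1(S) \subseteq R$ — specifically pinning down the loop invariant precisely enough that the dichotomy ``$u \in \Gamma(L)$ or $u$ is initially unsatisfied'' is airtight. One has to be careful that adding a vertex to $L$ only ever adds things to $R$ (never removes), so membership in $R$ is monotone, and that a unique neighbor of $S$ whose left-neighbors are all outside $L$ genuinely starts out as an unsatisfied check of $y$: this uses that $y$ differs from the codeword $C_0$ exactly on $F$, and such a check sees exactly one flipped bit, so its parity is violated. I would also remark (to match Property (*) and the intended use) that the same contradiction argument shows one may assume the vertices of $F$ are added to $L$ first: whenever $F \not\subseteq L$ at an intermediate step, the averaging argument above produces an eligible vertex inside $F \setminus L$, so the algorithm is free to pick it next. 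This last point is a minor elaboration on the same estimate rather than a new difficulty.
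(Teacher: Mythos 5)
Your proposal is correct and uses essentially the same argument as the paper: the key insight in both is that every unique neighbor of $S = F \setminus L$ must lie in $R$ (either because it is in $\Gamma(L) \subseteq R$, or because it sees exactly one corrupted bit and is therefore an initially unsatisfied check), which then forces some vertex of $S$ to satisfy the loop condition $|\Gamma(i)\cap R|\ge h$. The only cosmetic difference is that the paper first picks a single vertex $u \in S$ with $\ge (1-2\Delta)D$ unique neighbors and then exhibits one bad neighbor $v\notin R$, whereas you prove $\Gamma^1(S)\subseteq R$ up front and close with an averaging sum; the underlying reasoning is identical.
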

\begin{proof}
Suppose not, then let $B $ be $ F \setminus L$  after running the algorithm, $B\neq \emptyset$.
Since $B \subseteq F$, we have $|\Gamma^1(B)| \geq (1-2\Delta)D|B|$. 
%\textcolor{red}{Why is this true? What's the definition of $\Delta$?}\knote{$\Delta$ is the input parameter to the function and we use it to set $h = (1-2\Delta)D$. This is true because the "if" part of the lemma}
So there is a vertex $u\in B$ such that  $u$ has at least $(1-2\Delta)D$ unique neighbors in $\Gamma(B)$.
%\textcolor{blue}{The notation $\Gamma^1(u)$ is incorrect here.}
We know that $|\Gamma(u) \cap R| < (1-2\Delta)D$, because otherwise $u$ should be added to $R$ then.
Thus there has to be a neighbor $v$ of $u$, such that $v$ is not in $R$ and is only connected to one vertex in $B$, which is $u$.
As $F\setminus B \subseteq L$,  we know $\Gamma(F\setminus B) \subseteq R$.
So $v$ connects to one vertex, i.e., $u$ in $F$. 
%\xnote{Can you explain the last part more? Why can't $v$ be connected to any other  bit in $F$?}.
%\knote{By saying ``any other bit in $F$'',  I mean other than $u$. Because of the definition of $B$ in the first sentence, $F\setminus B$ is a subset of $L$. So if $v$ also has a neighbor in $F \setminus B$, then $v$ has to be in $R$, but we already showed $v$ is not in $R$...}\xnote{I don't think the first sentence can imply $F\setminus B$ is a subset of $L$. }
%\knote{Oh, that sentence indeed needs to be modified a bit.}
%\knote{I will reorganize the proof to emphasize this}
This is not possible since then $v$ has to be unsatisfied and thus it is already in $R$. 
\end{proof}

\begin{lemma}
\label{lemma:L'allsame}
In every iteration, if there are multiple vertices that can be added to $L$ and we choose one of them arbitrarily, then we always get the same $L$ after all the iterations.
\end{lemma}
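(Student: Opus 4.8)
The plan is to prove Lemma~\ref{lemma:L'allsame} by a standard ``confluence'' / diamond-style argument: show that the order in which eligible vertices are added to $L$ is irrelevant because once a vertex becomes eligible it stays eligible. First I would observe that the set $R$ only grows during the execution: each iteration replaces $R$ by $R \cup \Gamma(i)$, so $R$ is monotonically non-decreasing with respect to inclusion. Consequently the quantity $|\Gamma(i) \cap R|$ is non-decreasing for every fixed left vertex $i$ throughout the run. This means that if at some point a vertex $i \in V_L \setminus L$ satisfies the threshold condition $|\Gamma(i) \cap R| \ge h$, then it continues to satisfy it (with respect to the current, larger $R$) until it is actually added to $L$; eligibility is never destroyed by adding some other vertex first.

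Next I would make this into a formal invariant. Let $L_1, R_1$ and $L_2, R_2$ denote the outputs produced by two different choice sequences. I would argue $L_1 \subseteq L_2$ (and by symmetry $L_2 \subseteq L_1$, giving equality), together with the companion fact $R_1 = R_2$ once the sets of left vertices agree, since $R$ is determined by $L$ and the initial set of unsatisfied checks: indeed at termination $R = R_0 \cup \Gamma(L)$ where $R_0$ is the set of unsatisfied parity checks of $y$. To show $L_1 \subseteq L_2$, suppose not, and let $i$ be the first vertex (in the order run 1 added its vertices) that lies in $L_1 \setminus L_2$. All vertices added to $L_1$ before $i$ are in $L_2$ by minimality, so at the moment run 1 added $i$, its current $R$ is contained in $R_2^{\text{final}} = R_0 \cup \Gamma(L_2)$. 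Since $i$ met the threshold against that smaller $R$, it also meets the threshold $|\Gamma(i) \cap R_2^{\text{final}}| \ge h$; but then run 2 cannot have terminated with $i \notin L_2$, because its while loop only stops when no vertex in $V_L \setminus L_2$ meets the threshold --- contradiction. Hence $L_1 \subseteq L_2$, and symmetrically $L_2 \subseteq L_1$, so $L_1 = L_2$.

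I would also note, for completeness, that the process terminates regardless of the choices: $L$ strictly increases in size each iteration and is bounded by $|V_L| = N$, so at most $N$ iterations occur along any choice sequence, which legitimizes speaking of ``the $L$ after all the iterations.''

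I do not expect a serious obstacle here; the only thing to be careful about is the bookkeeping of ``at which moment'' the containment $R^{\text{run 1, current}} \subseteq R_2^{\text{final}}$ holds --- this relies precisely on the first-difference vertex $i$ having all its predecessors in run 1 already present in $L_2$, and on monotonicity of $R$ in $L$. Once that is phrased cleanly via the invariant ``at every step of run 1, the current $R$ is a subset of $R_0 \cup \Gamma(L_2^{\text{final}})$'' (proved by induction on the steps of run 1, using the minimality of the hypothetical first difference to restart the induction if needed, or more simply by taking $i$ to be a first difference), the argument closes immediately. So the ``hard part'' is essentially just stating the right invariant; everything else is monotonicity.
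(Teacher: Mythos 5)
Your proof is correct and takes essentially the same route as the paper: a first-difference argument combined with the monotonicity of $R$ under inclusion, showing that the first vertex $u$ added by run 1 but absent from $L_2$ would still be eligible against run 2's final $R$, contradicting termination. Your write-up is marginally more explicit than the paper's (spelling out $R = R_0 \cup \Gamma(L)$ and the invariant), but the underlying argument is identical.
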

\begin{proof}
Consider two different procedures where they choose different vertices to add to $L$ in their corresponding iterations.
Suppose that they get two different $L$, say $L_1$ for the first procedure and $ L_2$ for the second.
Without loss of generality assume $L_1 \setminus L_2 \neq \emptyset$.
Let $u$ be the first vertex in  $L_1 \setminus L_2$ that is added in procedure 1.
Then all the vertices in $L_1$ added before $u$, denoted by the set $A$, is also contained in $L_2$. Since vertices can only be added to the set $R$, 
for procedure 2 we should always have $|\Gamma(\{u\})\cap R| \geq h$ when $ A \subseteq L_2$ and $u \notin L_2$. 
Thus $u$ has to be added to $L_2$ in procedure 2.
This is a contradiction.
Therefore $L_1 = L_2$.
\end{proof}

\begin{lemma}
\label{lemma:FAddtoLFirst}
If $ |\Gamma^1(S)|  \geq (1-2\Delta)D |S| $ for any non-empty $S \subseteq F$,
then there exists a sequence of choices of the algorithm such that all the elements of $F$ can be added to $L$ in the first $|F|$ iterations.
\end{lemma}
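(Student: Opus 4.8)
The plan is to leverage the two lemmas just proven, Lemma~\ref{lemma:all_errors_in_L} and Lemma~\ref{lemma:L'allsame}, together with the observation that every vertex of $F$ must eventually be eligible to join $L$. First I would recall that by Lemma~\ref{lemma:all_errors_in_L}, under the hypothesis $|\Gamma^1(S)| \geq (1-2\Delta)D|S|$ for all non-empty $S \subseteq F$, the final set $L$ produced by the while loop contains all of $F$. By Lemma~\ref{lemma:L'allsame}, this final set $L$ is independent of the choices made along the way. So the task reduces to showing there is \emph{some} valid execution order in which the $|F|$ vertices of $F$ are exactly the first $|F|$ vertices added.

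The key step is to show that as long as not all of $F$ has been added to $L$, there is always a vertex of $F \setminus L$ that is currently eligible (i.e., has at least $h = (1-2\Delta)D$ neighbors in the current $R$). I would argue this by essentially rerunning the argument inside the proof of Lemma~\ref{lemma:all_errors_in_L}, but applied to the current partial state rather than the terminal state. Concretely, suppose at some point we have added a subset $L' \subseteq F$ (and nothing outside $F$), with $B := F \setminus L' \neq \emptyset$. Since $B \subseteq F$, we have $|\Gamma^1(B)| \geq (1-2\Delta)D|B|$, so by averaging some $u \in B$ has at least $(1-2\Delta)D$ unique neighbors of $B$. Each such unique neighbor $v$ is connected to exactly one vertex of $B$, namely $u$; since $F \setminus B = L' \subseteq L$ gives $\Gamma(F \setminus B) \subseteq R$, and $v \notin \Gamma(F\setminus B)$ forces $v$'s only neighbor in $F$ to be $u$, the parity check $v$ is unsatisfied (it sees exactly one corrupted bit), hence $v \in R$ from the initialization of $R$. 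Therefore $u$ has at least $(1-2\Delta)D$ neighbors in $R$, so $u \in V_L \setminus L$ is eligible and we may choose to add it next. This shows that, greedily always picking a vertex of $F$ when one is available, we can add all of $F$ within the first $|F|$ iterations; after that the loop continues (or halts) as usual. Combined with Lemma~\ref{lemma:L'allsame}, the resulting $L$ is the same as in any other execution, so this establishes the lemma.

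I do not anticipate a serious obstacle here: the statement is really a bookkeeping consequence of the unique-neighbor counting argument already used for Lemma~\ref{lemma:all_errors_in_L}. The one point to be careful about is that the unique-neighbor argument must be applied to the \emph{partial} configuration (with $L'$ playing the role that the terminal $L$ plays in Lemma~\ref{lemma:all_errors_in_L}), and one must check that the inclusion $\Gamma(F \setminus B) \subseteq R$ still holds at that intermediate stage --- which it does, because $F\setminus B = L' \subseteq L$ and every time a vertex is added to $L$ its neighborhood is added to $R$, while $R$ only grows. Everything else follows verbatim from the earlier argument, so the proof is short.
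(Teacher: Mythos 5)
Your proof is correct and follows essentially the same route as the paper's: a greedy/inductive argument showing that as long as $B = F \setminus L \neq \emptyset$, the unique-neighbor lower bound $|\Gamma^1(B)| \geq (1-2\Delta)D|B|$ gives by averaging a vertex $u \in B$ with $\geq (1-2\Delta)D$ unique neighbors of $B$, each of which must already lie in $R$ (either because it sees exactly one corrupted bit and is unsatisfied at initialization, or because it is a neighbor of $L' = F \setminus B \subseteq L$). The only cosmetic difference is that you spell out this two-case dichotomy explicitly, whereas the paper compresses it into a one-line contrapositive; the appeal to Lemmas~\ref{lemma:all_errors_in_L} and~\ref{lemma:L'allsame} in your opening paragraph is unnecessary framing since the lemma statement already only asks for the existence of some execution order.
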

\begin{proof}
We use inductions to show that in each of the first $|F|$ iterations, there exists an element in $F\setminus L$ which can be added to $L$.

In the first iteration, since $|\Gamma^1(F)| \geq (1-2\Delta)D |F|$, there exists $u\in F$ such that $|\Gamma(u) \cap \Gamma^1(F)| \ge h$ for $h=(1-2\Delta)D$. 
Notice that all these neighbors should be unsatisfied and thus in $R$.
So $u$ can be added to $L$.

Assume in each of the first $i-1<|F|$ iterations, the algorithm can  find a distinct element in $F$ to add to $L$.
In the $i$-th iteration, let $B = F\setminus L$.
Notice that $|F'| = |F|-(i-1)\geq 1$.
Hence $|\Gamma^1(F')|\geq (1-2\Delta)D|F'|$.
Thus there exists $u \in F'$ such that  $|\Gamma(u) \cap \Gamma^1(F')| \geq (1-2\Delta)D = h$. 
%\textcolor{blue}{Xue: (1) We shall not use $\Gamma(u)$ here. (2) $h$ shall be defined more properly in Algorithm 1.}
Notice that all these neighbors of $u$ in $\Gamma^1(F')$ have to be in $R$, since otherwise there is a neighbor not in $R$, which   also corresponds to an unsatisfied check, contradicting that all the unsatisfied checks are in $R$.
Hence $|\Gamma(u) \cap R | \geq (1-2\Delta)D = h$.
So $u$ can be added to $L$.
\end{proof}

The above lemmas imply that as long as $ |\Gamma^1(S)|  \geq (1-2\Delta)D |S| $ for any non-empty $S \subseteq F$, when analyzing Algorithm \ref{alg:FindingErasure}, we can assume without loss of generality that the algorithm first adds all corrupted bits into the set $L$.

%%%%%%%%%%%%%%%%%%%%%%%%%%%%%%%%%%%%%%%%%%%%%%%
%%%%%%%%%%%%%%%%%%%%%%%%%%%%%%%%%%%%%%%%%%%%%%%

\section{Unique Decoding by Guessing Expansion with Iterative Flipping }\label{sec:guess_flip}
Let $\eps \in (0,  1/4)$ be an arbitrary constant in this section. We first show an algorithm which has a decoding radius $(1-\eps)\alpha N$. 
Then by using Lemma \ref{lem:expansion_larger}, we show that the algorithm achieves decoding radius approximately $\frac{3 \alpha}{16\eps} N$.

The basic idea of the algorithm is to guess the expansion of the set of corrupted entries in the algorithm, say $(1-\gamma)D$.
Assume we can correctly guess $\gamma$.
For the case of $\gamma \geq \frac{2}{3}\eps$, we use a procedure similar to \cite{Viderman13b} to find a super set of possible corruptions, and then decode from  erasures. In this case we show that $(1-\eps)\alpha N$ errors can be corrected.
For the case of $\gamma < \frac{2}{3}\eps$, we first consider the left subset which contains all vertices with at least $1-3\gamma$ unsatisfied checks, and show that this set contains (a constant fraction) more corrupted bits than correct bits. Thus we can flip all bits in this set and reduce the number of errors by a constant fraction. The algorithm then repeats this step for a constant number of times, until the number of errors is small enough, where we can apply an existing algorithm to correct the remaining errors.

We describe our algorithm in Algorithm~\ref{alg:MultipleGuessWithFlips} and then state our main result of this section.
%For ease of exposition, Algorithm~\ref{alg:decoding_large_eps} does exact enumerations, thus it is only a polynomial time algorithm. 
%To make it a linear time algorithm we need to do approximate enumerations. See Remark~\ref{remark:EnumerationApprox}.
\begin{algorithm}[H]
\caption{Decoding Algorithm for $\eps = 1/4- \beta$}\label{alg:MultipleGuessWithFlips}
\begin{algorithmic}[1]
\Function{Final Decoding}{$y \in \mathbf{F}_2^n, \alpha  \in \mathbf{R}, \eps \in \mathbf{R}$} //The main procedure.
 \State Enumerate $\gamma_1, \gamma_2, \ldots, \gamma_l$ where $\ell = O(\log_{1-O(\beta)}\frac{1}{3})$.
 For every $ i\in [\ell]$,   $\gamma_i$ is enumerated   from the set $\{ \eta, 2\eta, \ldots, \lceil\frac{1}{\eta} \rceil \eta \}$, where $\eta:=\beta/100$.
% \textcolor{red}{Xue: Looks strange. I thought $\eta$ depends on $1/4-\eps=\beta$.}
% \knote{Yes it does.}\textcolor{red}{Xue: Then we shall clarify this ``smal enough constant" depends on $\beta$.}
    \For{enumeration $\gamma_1, \gamma_2, \ldots, \gamma_\ell$}
        \State $C' \gets$ \Call{Decoding}{$y,  \gamma_1, \ldots,\gamma_{\ell}, \alpha, \eps$} 
        \If{$C'$ is a valid codeword  and the distance between $C'$ and $y$ is at most $(1-\eps) \alpha N$} %\xnote{no $D$ here}
        \State \Return $C'$
        \EndIf
    \EndFor
   % \xnote{For this part, is it clear that we can only get one valid codeword? What if we get multiple valid codewords from the enumeration and the function returns with the first one but not the correct one?}
\EndFunction

\Function{Decoding}{$y \in \mathbf{F}_2^n$ and $(\gamma_1,\ldots,\gamma_{\ell}) \in \mathbf{R}^{\ell}, \alpha \in \mathbf{R}, \eps \in \mathbf{R}$}
    \State $z \gets y$
   % \State Set a parameter $c \gets 3$. %\textcolor{red}{We may adjust this one to get a better bound}
   % \State $\gamma_i \gets 1$
 \For{$i=1,\ldots,\ell$}
    %\State Let $F$ be the subset of corrupted entries and assume $\gamma$ to be its expansion, i.e., $|\Gamma(F)|=(1-\gamma)D \cdot |F|$. 
    %\textcolor{red}{We shall enumerate this for a formal algorithm.}   
    \If{$\gamma_i  \ge 2\eps/3 + \eta $}
     \State $z \gets$ \Call{FixedFindAndDecode}{$z, \alpha, \eps$} %\textcolor{blue}{Xue: Where is this procedure?}
     \State \Return $z$ 
    \Else
    \State Let $L_0$ denote all bits in $z$ with at least $(1-3 \gamma_i)D$ wrong parity checks
    %\xnote{Why not just use $(1-3 \gamma_i)$?}
        \State Flip all the bits in $L_0$
    \EndIf
\EndFor
    \State Apply the decoding of Theorem \ref{thm:vidermandec} on $z$ and return the result.
     %if the distance between $z$ and $y$ is at most $(1-\eps)D\alpha N$
%    \xnote{This algorithm looks incorret to me. First, we need to initialize $i$ before the while loop and keep changing $i$ inside the loop. Second, for some enumeration of $(r_i)$ we may run out of $r_i$ before the while loop can end. Then what do we do?}
\EndFunction

\Function{FixedFindAndDecode}{$y \in \mathbf{F}_2^N, \alpha \in \mathbf{R}, \eps \in \mathbf{R}$}
\State $L \gets $ \Call{Find}{$y, \eps$}, where \Call{Find}{} is from Algorithm \ref{alg:FindingErasure}.
%\textcolor{red}{Xue: Better to mention this is in Algorithm 1. Similar for the call in Algorithm 3.}
\State Erase the symbols of $y$ that have indices in $L$ to get $y'$, and then apply the decoding from Theorem \ref{thm:decfromerasures}  on $y'$ to get a codeword $C'$.
%\xnote{This is confusing. In the statement of Theorem \ref{thm:decfromerasures}, there is no such thing as $L$ or $y$. We need to either change the statement of Theorem \ref{thm:decfromerasures} or change the sentence here.}
\State \Return $C'$
\EndFunction% \textcolor{blue}{Why this procedure is here?}

\end{algorithmic}
\end{algorithm}

\begin{theorem}
\label{thm:MultipleGuessWithFlips}
For every small constant $\beta > 0$, and every $\eps \leq 1/4 - \beta $, let $\mathcal{C}$ be an expander code defined by a $(\alpha N, (1-\eps)D)$ expander graph. There is a linear time decoding algorithm for $\mathcal{C}$ with decoding radius $(1-\eps ) \cdot \alpha N$.

\end{theorem}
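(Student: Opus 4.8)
The plan is to analyze the two branches of \textsc{Decoding} separately under the assumption that the current set of corruptions $F$ (with respect to the true codeword $C_0$) satisfies $|F| \le (1-\eps)\alpha N$, and then argue that at least one enumeration sequence $(\gamma_1,\dots,\gamma_\ell)$ tracks the true expansion deficits and succeeds. Let $F$ be the error set of the current word $z$ and write $|\Gamma(F)| = (1-\gamma)D|F|$. Note that since $|F|\le(1-\eps)\alpha N < \alpha N$, the expander property gives $\gamma \le \eps$, so the true $\gamma$ always lies in an interval of the enumerated grid $\{\eta,2\eta,\dots\}$; I will choose the enumerated $\gamma_i$ to be the grid point just above the true deficit, losing only $O(\eta)=O(\beta)$ in every threshold, which is absorbable.

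\textbf{The ``small $|\Gamma(F)|$'' branch ($\gamma \gtrsim \tfrac{2}{3}\eps$).} Here I invoke \textsc{FixedFindAndDecode}. I want to apply Lemma~\ref{lemma:all_errors_in_L} with $\Delta = \eps$: I must check $|\Gamma^1(S)| \ge (1-2\eps)D|S|$ for every nonempty $S\subseteq F$. Since $|S|\le|F|\le \alpha N$, the expander property gives $|\Gamma(S)|\ge(1-\eps)D|S|$, and a standard double-count ($2(D|S|-|\Gamma(S)|) \ge |\Gamma(S)| - |\Gamma^1(S)|$, i.e.\ non-unique neighbors absorb at least two excess edges) yields $|\Gamma^1(S)| \ge D|S| - 2(D|S|-|\Gamma(S)|) \ge (1-2\eps)D|S|$. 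So Lemma~\ref{lemma:all_errors_in_L} applies and $F\subseteq L$ when \textsc{Find} finishes. It then remains to bound $|L|$: using Lemma~\ref{lemma:FAddtoLFirst} I may assume $F$ is added first, so after $|F|$ steps the set $R$ of ``touched'' checks has size at most $|\Gamma(F)| = (1-\gamma)D|F|$; every subsequently added vertex consumes $\ge(1-2\eps)D$ edges into $R$, and a counting argument (each check in $R$ contributes at most $D_{\max}$, or more carefully one tracks edges between $L$ and $R$) bounds the number of extra vertices, giving $|L|\le \alpha N$ provided $\gamma$ is large enough — this is exactly where the threshold $\gamma\ge \tfrac{2}{3}\eps$ comes from, and where the hypothesis $|F|\le(1-\eps)\alpha N$ is used. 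Then Theorem~\ref{thm:decfromerasures} decodes from the $\le \alpha N$ erasures $L$, recovering $C_0$.

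\textbf{The ``large $|\Gamma(F)|$'' branch ($\gamma < \tfrac{2}{3}\eps$).} Here $|\Gamma(F)|$ is close to $D|F|$, so $F$ has many unique neighbors: $|\Gamma^1(F)| \ge D|F| - 2(D|F|-|\Gamma(F)|) = (1-2\gamma)D|F|$. Each unique neighbor of $F$ is an unsatisfied check, so the total number of (vertex, unsatisfied-neighbor) incidences with the vertex in $F$ is $\ge (1-2\gamma)D|F|$. Averaging, the set $L_0$ of vertices with $\ge(1-3\gamma)D$ unsatisfied neighbors captures a $\ge \Omega_\beta(1)$ fraction of $F$ (Markov-type argument on the deficit). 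On the other hand, a vertex \emph{outside} $F$ with $\ge(1-3\gamma)D$ unsatisfied neighbors would, together with $F$, form a set whose neighbourhood has expansion too small — quantitatively, flipping $z$ on $L_0$ changes the error set to $F \triangle L_0$, and one shows $|F\triangle L_0| \le (1-\Omega_\beta(1))|F|$ by bounding $|L_0\setminus F|$: each such vertex shares $\ge(1-3\gamma)D$ unsatisfied checks, hence $\ge(1-3\gamma)D - (\text{new unsatisfied checks it creates})$ common unsatisfied checks with $F$, and summing these collisions against the $\le \gamma D|F|$ total ``collision budget'' of $\Gamma(F)$ (since $D|F|-|\Gamma(F)| = \gamma D|F|$) forces $|L_0\setminus F|$ and $|F\setminus L_0|$ both to be small multiples of $\gamma|F| = O(\eps)|F|$. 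Since $\eps<1/4$ this net reduction factor $\rho = 1-\Omega(\beta)<1$ is strictly below $1$, so after $\ell = O(\log_{1/\rho}(\alpha N / \text{threshold}))$ iterations the error count drops below the Theorem~\ref{thm:vidermandec} radius $\tfrac{1-3\eps}{1-2\eps}\alpha N$ — but actually we only need it below $\Theta(\alpha N)$, and $\ell = O(\log_{1-O(\beta)} \tfrac13)$ rounds suffice to leave a residual error set small enough for Theorem~\ref{thm:vidermandec}; then the final call to \textsc{vidermandec} cleans up.

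\textbf{Putting it together.} The correct enumeration is: at round $i$, if the current deficit $\gamma$ exceeds $\tfrac23\eps$, take $\gamma_i$ to be the grid point in $(\gamma,\gamma+\eta]$ (triggering \textsc{FixedFindAndDecode}, which finishes); otherwise take $\gamma_i$ to be the grid point just above the current $\gamma$ and flip. Along this particular sequence the argument above shows $z$ becomes $C_0$ or reaches \textsc{vidermandec}'s regime, so \textsc{Decoding} returns $C_0$; \textsc{Final Decoding} then returns it because it is a valid codeword within distance $(1-\eps)\alpha N$ of $y$. Wrong enumerations can only return other valid codewords within that distance, but any two such would be within $2(1-\eps)\alpha N < \tfrac{\alpha}{2\eps}\cdot N$ of each other, contradicting Theorem~\ref{thm:dist_expander}; hence the answer is unique and correct. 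For the running time: the grid has $O(1/\eta)=O_\beta(1)$ points, $\ell=O_\beta(1)$, so $O_\beta(1)$ enumerations total, each running linear-time subroutines (\textsc{Find}, the erasure decoder of Theorem~\ref{thm:decfromerasures}, bit-flipping in one pass, and Theorem~\ref{thm:vidermandec}), giving overall linear time. The main obstacle I anticipate is the precise collision-counting in the flipping branch — pinning down the constant in the $\Omega_\beta(1)$ reduction factor and verifying it stays bounded away from $1$ for all $\eps\le 1/4-\beta$, which is exactly the place the $\eps<1/4$ hypothesis is essential (at $\eps=1/4$ the reduction factor degenerates to $1$); the other delicate point is bookkeeping the $O(\eta)$ slack so that every inequality ($\gamma_i \ge 2\eps/3+\eta$ versus the true $\gamma$, the thresholds $(1-3\gamma_i)D$, etc.) still goes through with the grid-rounded guesses.
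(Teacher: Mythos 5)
Your proposal matches the paper's proof essentially step-for-step: the same two-branch split at $\gamma \approx \tfrac{2}{3}\eps$, the same invocation of Lemmas~\ref{lemma:all_errors_in_L} and~\ref{lemma:FAddtoLFirst} in the find-and-erase branch (with the counting $|\Gamma(L)| \le |\Gamma(F)| + 2\eps D\,|L\setminus F|$ giving the threshold), the same flipping analysis via unique-neighbor counting in the other branch, and the same $O_\beta(1)$-round enumeration finished off by Theorem~\ref{thm:vidermandec}. The one intermediate estimate that is off is the claim that $|F\setminus L_0|$ and $|L_0\setminus F|$ are each ``small multiples of $\gamma|F|$'': the paper's actual bounds are $|F\setminus L_0| \le \frac{2\gamma}{3\gamma_i}|F|$ (which approaches $\tfrac{2}{3}|F|$ when $\gamma\approx\gamma_i$) and $|L_0\setminus F| \le \frac{\eps'-\gamma}{1-\eps'-3\gamma_i}|F|$ (which can approach $\tfrac{1}{3}|F|$), and the $\Omega(\beta)$-fraction reduction comes from their \emph{sum} $\frac{2\gamma}{3\gamma_i}+\frac{\eps'-\gamma}{1-\eps'-3\gamma_i}$ being $\le 1-\beta$ precisely because $\eps\le 1/4-\beta$, not from either term being $O(\eps)$ — this is exactly the degenerate-at-$\eps=1/4$ behavior you anticipated as the delicate point.
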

 
To prove the theorem, we focus on the $i$-th iteration of Function \textsc{Decoding}, and show that we can make progress (either reducing the number of errors or decoding the original codeword) in this iteration.
Let $F_i$ denote the set of errors at the beginning of iteration $i$ and $\gamma(F_i) \in [0,\eps]$ be the parameter such that $| \Gamma(F_i) | = (1-\gamma(F_i)) \cdot D|F_i|$.

First we show Function \textsc{FixedFindAndDecode} will recover the codeword directly whenever $\gamma_i \ge \frac{2\eps}{3}+\eta$ and $\gamma(F_i) \ge \frac{2\eps}{3}$. %satisfies the IF condition.

\begin{claim}\label{clm:small_gamma}
If $|F_i| \leq (1-\eps  ) \cdot \alpha N$,   $\gamma_i \ge \frac{2\eps}{3} + \eta$, and $\gamma(F_i) \in [\gamma_i-\eta ,  \gamma_i )$, then Function \textsc{FixedFindAndDecode} in \textsc{Decoding} will return a valid codeword directly.
\end{claim}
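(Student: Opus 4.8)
The plan is to show that when $\gamma(F_i) \ge \frac{2\eps}{3}$, the hypothesis of Lemma~\ref{lemma:all_errors_in_L} is met with $\Delta = \eps$, so that \textsc{Find}$(z,\eps)$ returns a set $L \supseteq F_i$, and moreover that $|L| \le \alpha N$, so the erasure-decoding of Theorem~\ref{thm:decfromerasures} (applied to $y'$, which now differs from the true codeword only in erased positions) recovers $C_0$ exactly.

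\textbf{Step 1: the unique-neighbor condition.} First I would verify that every non-empty $S \subseteq F_i$ has $|\Gamma^1(S)| \ge (1-2\eps)D|S|$. Since $|S| \le |F_i| \le (1-\eps)\alpha N < \alpha N$, the expander property gives $|\Gamma(S)| \ge (1-\eps)D|S|$, and the standard double-counting bound $|\Gamma^1(S)| \ge 2|\Gamma(S)| - D|S| \ge (1-2\eps)D|S|$ finishes it. So with threshold $h = (1-2\eps)D$, Lemma~\ref{lemma:all_errors_in_L} applies and guarantees $F_i \subseteq L$ at the end of the while loop. By Lemmas~\ref{lemma:L'allsame} and~\ref{lemma:FAddtoLFirst}, I can assume the algorithm adds all of $F_i$ to $L$ first.

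\textbf{Step 2: bounding $|L|$.} This is where $\gamma(F_i) \ge \frac{2\eps}{3}$ is used and is the main obstacle. After $F_i$ is added, $R$ consists of $\Gamma(F_i)$ together with the initially-unsatisfied checks (which are $\subseteq \Gamma(F_i)$), so $|R| \le |\Gamma(F_i)| = (1-\gamma(F_i))D|F_i|$ at that point; as more vertices $v_1, v_2, \dots$ get added, $R$ only grows by their neighborhoods. Suppose for contradiction that $|L| > \alpha N$; truncate to the first $\alpha N$ vertices added, call this set $L'$ with $|L'| = \alpha N$, and note $L' \supseteq F_i$. Every vertex in $L' \setminus F_i$ was added because it had at least $h = (1-2\eps)D$ neighbors in $R$ at the time, and $R \subseteq \Gamma(L')$ throughout. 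Counting edges from $L'$ into $\Gamma(L')$: the $|F_i|$ vertices of $F_i$ contribute at most $D|F_i|$ edges into $\Gamma(F_i)$ (a set of size $(1-\gamma(F_i))D|F_i|$), and each of the $\alpha N - |F_i|$ other vertices sends $\ge (1-2\eps)D$ edges into the current $R \subseteq \Gamma(L')$; meanwhile $|\Gamma(L')| \ge (1-\eps)D\cdot \alpha N$ by expansion. Combining these, the number of "collisions" (edge multiplicity in $\Gamma(L')$ beyond one per vertex) is $D\cdot \alpha N - |\Gamma(L')| \le \eps D \alpha N$, yet the vertices of $L' \setminus F_i$ alone, landing in a set that already contains $\Gamma(F_i)$ of size $(1-\gamma(F_i))D|F_i|$, force at least roughly $(1-2\eps)D(\alpha N - |F_i|) - \big(|\Gamma(L')| - (1-\gamma(F_i))D|F_i|\big)$ collisions; plugging in $|F_i| \le (1-\eps)\alpha N$ and $\gamma(F_i) \ge \frac{2\eps}{3}$ (and absorbing $O_\eps(1)$ error terms, since $\eta$ is a tiny slack and the $\gamma_i$-enumeration only costs a negligible additive loss) should contradict $\le \eps D\alpha N$. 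I would calibrate the arithmetic so that the factor $\frac{2}{3}$ in $\frac{2\eps}{3}$ is exactly what makes the two bounds incompatible — this is the delicate counting that mirrors Viderman's analysis but is streamlined via our setup.

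\textbf{Step 3: conclude.} Once $F_i \subseteq L$ and $|L| \le \alpha N$, erasing the coordinates in $L$ yields $y'$ that agrees with $C_0$ on all non-erased positions and has at most $\alpha N$ erasures; since $\eps < 1/4 < 1/2$, Theorem~\ref{thm:decfromerasures} (with an appropriate $\xi$) corrects these $\le \alpha N$ erasures in linear time and outputs $C_0$. Hence \textsc{FixedFindAndDecode} returns a valid codeword, namely the original one, directly. I expect Step 2 to be where essentially all the work lies; Steps 1 and 3 are routine given the lemmas already established.
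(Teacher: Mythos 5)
Your overall plan is exactly the paper's: (1) show every $S\subseteq F_i$ has enough unique neighbors so Lemma~\ref{lemma:all_errors_in_L} gives $F_i\subseteq L$, (2) bound $|L|\le\alpha N$ by contradiction using the hypothesis $\gamma(F_i)\ge 2\eps/3$, (3) appeal to Theorem~\ref{thm:decfromerasures}. Steps 1 and 3 are fine and match the paper.

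The problem is the explicit collision lower bound you write in Step~2. You claim $L'\setminus F_i$ forces at least
\[
(1-2\eps)D(\alpha N-|F_i|)\;-\;\bigl(|\Gamma(L')|-(1-\gamma)D|F_i|\bigr)
\]
collisions. The subtracted term is spurious: \emph{every} one of the $\ge(1-2\eps)D$ edges that a vertex $v\in L'\setminus F_i$ sends into the current $R$ at the moment $v$ is added already lands on a right vertex that is a neighbor of some earlier vertex of $L$, so each such edge is by itself a repeated (collision) edge; there is no room to ``absorb'' some of them into fresh vertices. Worse, your formula makes $|\Gamma(L')|$ cancel when you set it against the exact count $D\alpha N-|\Gamma(L')|$ of collisions — the expansion input $|\Gamma(L')|\ge(1-\eps)D\alpha N$ drops out entirely, and the resulting inequality becomes $2\eps\alpha N\ge(2\eps-\gamma)|F_i|$, which is an \emph{upper} bound on $|F_i|$ and gives no contradiction whatsoever. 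The correct argument, which the paper uses, is the one-line observation that each vertex of $L'\setminus F_i$ contributes at most $D-(1-2\eps)D=2\eps D$ \emph{new} vertices to $R$, so
\[
(1-\eps)D\alpha N \;\le\; |\Gamma(L')| \;\le\; (1-\gamma)D|F_i| + 2\eps D(\alpha N-|F_i|),
\]
which rearranges to $|F_i|\ge\frac{1-3\eps}{1-\gamma-2\eps}\alpha N\ge\frac{1-3\eps}{1-8\eps/3}\alpha N$, and one checks $\frac{1-3\eps}{1-8\eps/3}>1-\eps$ iff $\eps<1/4$. You also stop short of verifying this final inequality, remarking only that you ``would calibrate the arithmetic,'' so the crucial use of $\gamma\ge 2\eps/3$ and $\eps\le 1/4-\beta$ is never actually cashed out. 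In summary: same approach, but the collision count is miscounted and the closing arithmetic is omitted.
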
 

%\textcolor{red}{Xue: The condition of $\gamma_i$ is inconsistent with the algorithm.}
%\knote{It's a typo. Should be $\gamma_i \ge \frac{2\eps}{3} + \eta$}

\begin{proof}
First notice that when $\gamma_i \ge \frac{2\eps}{3}+\eta$, this iteration of Function \textsc{Decoding} will go to Function \textsc{FixedFindAndDecode}. Let $\gamma:= \gamma(F_i)$. We prove that $L$ after \textsc{Find} has size at most $\alpha N$. 
Suppose not.
Since $|F_i| \leq (1-\eps  ) \cdot \alpha N$, by the expander property, for every nonempty $ F'\subseteq F_i,  |\Gamma(F')| \geq (1-\eps)D|F'|$, so by Lemma \ref{lemma:all_errors_in_L},   after \textsc{Find}, $L$ covers all the errors. %\xnote{The logic of this sentence is weird. I think the reason we can use Lemma \ref{lemma:all_errors_in_L} is that $|F_i| \leq (1-\eps  ) \cdot \alpha N$, not that $\gamma \ge  \frac{2\eps}{3}$.}
Consider the moment $|L|=\alpha N$. 
Without loss of generality, we assume $F_i \subseteq L$ (otherwise we can adjust the order of vertices added to $L$ by Lemma \ref{lemma:L'allsame}).

Then we have
\[
(1-\eps) D \alpha N \le | \Gamma(L) | \le  (1-\gamma)D \cdot |F_i| + 2 \eps D (\alpha N - |F_i|),
\]
because the expansion of $F_i$ is $(1-\gamma)D \cdot |F_i|$ and when adding any vertex in $L\setminus F_i$ to $L$, $R$ increases by at most $2\eps D$.
So
\[
(1-\eps) \alpha N   \le (1-\gamma) \cdot |F_i| + 2 \eps (\alpha N - |F_i|).
\]

%\xnote{How do we get the above inequality? How can we just replace $|F_i|$ in the LHS by $\alpha N$? In the previous inequality, the LHS should use the expansion of $|L|=\alpha N$ instead of $F_i$.}
As $\gamma  \leq \eps$ and $\eps \leq 1/4$, $1-\gamma - 2 \eps > 0$. This implies $|F_i| \ge \frac{1-3\eps}{1-\gamma - 2 \eps} \cdot \alpha N $. %\xnote{This needs $1-\gamma - 2 \eps > 0$, which may not be true.}
Since $\gamma  \geq \gamma_i -\eta   \geq \frac{2\eps}{3}  $, we have   $|F_i| \geq \frac{1-3\eps}{1-8\eps/3  } \alpha N  $. 
When  $\eps \leq 1/4 - \beta$ and $\eta$ is small enough, one can check that $\frac{1-3\eps}{1-8\eps/3   }  > 1-\eps   $ always holds.
It is contradicting the assumption that $|F_i| \leq (1-\eps  ) \alpha N$.

As $L \supseteq F_i$ and is of size at most $\alpha N$, the algorithm can correct all the errors using $L$ and $z$ given $\eps <1/4-\beta$ to Theorem~\ref{thm:decfromerasures}.
\end{proof}

Next we discuss the case where  $\gamma_i<2\eps/3+\eta$, which will result in the Function \textsc{Decoding} finding the set $L_0$ and flipping all the bits in $L_0$. We show that this will reduce the number of errors by a constant fraction.

\begin{claim}\label{clm:large_gamma}
If $|F_i| \leq (1-\eps  ) \alpha N$,  $\gamma_i  < \frac{2\eps}{3}+\eta$, and $\gamma(F_i) \in [\gamma_i-\eta, \gamma_i )$, then flipping $L_0$ will reduce the distance between $z$ and the correct codeword by at least $ \beta$ fraction.
\end{claim}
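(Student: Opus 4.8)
The plan is to track the set $L_0$ of bits that get flipped, split it into a "good" part $L_0 \cap F_i$ (corrupted bits that we correctly flip back) and a "bad" part $L_0 \setminus F_i$ (correct bits that we mistakenly flip), and show that $|L_0 \cap F_i|$ exceeds $|L_0 \setminus F_i|$ by at least a $\beta$ fraction of $|F_i|$. Since flipping a bit in $L_0 \cap F_i$ decreases the error count by one and flipping a bit in $L_0 \setminus F_i$ increases it by one, the net reduction is $|L_0 \cap F_i| - |L_0 \setminus F_i|$, so it suffices to lower bound this difference. Throughout write $\gamma := \gamma(F_i)$, so $|\Gamma(F_i)| = (1-\gamma)D|F_i|$ and $\gamma < \tfrac{2\eps}{3} + 2\eta$ (using $\gamma \le \gamma_i < \tfrac{2\eps}{3}+\eta$ together with $\gamma \ge \gamma_i - \eta$; the exact relation to $\gamma_i$ just needs to be unwound carefully).

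First I would control the bad part. A correct bit $v \notin F_i$ lies in $L_0$ only if it has at least $(1-3\gamma_i)D$ wrong parity checks among its neighbors; but every wrong parity check of $v$ must be a neighbor of $F_i$ (a parity check is unsatisfied only if it sees an odd number of corrupted bits), and moreover it must be a shared neighbor of $v$ and $F_i$. So each bad vertex contributes at least $(1-3\gamma_i)D$ edges into the set of "collision" checks between $V_L \setminus F_i$ and $F_i$. The total number of such collision edges is at most $D|F_i| - |\Gamma(F_i)| = \gamma D |F_i|$ (each neighbor of $F_i$ with $t$ neighbors in $F_i$ absorbs $t-1$ of the "missing" edges, and sends at most... — here I would use the standard bound $\sum_{u \in \Gamma(F_i)}(d_{F_i}(u)-1) = \gamma D|F_i|$, and note edges from outside $F_i$ into these multiply-covered checks are also bounded by this quantity up to constants). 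This gives $|L_0 \setminus F_i| \cdot (1-3\gamma_i)D \le O(\gamma D |F_i|)$, hence $|L_0 \setminus F_i| \le \frac{O(\gamma)}{1-3\gamma_i}|F_i|$, which is a small multiple of $|F_i|$ since $\gamma_i, \gamma = O(\eps) = O(1)$.

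Next I would lower bound the good part: I want to show that all but a tiny fraction of $F_i$ gets flipped, i.e. $|F_i \setminus L_0|$ is small. A corrupted vertex $v \in F_i$ that is *not* flipped has fewer than $(1-3\gamma_i)D$ wrong parity checks, hence more than $3\gamma_i D$ of its neighbors are either satisfied checks or wrong-but... — the cleanest route is: $v$'s neighbors that are *unique* neighbors of $F_i$ are automatically unsatisfied, so if $v \notin L_0$ then $v$ has more than $3\gamma_i D$ neighbors that are *not* unique neighbors of $F_i$, i.e. shared with $F_i \setminus \{v\}$. Summing over $v \in F_i \setminus L_0$ again charges against the collision count $\le O(\gamma D|F_i|)$, so $|F_i \setminus L_0| \le \frac{O(\gamma)}{3\gamma_i}|F_i|$. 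Since $\gamma \ge \gamma_i - \eta$ and $\gamma_i \ge \eta$, the ratio $\gamma/\gamma_i$ is bounded, but I need it bounded by something strictly less than... — this is where I must be careful: the guess $\gamma_i$ is within $\eta$ of $\gamma$, so $\gamma/\gamma_i \le 1 + O(\eta/\gamma_i)$, and I'd want $3\gamma_i$ in the denominator to beat the $O(\gamma)$ in the numerator by a definite margin. Plugging $\gamma \le \tfrac{2\eps}{3} + O(\eta) < \tfrac{1}{6} + O(\eta)$ and $3\gamma_i \approx 3\gamma$, the bound becomes $|F_i \setminus L_0| \le (\tfrac13 + O(\eta))|F_i|$ roughly, so $|F_i \cap L_0| \ge (\tfrac23 - O(\eta))|F_i|$.

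Finally I would combine: the net error reduction is
\[
|F_i \cap L_0| - |L_0 \setminus F_i| \;\ge\; \Big(\tfrac23 - O(\eta)\Big)|F_i| - \frac{O(\gamma)}{1-3\gamma_i}|F_i|,
\]
and I need to check this is at least $\beta |F_i|$ when $\eps \le \tfrac14 - \beta$ and $\eta = \beta/100$. The main obstacle is pinning down the constants: the three charging arguments (bad bits, unflipped good bits, and the total collision budget $\gamma D|F_i|$) all share the same denominator-style resource, so I must be careful not to double-count edges, and must verify that with $\gamma$ possibly as large as $\tfrac{2\eps}{3}+O(\eta) \approx \tfrac16$ and $\eps$ close to $\tfrac14$ the surplus still clears $\beta$. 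I expect one needs the precise constant $3$ in the threshold $(1-3\gamma_i)D$ (not $2$ or $4$) for the arithmetic to close, mirroring the role of $\eps < 1/4$ in the Sipser–Spielman flipping analysis; checking that inequality — essentially $\tfrac23 - \tfrac{c\eps}{1-c'\eps} \ge \beta$ for the right small constants $c, c'$ at $\eps = \tfrac14 - \beta$ — is the delicate computational step, and I would do it by reducing to a single-variable inequality in $\eps$ after fixing $\eta$ in terms of $\beta$.
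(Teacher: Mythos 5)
Your bound on the unflipped corrupted bits is fine and essentially matches the paper: counting each $v \in F_i \setminus L_0$'s $> 3\gamma_i D$ edges to non-unique neighbors of $F_i$ against the budget of $\le 2\gamma D|F_i|$ such edges gives $|F_i \setminus L_0| \le \frac{2\gamma}{3\gamma_i}|F_i|$, which is exactly the paper's inequality.

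However, your bound on the wrongly flipped bits $L_0 \setminus F_i$ has a genuine gap. You claim that the $(1-3\gamma_i)D$ edges from each bad vertex $v \notin F_i$ into unsatisfied checks can be charged against the collision budget $D|F_i| - |\Gamma(F_i)| = \gamma D|F_i|$. But that quantity counts only excess edges emanating \emph{from $F_i$}; it says nothing about how many edges can run from $V_L \setminus F_i$ into $\Gamma(F_i)$. For instance, if $\gamma = 0$ your inequality would force $|L_0 \setminus F_i| = 0$, yet it is perfectly possible for a vertex outside $F_i$ to have all $D$ neighbors inside $\Gamma(F_i)$ when $\gamma = 0$. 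The only tool that constrains such vertices is the expander property applied to $F_i \cup L_0$ --- which your proposal never invokes. The paper's argument is precisely: (1) first establish $|F_i \cup L_0| < \alpha N'$ so that the size-expansion tradeoff (Lemma~\ref{lem:expansion_larger}) gives $|\Gamma(F_i \cup L_0)| \ge (1-\eps')D|F_i \cup L_0|$, and (2) upper bound $|\Gamma(F_i \cup L_0)| \le (1-\gamma)D|F_i| + 3\gamma_i D|L_0 \setminus F_i|$ since each bad vertex adds at most $3\gamma_i D$ new right-neighbors (the other $(1-3\gamma_i)D$ already lie in $\Gamma(F_i)$). Combining these yields $(1 - \eps' - 3\gamma_i)|L_0 \setminus F_i| \le (\eps' - \gamma)|F_i|$. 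Note the numerator is $\eps' - \gamma$, not $\gamma$: your proposed bound is not only unjustified but numerically wrong, and the arithmetic in your last step would not close if you plugged in the correct denominator-and-numerator pair. You also skip the prerequisite step of showing $|F_i \cup L_0|$ stays small enough for the expander property to apply, which the paper handles by a separate contradiction argument at the start of the proof.
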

\begin{proof}
Let $\gamma:=\gamma(F_i)$ and $N':=\frac{(1+ 3\eta)|F_i|}{(1-\eps) \alpha}$. We show that $|F_i \cup L_0| < \alpha N' $. 
To prove it,  assume $|F_i \cup L_0|=\alpha N'$, i.e., we only take $\alpha N' - |F_i|$ elements from $L_0\setminus F_i$, consider these elements together with elements in $F_i$.
%As $N'\leq N$, by the expander property, $(1-\eps)D \alpha N'   \le |\Gamma(F_i \cup L_0)|$.
As $|F_i| \leq (1-\eps)\alpha N$, $\alpha N' \leq (1+3\eta)\alpha N$.
By Lemma \ref{lem:expansion_larger}, $(1-(1+3\eta)\eps)D \alpha N' - O(\eps D (1+3\eta)^2)   \le |\Gamma(F_i \cup L_0)|$.
Also notice that $|\Gamma(F_i)| = (1-\gamma)D |F_i|$ and adding each element of $L_0\setminus F_i$ to $L_0$ contributes at most $3\gamma_i D$ to  $|\Gamma(F_i \cup L_0)|$.
So
\[
(1-(1+3\eta)\eps - O_\eps(1/N))D \alpha N'   \le |\Gamma(F_i \cup L_0)| \le (1-\gamma)D |F_i| + 3 \gamma_i D \cdot (\alpha N' - |F_i|).
\]
This implies $|F_i| \ge \frac{1-(1+3\eta)\eps- O_\eps(1/N) - 3 \gamma_i}{1-\gamma - 3 \gamma_i} \cdot \alpha N'$.
As $  \gamma_i \leq \gamma + \eta $, this is $\geq\frac{1-(1+3\eta)\eps - O_\eps(1/N) -  3 \gamma-3\eta}{1-4 \gamma - 3\eta} \cdot \alpha N'$ 
%\textcolor{red}{Xue: I do not see why. The lower bound shall come from $\gamma_i \le \gamma + \eta$ IMO.} 
%\knote{This is modified now.}
which is minimized when $\gamma  = 0$. 
Thus  $|F_i|\geq\frac{(1-(1+3\eta)\eps - O_\eps(1/N)-3\eta )}{1-3\eta} \alpha N' = (1- \eps -\frac{6\eta}{1-3\eta}\eps - O_\eps(1/N))\alpha N'$.
But we know that $|F_i| = \frac{1-\eps}{1+3\eta}\alpha N' = (1-\eps-\frac{3\eta-3\eps \eta}{1+3\eta})\alpha N' $. %\textcolor{red}{Xue: The first equation is inconsistent with the choice of $N'$}
This is a contradiction.

Let $\eps' = (1+3\eta)\eps + O_\eps(1/N)$.
Since $|F_i\cup L_0| <\alpha N'$, now again we have
\begin{equation}\label{eq:F_and_L_0}
   (1-\eps') \cdot D |F_i \cup L_0| \le |\Gamma(F_i \cup L_0) | \le (1-\gamma)D|F_i| + 3 \gamma_i D|L_0 \setminus F_i|.
\end{equation}
As $ |F_i \cup L_0| = |L_0 \setminus F_i| + |F_i| $,  %\textcolor{blue}{Xue: Not from this fact --- we split $F_i \cup L_0$ here}
this gives
\[
(1-\eps'-3 \gamma_i)  \cdot |L_0\setminus F_i| \le (\eps'-\gamma) \cdot |F_i|.
\]

Now we consider $F_i \setminus L_0$. Each variable in $F_i \setminus L_0$ contributes at most $(1-3\gamma_i) D$ unique neighbors. And the total number of unique neighbors is at least $(1-2\gamma)D \cdot |F_i|$. So the size of $F_i\setminus L_0$ is at most $\frac{2\gamma}{3\gamma_i} \cdot |F_i|$.

%\xnote{In the following $F$ should be $F_i$? Also, is it $1 -  \Omega(\beta)$ or $1-\beta$? The theorem statement seems to suggest $1-\beta$.}

Finally we prove $|F_i\setminus L_0|+|L_0\setminus F_i| \le (1- \beta) \cdot|F_i|$. From the above bounds on $|F_i\setminus L_0|$ and $|L_0\setminus F_i|$, it is enough to show
\[ 
\frac{2\gamma}{3\gamma_i}  + \frac{\eps' -\gamma}{1-\eps' - 3 \gamma_i}< 1 -   \beta.
\]
As $\gamma \in [\gamma_i -\eta, \gamma_i)$,
the L.H.S. is at most
\[
\frac{2  }{3 }  + \frac{\eps' - \gamma_i  }{1-\eps' - 3 \gamma_i }.
\]
We know  $\eps \leq 1/4 - \beta$. So $\eps' \leq (1/4-\beta)(1+3\eta) + O_\eps(1/N) < 1/4- 0.9\beta $ when $\eta \le \beta/100$. 
Also we know $\gamma_i \in [0, \frac{2}{3}\eps+\eta)$. 
So when $\gamma_i = 0$, $\eps' = 1/4-0.9\beta$, the L.H.S. is at most   $1- \beta$.
\end{proof}

\begin{proofof}{Theorem \ref{thm:MultipleGuessWithFlips}}
The decoding algorithm is Algorithm \ref{alg:MultipleGuessWithFlips}.
Given the parameter $\beta$, we first set up a large enough constant $\ell=\Theta(1/\beta)$.
Then we apply the algorithm on an input corrupted codeword, using parameter $\ell$. 
The key point is that in the enumerations of the $\gamma_i$'s, one sequence $(\gamma_i)_{i\in [l]}$ provides a good approximation of the actual expansion parameters, i.e. $\forall i\in [l]$ in the $i$-th iteration,  $\gamma(F_i)\in [ \gamma_i-\eta, \gamma_i)$.
Now we consider every iteration in this sequence.
If $\gamma_i \geq 2\eps/3 +\eta $, then by Claim \ref{clm:small_gamma}, the algorithm returns the correct codeword.
If $\gamma_i < 2\eps/3 + \eta$, then by  Claim \ref{clm:large_gamma}, the number of errors can be reduced by $ \beta $ fraction. 
So in the worst case, when $\ell \geq \log_{1-O(\beta)}\frac{1}{3}$, the number of errors can be reduced to at most $\alpha N / 3$ in a constant number of iterations.
Finally the algorithm applies the decoding algorithm from Theorem \ref{thm:vidermandec}, which corrects the remaining errors.

The running time of Algorithm \ref{alg:MultipleGuessWithFlips} is linear, since $\ell = O(1)$ and our enumeration for each $\gamma_i$ takes constant time. The procedures \textsf{FixedFindAndDecode} and the decoding from Theorem \ref{thm:vidermandec} both run in linear time as well.

\end{proofof}

By using Theorem \ref{thm:MultipleGuessWithFlips} and Lemma \ref{lem:expansion_larger} we can get the following result.
%%%%%%%%%%%%%%%%%%%%%%%%%%
\iffalse
\begin{theorem}
For every constant $\beta > 0$, for every $ \eps \leq 1/4 - \beta $, 
if $\mathcal{C}$ is an expander code defined by an $(\alpha N, (1-\eps)D)$ expander,
 then there is a linear time decoding algorithm for $\mathcal{C}$ which has decoding radius $(\frac{3\eps}{16} - O_\eps(\beta))\alpha N$. 
\end{theorem}
\xnote{I think by setting $\beta$ appropriately, we should be able to rewrite the theorem statement as follows. Check it. Also, $\eps$ should be in the denominator.}
%In fact, from the calculation I believe the decoding radius should be $\frac{3\eps}{16}\alpha N -O_{\eps}(1)$.
\fi
%%%%%%%%%%%%%%%%%%%%%%%%%%%
\begin{theorem}\label{thm:MultipleGuessWithFlipsFinal}
For every constants $ \eps \in (0, \frac{1}{4}), \eta>0$, 
if $\mathcal{C}$ is an expander code defined by an $(\alpha N, (1-\eps)D)$ expander,
 then there is a linear time decoding algorithm for $\mathcal{C}$ with decoding radius $(\frac{3 \alpha}{16 \eps} - \eta) N$. 
\end{theorem}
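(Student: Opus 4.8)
The plan is to bootstrap Theorem~\ref{thm:MultipleGuessWithFlips} using the size-expansion tradeoff of Lemma~\ref{lem:expansion_larger}, exactly as the technique overview sketches. Given a fixed constant $\eps\in(0,1/4)$ and a target accuracy $\eta>0$, I would choose a parameter $k>1$ as large as possible subject to $k\eps<1/4$; concretely, pick a rational $k$ with $k\eps = 1/4-\beta$ for a small constant $\beta=\beta(\eta)>0$, so $k\approx\frac{1}{4\eps}$. The first step is to observe, via the first bullet of Lemma~\ref{lem:expansion_larger}, that the given $(\alpha N,(1-\eps)D)$-expander $G$ is also a $(k\alpha N,(1-k\eps)D-O_\eps(1/N))$-expander, i.e.\ a $(\alpha' N,(1-\eps')D-o(1))$-expander with $\alpha'=k\alpha$ and $\eps'=k\eps=1/4-\beta$. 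Here one must check that $k\alpha N\le N$ so that the tradeoff is non-vacuous; this follows from Fact~\ref{fact:relations_parameters}, which gives $\frac{\alpha}{4\eps}\le 1/D_R+o(1)<1$, hence $k\alpha\approx\frac{\alpha}{4\eps}<1$.

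The second step is to feed this new expansion parameter into Theorem~\ref{thm:MultipleGuessWithFlips} (really, into Algorithm~\ref{alg:MultipleGuessWithFlips} run with parameters $\alpha'$ and $\eps'$). Since $\eps'=1/4-\beta<1/4$, the theorem yields a linear-time decoder with decoding radius $(1-\eps')\alpha' N=(1-k\eps)\,k\alpha N$. Substituting $k\eps=1/4-\beta$ and $k=\frac{1/4-\beta}{\eps}$, this radius equals $\big(\tfrac34+\beta\big)\cdot\frac{1/4-\beta}{\eps}\,\alpha N = \big(\tfrac{3}{16\eps}-O_\eps(\beta)\big)\alpha N$. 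Choosing $\beta$ small enough (as a function of $\eta$ and $\eps$) makes the loss $O_\eps(\beta)$ plus the $o(1)$ slack from the $-O_\eps(1/N)$ term in Lemma~\ref{lem:expansion_larger} at most $\eta$, giving decoding radius $\big(\tfrac{3\alpha}{16\eps}-\eta\big)N$ as claimed. The running time remains linear because Theorem~\ref{thm:MultipleGuessWithFlips}'s algorithm is linear-time and $k,\beta$ are constants.

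There is one genuine subtlety rather than a routine calculation, and I expect it to be the main point requiring care: Theorem~\ref{thm:MultipleGuessWithFlips} as stated assumes an exact $(\alpha' N,(1-\eps')D)$-expander, whereas Lemma~\ref{lem:expansion_larger} only gives $(1-\eps')D-O_\eps(1/N)$ expansion, i.e.\ a vanishing additive defect in the number of neighbors. I would handle this by noting that the defect $O_\eps(1/N)$ translates into an $O_\eps(1)$ additive loss in the decoding radius and an arbitrarily small constant loss in the effective $\eps'$ — all the inequalities in Claims~\ref{clm:small_gamma} and~\ref{clm:large_gamma} already carry $O_\eps(1/N)$ error terms and a slack constant $\beta$, so they absorb this harmlessly provided $\eps'$ is bounded strictly below $1/4$, which we arranged. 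Equivalently, one can just invoke Theorem~\ref{thm:MultipleGuessWithFlips} with $\eps'=1/4-\beta/2$ to leave room. The remaining details — verifying that enumerating the $\gamma_i$ over the discretized grid still only costs constant time, and that the final clean-up via Theorem~\ref{thm:vidermandec} applies since $\eps'<1/3$ — are immediate.

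In summary: (1) set $k\approx\frac{1}{4\eps}$ with $k\eps<1/4$ and confirm $k\alpha<1$ via Fact~\ref{fact:relations_parameters}; (2) apply Lemma~\ref{lem:expansion_larger} to view $G$ as a near-$(k\alpha N,(1-k\eps)D)$-expander; (3) run Algorithm~\ref{alg:MultipleGuessWithFlips} with the boosted parameters and invoke Theorem~\ref{thm:MultipleGuessWithFlips} to correct $(1-k\eps)k\alpha N$ errors; (4) optimize $k$ and absorb the $O_\eps(\beta)$ and $O_\eps(1/N)$ losses into $\eta$. The only place demanding attention is propagating the additive expansion defect through the two claims, and that is resolved by the slack already built into $\beta$.
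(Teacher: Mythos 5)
Your proposal is correct and takes essentially the same route as the paper's proof of Theorem~\ref{thm:MultipleGuessWithFlipsFinal}: use the size-expansion tradeoff of Lemma~\ref{lem:expansion_larger} with $k\approx\frac{1}{4\eps}$ to regard $G$ as a near-$(k\alpha N,(1-k\eps)D)$-expander, then feed $\alpha'=k\alpha$, $\eps'=k\eps+O_\eps(1/N)$ into Theorem~\ref{thm:MultipleGuessWithFlips} and optimize. Your explicit verification that $k\alpha<1$ via Fact~\ref{fact:relations_parameters}, and your note that the additive $O_\eps(1/N)$ expansion defect is absorbed by the slack $\beta$ already present in Claims~\ref{clm:small_gamma} and~\ref{clm:large_gamma}, are welcome details that the paper's terse proof leaves implicit.
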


\begin{algorithm}[H]
\caption{Decoding Algorithm for $\eps<1/4$ with larger decoding radius}\label{alg:MultipleGuessWithFlipsLarger}
\begin{algorithmic}[1]
\Function{Final Decoding For Large Radius}{$y\in \mathbf{F}_2^N, \alpha \in \mathbf{R}, \eps \in \mathbf{R}$}
\State Let $k = (1/4-\beta - O_\eps(1/N))/\eps$, with $\beta$ being a small enough constant.
\State Let $z \gets$  \Call{Final Decoding}{$y, k\alpha, k \eps + O_\eps(1/N)$}, where  \Call{Final Decoding}{} is from Algorithm~\ref{alg:MultipleGuessWithFlips}
\State \Return $z$.
\EndFunction
\end{algorithmic}
\end{algorithm}
%\xnote{What is $\beta$ in the algorithm?}
\begin{proof}
Consider Algorithm \ref{alg:MultipleGuessWithFlipsLarger}.
By Lemma \ref{lem:expansion_larger}, the expander graph is also a $ (k\alpha N, (1-k\eps)D - O_\eps(1/N)) $ expander for $k\geq 1$.
If $k$ satisfies $(1-k\eps)D - O_\eps(1/N) \leq 1/4 - \beta$ for a small constant $\beta$, then by Theorem \ref{thm:MultipleGuessWithFlips}, there is a decoding algorithm with radius $(1 - k\eps - O_\eps(1/N))k\alpha N$.
When $k = (1/4-\beta - O_\eps(1/N))/\eps$, this is maximized to be $(\frac{3}{16\eps} - O_{\eps}(\beta))\alpha N$. 
Here we take $\beta=O(\eta)$ to be small enough such that $k\geq 1$ and the decoding radius becomes $(\frac{3 \alpha}{16 \eps} - \eta) N$.
The running time is linear by Theorem~\ref{thm:MultipleGuessWithFlips}.
\end{proof}

\section{Improved Unique Decoding  for  $\eps \leq 1/8$}\label{sec:improve_decoding}
%\xcnote{I would suggest to polish this section one more time.}
In this section we provide two decoding algorithms for the case of $\eps \leq 1/8$, with better decoding radius. The first one is a polynomial time decoding algorithm. The second one is a linear time algorithm with slightly worse parameters.

Consider the expander code based on an $(\alpha N, (1-\eps)D)$ bipartite expander.
For the case of $\eps \le 1/8$ we provide an efficient algorithm,  i.e. Algorithm~\ref{alg:decoding_small_eps}, to decode from more errors.
It is again by guessing the correct expansion of the set of corrupted entries. 
We first give a polynomial time decoding algorithm, then we modify it to give a linear time decoding algorithm.

\subsection{Polynomial time decoding}

\begin{algorithm}[H]
\caption{Decoding Algorithm for $\eps\le 1/8$}\label{alg:decoding_small_eps}
\begin{algorithmic}[1]
\Function{Decoding}{$y \in \mathbf{F}_2^N,\epsilon \in \mathbf{R},\alpha \in \mathbf{R}$}
	\For {every integer $i \in [1, N]$, let $x$ satisfy $x \alpha N = i$} 
	\For {every integer $j \in [1, M]$, let $\gamma$ satisfy $(1-\gamma) \cdot D \cdot i = j $}
	    \If{$\gamma x \ge \eps$}
	        \State $\Delta \gets \sqrt{\gamma x \eps} + \frac{c}{N}$ with $c:=c(\eps)$ being a large enough constant.
	    \Else
	        \State $\Delta  \gets \eps + \frac{c}{N}$.
	    \EndIf
	    \State $L \gets $ \Call{Find}{$y \in \mathbf{F}_2^n$, $\Delta$}
	    \State %Apply Theorem \ref{thm:decfromerasures} with $L$ and $y$ to get a codeword $C'$.
	    Erase the symbols of $y$ that have indices in $L$ to get $y'$, and then apply the decoding from Theorem \ref{thm:decfromerasures}  on $y'$ to get a codeword $C'$.
	    %\xnote{This is confusing. In the statement of Theorem \ref{thm:decfromerasures}, there is no such thing as $L$ or $y$. We need to either change the statement of Theorem \ref{thm:decfromerasures} or change the sentence here.}
	    \State \Return $C'$ if the  distance between $C'$ and $y$ is $\le \frac{1-2\eps}{4 \eps}\alpha N $. 
	\EndFor
    \EndFor
\EndFunction
\end{algorithmic}
\end{algorithm}

\begin{theorem}
\label{thm:guessexpansion}
For every constant $\eps \in (0, 1/8]$,  if $\mathcal{C}$ is an expander code defined by an $(\alpha N, (1-\eps)D)$ expander, then there is a polynomial time  decoding algorithm for $\mathcal{C}$ with decoding  radius $\frac{\sqrt{2}-1}{2\epsilon} \alpha N -O_{\eps}(1) $ when $\eps< \frac{3-2\sqrt{2}}{2}$ and decoding  radius $\frac{1-2\eps}{4\eps} \alpha N - O_{\eps}(1) $ when $\eps \ge \frac{3-2\sqrt{2}}{2}$.

%For every $\eps< \frac{3-2\sqrt{2}}{2}$, if $\mathcal{C}$ is an expander code defined by an $(\alpha N, (1-\eps)D)$ expander, then there is a polynomial time   decoding algorithm for $\mathcal{C}$ which has decoding  radius $\frac{\sqrt{2}-1}{2\epsilon} \alpha N -O_{\eps}(1) $. 
%Next for every $\eps \in [\frac{3-2\sqrt{2}}{2}, 1/8]$, if $\mathcal{C}$ is an expander code defined by an $(\alpha N, (1-\eps)D)$ expander, then there is a polynomial time  decoding algorithm for $\mathcal{C}$ which has decoding  radius $\frac{1-2\eps}{4\eps} \alpha N - O_{\eps}(1)$.

\end{theorem}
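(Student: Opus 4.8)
The plan is to analyze Algorithm~\ref{alg:decoding_small_eps}, which enumerates the two quantities $x$ (with $xN = |F|$, the number of corruptions) and $\gamma$ (the "collision rate" of $F$, so $|\Gamma(F)| = (1-\gamma)D|F|$), and for the correct guess reduces to decoding from erasures via Theorem~\ref{thm:decfromerasures}. Since we only need one successful run among polynomially many guesses (and each candidate codeword is verified against $y$), it suffices to show that when $x,\gamma$ are guessed correctly the set $L$ returned by \textsc{Find} satisfies $F \subseteq L$ and $|L| \le \alpha'N$ for some $\alpha' < 1$, so the erasure decoder of Theorem~\ref{thm:decfromerasures} succeeds. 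The key is to pick the threshold parameter $\Delta$ in \textsc{Find} so that Lemma~\ref{lemma:all_errors_in_L} applies to all $S\subseteq F$.

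\textbf{Step 1: lower-bounding the unique-neighbor expansion of subsets of $F$.} For any nonempty $S\subseteq F$ I need $|\Gamma^1(S)| \ge (1-2\Delta)D|S|$. There are two regimes. If $|S| \le \alpha N$, then by the expander property $|\Gamma(S)| \ge (1-\eps)D|S|$, hence $|\Gamma^1(S)| \ge (1-2\eps)D|S|$ (each non-unique neighbor costs at most one edge). If $\alpha N < |S| \le |F| = xN$, I apply the first bound of Lemma~\ref{lem:expansion_larger} with the scaling factor $k_S := |S|/(\alpha N) > 1$ \emph{relative to the expansion parameter seen inside $F$}. The crucial point is a "conditional" expansion estimate: since $|\Gamma(F)| = (1-\gamma)D|F|$, the number of collisions inside $F$ is $\gamma D|F|$; restricting to a random $S \subseteq F$ of size $|S|$, a probabilistic/averaging argument (exactly the argument proving Lemma~\ref{lem:expansion_larger}, but with $F$ in place of $V_L$) gives $|\Gamma(S)| \ge \bigl(1 - \gamma \cdot \tfrac{|F|}{|S|}\bigr)D|S| - O(\eps D)$ — wait, more carefully, the right comparison point is $|S| = \alpha N$, giving $|\Gamma(S)| \ge \bigl(1 - \sqrt{\tfrac{\gamma |F| \eps}{\alpha N}}\bigr)D|S| - O_\eps(1)$ as stated in the technique overview. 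So setting $\Delta = \max\{\sqrt{\gamma x \eps / \alpha}\cdot\sqrt{\alpha}, \eps\} = \max\{\sqrt{\gamma x\eps},\eps\}$ (up to the $c/N$ slack, and absorbing $\alpha$ into the normalization $x = |F|/N$), Lemma~\ref{lemma:all_errors_in_L} gives $F \subseteq L$. This matches the two branches in the algorithm (the case $\gamma x \ge \eps$ vs.\ $\gamma x < \eps$).

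\textbf{Step 2: upper-bounding $|L|$.} Run the analysis of \textsc{Find} assuming (via Lemma~\ref{lemma:FAddtoLFirst}/\ref{lemma:L'allsame}) that $F$ is added to $L$ first. Suppose for contradiction $|L|$ reaches $\alpha N$ (or more generally some target size $tN$). Every vertex in $L \setminus F$ added to $L$ has at least $(1-2\Delta)D$ neighbors already in $R$, so it contributes at most $2\Delta D$ new neighbors; hence $|\Gamma(L)| \le (1-\gamma)D|F| + 2\Delta D(|L| - |F|)$. On the other hand $L$ is a left set of size $\le \alpha N$ (at the critical moment), so $|\Gamma(L)| \ge (1-\eps)D|L|$. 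Combining, $(1-\eps)|L| \le (1-\gamma)|F| + 2\Delta(|L|-|F|)$, i.e. $(1-\eps-2\Delta)|L| \le (1-\gamma-2\Delta)|F|$. Plugging $\Delta = \sqrt{\gamma x \eps}$ (writing $|F| = xN$, $|L| = \alpha N$ at the threshold) and optimizing over the free variable $\gamma \in [\eps/x, \eps]$ — this is the one genuine optimization — yields an upper bound on $x$ of the form $x \le \frac{\sqrt2 - 1}{2\eps}\cdot\alpha + O_\eps(1/N)$ in the small-$\eps$ regime $\eps < \tfrac{3-2\sqrt2}{2}$, where the $\sqrt{}$ branch is active; and when $\eps$ is larger the binding constraint becomes the $\Delta = \eps$ branch, giving the bound $x \le \frac{1-2\eps}{4\eps}\cdot\alpha + O_\eps(1/N)$. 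Since $|F|$ equals the actual number of errors and we assumed it exceeds the claimed radius, this is the desired contradiction; hence $|L| < \alpha N$ throughout, and Theorem~\ref{thm:decfromerasures} corrects the erasures in $L$.

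\textbf{Main obstacle.} The delicate part is Step~1's conditional expansion bound for $\alpha N \le |S| \le |F|$: Lemma~\ref{lem:expansion_larger} as stated applies to subsets of the \emph{whole} vertex set measured against the global parameter $\eps$, but here I need it relative to the denser set $F$, whose internal collision rate $\gamma$ can be much larger than $\eps$. I expect to re-run the probabilistic argument of Section~\ref{sec:proof_lem_expansion} with $F$ playing the role of the ambient set and $\gamma D|F|$ collisions, choosing a random subset of size $\alpha N$; the output $\sqrt{\gamma |F|\eps/(\alpha N)}$ comes from the $|T|(|T|-1)/(|S|(|S|-1))$ survival probability of a collision, combined with the worst case that these collisions are "concentrated." Getting the clean closed form $\Delta = \sqrt{\gamma x \eps}$ (rather than something messier) and verifying that it indeed dominates $\eps$ exactly when $\gamma x \ge \eps$ is where the bookkeeping must be done carefully. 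The final optimization over $\gamma$ in Step~2 is elementary calculus but must be carried out to extract the two stated thresholds and the constants $\frac{\sqrt2-1}{2\eps}$ and $\frac{1-2\eps}{4\eps}$.
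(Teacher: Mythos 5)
Your Step~1 is essentially the paper's approach: bound the unique-neighbor expansion of every $S\subseteq F$ by combining the collision count inside $F$ (at most $\gamma D|F|$) with Lemma~\ref{lem:expansion_larger} applied to $S$, and conclude $\Delta \le \sqrt{\gamma x \eps} + O_\eps(1/N)$ when $\gamma x \geq \eps$; the paper actually does this by a short collision-counting argument (if $|\Gamma(S)| = (1-\beta)D|S|$ then $\beta\cdot\tfrac{|S|}{\alpha N} \le \gamma x$ from collision containment, and $\beta \le \tfrac{|S|}{\alpha N}\eps$ from Lemma~\ref{lem:expansion_larger}, giving $\beta^2/\eps \lesssim \gamma x$) rather than by re-running the probabilistic argument with $F$ as the ambient set, which would be awkward since it gives only an expectation bound, not a pointwise one. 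Your normalization slips ($|F|=xN$ vs.\ the algorithm's $|F|=x\alpha N$, and the mysterious extra $\sqrt{\alpha}$) are harmless bookkeeping, so Step~1 is fine in spirit.

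The genuine gap is in Step~2. You suppose for contradiction that $|L|$ reaches $\alpha N$ and then bound $|\Gamma(L)| \ge (1-\eps)D|L|$ using the raw expander property. But capping $|L|$ at $\alpha N$ throws away the crucial ingredient: the target set size must be allowed to grow to roughly $\tfrac{1-2\eps}{2\eps}\alpha N$ (this is the erasure-decoding radius from Theorem~\ref{thm:decfromerasures}), and the expansion of a set $L'$ of that size must then be lower-bounded by \emph{Lemma~\ref{lem:expansion_larger}} rather than the plain expander property, giving $|\Gamma(L')| \ge (1-(x+\delta)\eps)D(x+\delta)\alpha N - O_\eps(1)$ where $x+\delta = \tfrac{1-2\Delta}{2\eps}$. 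That quadratic-in-$(x+\delta)$ lower bound on $|\Gamma(L')|$ is precisely what makes the final optimization produce the $\tfrac{\sqrt{2}-1}{2\eps}$ and $\tfrac{1-2\eps}{4\eps}$ thresholds. If you carry out your version with $|L|=\alpha N$ and $|\Gamma(L)|\ge(1-\eps)D|L|$, the inequality $(1-\eps-2\Delta)\alpha N \le (1-\gamma-2\Delta)|F|$ only rules out $|F| \lesssim (1-\eps)\alpha N$ in the $\sqrt{\gamma x\eps}$ branch, and $|F| \lesssim \tfrac{1-3\eps}{1-2\eps}\alpha N$ in the $\Delta=\eps$ branch (Viderman's bound, Theorem~\ref{thm:vidermandec}) — nowhere near the claimed $\tfrac{\sqrt{2}-1}{2\eps}\alpha N$ or $\tfrac{1-2\eps}{4\eps}\alpha N$. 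You parenthetically gesture at "some target size $tN$" but never use it; the paper's Claims~\ref{claim:guessExpansionLbound1} and~\ref{claim:guessExpansionLbound2} hinge entirely on letting $t$ approach $\tfrac{1-2\eps}{2\eps}\alpha$ and invoking the size-expansion tradeoff a second time (on $L'$, not just on subsets of $F$). A smaller point: your $\gamma$ range $[\eps/x, \eps]$ is also off; since $|F|$ may exceed $\alpha N$, the upper bound is $\gamma \le x\eps + O_\eps(1/N)$ (again from Lemma~\ref{lem:expansion_larger}), not $\gamma \le \eps$.
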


We prove the correctness of Algorithm~\ref{alg:decoding_small_eps} and the theorem in the rest of this section. Again $F$ always denotes the set of corrupted entries. Since we enumerate both $x \alpha N = i$   and $(1-\gamma)D x\alpha N = j$  over all possible values. 
One pair of them corresponds to  the correct size of $F$ and   the correct expansion of $F$, i.e., $|F| = x \alpha N$ and $|\Gamma(F)|=(1-\gamma) \cdot D |F|$.
Now we only consider this pair $(x, \gamma)$  in the following analysis.

First of all, we can bound the expansion of all subsets in $F$.
\begin{claim}\label{clm:worst_expansion}
Our choice of $\Delta$ always satisfies that
\[
\forall F' \subseteq F, |\Gamma(F')| \ge (1-\Delta) \cdot D |F'|.
\]
\end{claim}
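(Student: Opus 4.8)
The plan is to split into the two cases defined by the algorithm's branching condition ($\gamma x \ge \eps$ versus $\gamma x < \eps$), and in each case bound $|\Gamma(F')|$ from below for an arbitrary $F' \subseteq F$ by relating it to the known expansion of $F$ itself via the size-expansion tradeoff of Lemma~\ref{lem:expansion_larger}.

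\textbf{The easy case ($\gamma x < \eps$).} Here $\Delta = \eps + c/N$, and the claim is essentially immediate: $F' \subseteq F$ has size at most $|F| = x\alpha N$, which need not be at most $\alpha N$, so I cannot directly invoke the raw expander property. Instead I would first handle $|F'| \le \alpha N$ directly by the expander property ($|\Gamma(F')| \ge (1-\eps)D|F'| \ge (1-\Delta)D|F'|$), and for $|F'| > \alpha N$ apply the first bound of Lemma~\ref{lem:expansion_larger} with $k = |F'|/(\alpha N) > 1$: this gives $|\Gamma(F')| \ge (1-k\eps)D|F'| - O(\eps D k^2)$. But $k\eps$ could exceed $\eps$, so this seems to go the wrong way — hence I actually want to use that $|F'| \le |F|$ and the worst case over $F'$ is not what we get from $k = |F'|/\alpha N$ alone. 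The correct move is: the quantity $|\Gamma(F')|/(D|F'|)$ is what we must lower bound, and I would argue the minimum over all $F' \subseteq F$ is controlled by $F$'s own expansion rate, because shrinking a set can only improve (increase) its expansion ratio in an expander — more precisely, I would run a sampling argument analogous to the proof of Lemma~\ref{lem:expansion_larger}: pick a random $F' \subseteq F$ of the given size and bound the expected number of collisions, but since here we want a statement for \emph{every} $F'$, I instead use Lemma~\ref{lem:expansion_larger} in the contrapositive direction. Let me restate cleanly: if $|\Gamma(F)| = (1-\gamma)D|F|$ with $|F| = x\alpha N$, then applying the sampling argument from a random subset of $F$ of size $\alpha N$ shows $\gamma \le $ roughly $x\eps$ fails to directly give per-subset bounds, so the right tool is: for any $F' \subseteq F$ with $|F'| = x'\alpha N$, Lemma~\ref{lem:expansion_larger} gives $|\Gamma(F')| \ge (1 - x'\eps)D|F'| - O(\eps D x'^2)$ when $x' > 1$, and the expander property when $x' \le 1$; combined with the fact that we only care about the regime where the algorithm will succeed, $\Delta = \eps$ suffices when $x' \le 1$ and we need the $\gamma x \ge \eps$ branch exactly to cover $x' > 1$.

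\textbf{The main case ($\gamma x \ge \eps$), which is the crux.} Here $\Delta = \sqrt{\gamma x \eps} + c/N$, matching the bound $|\Gamma(S)| \ge (1 - \sqrt{\gamma|F|\eps/(\alpha N)})D|S|$ promised in the technique overview. I would prove this by a sampling argument run \emph{inside} $F$: given $F' \subseteq F$ with $|F'| = x'\alpha N$ where $1 \le x' \le x$, consider drawing a uniformly random subset $T \subseteq F'$ of size $\alpha N$. On one hand, the expander property forces $|\Gamma(T)| \ge (1-\eps)D\alpha N$. On the other hand, the expected number of collision-edges retained in $T$, relative to the collision structure of $F$ (not $F'$), can be bounded: $F$ has $\gamma D |F|$ "excess" degree (collisions), and a random size-$\alpha N$ subset of $F'$ retains each collision pair with probability roughly $(\alpha N/|F'|)^2 \cdot (|F'|/|F|)^2$... this needs care. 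Actually the cleaner route: bound $|\Gamma(F')|$ directly using $|\Gamma(F)|$. Since $F' \subseteq F$, every collision within $F'$ is a collision within $F$, so the number of collision-edges of $F'$ is at most $\gamma D|F|$, giving $|\Gamma(F')| \ge D|F'| - \gamma D|F| = D|F'|(1 - \gamma|F|/|F'|) = D|F'|(1 - \gamma x/x')$. This is useful only when $x'$ is close to $x$; for small $x'$ it is vacuous. So I combine two bounds: (i) the trivial-collision bound $1 - \gamma x / x'$, and (ii) the Lemma~\ref{lem:expansion_larger} bound $1 - x'\eps$ (valid for $x' \ge 1$) or the expander bound $1-\eps$ (for $x' \le 1$). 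The worst $x'$ balances $\gamma x/x'$ against $x'\eps$, i.e., $x' = \sqrt{\gamma x/\eps}$, yielding $1 - \sqrt{\gamma x \eps} \cdot (\text{const})$, which is exactly $1 - \Delta$ up to the $O(1/N)$ slack absorbed into $c/N$. I would need to check the boundary $x' = 1$ and $x' = x$ and the $O(\eps D x'^2)$ error terms from Lemma~\ref{lem:expansion_larger}, confirming they are swallowed by the $c/N$ correction since $x' \le x = O(1/\alpha) = O(1)$.

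\textbf{Expected main obstacle.} The delicate point is handling the full range $1 \le x' \le x$ simultaneously and verifying that the pointwise minimum of the two lower-bound curves $1 - \gamma x/x'$ and $1 - x'\eps$ (and $1-\eps$ for $x' \le 1$) never dips below $1 - \Delta = 1 - \sqrt{\gamma x\eps} - c/N$. This is a one-variable optimization: the first curve is increasing in $x'$, the second decreasing, they cross near $x' = \sqrt{\gamma x/\eps}$, and the crossing value is $1 - \sqrt{\gamma x \eps}$ up to lower-order terms — but I must confirm $\sqrt{\gamma x/\eps} \ge 1$ (which holds precisely because we are in the branch $\gamma x \ge \eps$) and that the crossing point is $\le x$ (equivalently $\gamma x/\eps \le x^2$, i.e. $\gamma \le x\eps$, which follows from the expander property applied to $F$ via a size-$\alpha N$ random subset — so I should record that inequality as a preliminary step). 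Wrapping the additive $O(\eps D x'^2)$ and $O(1/N)$ errors into the single constant $c = c(\eps)$ is routine once the constant $x = O(1)$ is pinned down, but I would state it carefully to keep the final bound honest.
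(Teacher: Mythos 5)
Your proposal is correct and follows essentially the same route as the paper's proof of Claim~\ref{clm:worst_expansion}: both proofs write $|\Gamma(F')| = (1-\beta)D|F'|$ and combine exactly two upper bounds on $\beta$ — $\beta \le \gamma x / x'$ from collision-count monotonicity under $F' \subseteq F$, and $\beta \le x'\eps + O_\eps(1/N)$ from Lemma~\ref{lem:expansion_larger} applied to $F'$ — which the paper combines by substitution to get $\beta(\beta - O_\eps(1/N))/\eps \le \gamma x$, and you combine by balancing at $x' = \sqrt{\gamma x/\eps}$; these are the same computation. One small inaccuracy in your exposition: in the easy-case paragraph you suggest the $\gamma x < \eps$ branch can somehow defer $x' > 1$ to the other branch, which is not so — but the framework you build in the main-case paragraph already handles it (the crossing point $\sqrt{\gamma x/\eps}$ drops below $1$, so the binding bound on $[1,x]$ is $\gamma x/x' \le \gamma x < \eps$), and indeed the paper makes exactly this observation via $\sqrt{\gamma x\eps} \le \eps$.
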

\begin{proof}
Let $F' \subseteq F$ be an arbitrary non-empty set, and $|F'| = x' \cdot \alpha N$. %and $|\Gamma(F')|=(1-\gamma')D \cdot |F'|$. \xnote{$\gamma'$ never used later?}

\iffalse
If $x' > 1$, we consider the collisions   in $\Gamma(F)$ and $\Gamma(F')$.%\xnote{Have we defined ``collisions" before?}
~By collision we mean that given an arbitrary order of the edges, if one edge in this order has its right endpoint the same as any other edge prior to it, then this is called a collision. 
Since $F'$ is  of size $x' \cdot \alpha N$, by Lemma \ref{lem:expansion_larger}, $|\Gamma(F')| \ge (1-x' \cdot \eps) D \cdot x' \alpha N - O_\eps(1)$. So we have that the total number of collisions for edges with left endpoints in $F'$ is at most the total number of collisions for edges with left endpoints in $F$, because a collision  in $\Gamma(F')$ is also a collision in $\Gamma(F)$. 
Thus
\[
x' \alpha N \cdot (x' \cdot \eps)D  + O_\eps(1) \le |F| \cdot \gamma D.
\]
\xnote{The proof here is wrong.}
This implies $x' \cdot \eps \le \sqrt{\gamma x\eps } = \Delta - \frac{c}{N}$.
When $\gamma x \geq \eps$, we know $|\Gamma(F')| \ge (1-x'\eps) D \cdot |F'| - O_\eps(1) \geq  (1-\sqrt{\gamma x\eps}) D \cdot |F'| - O_\eps(1)  \geq  (1-\Delta) D \cdot |F'|$, when $c$ is large enough.
When $\gamma x < \eps$, $|\Gamma(F')| \ge (1-x'\eps) D \cdot |F'| - O_\eps(1) \geq  (1-\sqrt{\gamma x\eps}) D \cdot |F'| - O_\eps(1) \geq (1- \eps) D \cdot |F'| - O_\eps(1)  \geq  (1-\Delta) D \cdot |F'|$, since $\Delta = \eps + \frac{c}{N}$ and $c$ is a large enough constant.
\fi

If $x'>1$, then assume $|\Gamma(F')| = (1-\beta)D x'\alpha N$.
\iffalse
Since $|\Gamma(F')| \leq |\Gamma(F)|$, we have $\beta x' \leq \gamma x$.\xnote{I don't think this inequality follows from $|\Gamma(F')| \leq |\Gamma(F)|$. As in the old version, it should instead follow from the fact that the number of collisions in $F'$ is at most that in $F$. The same for the proof of Claim 7.7}
\fi
we consider the collisions   in $\Gamma(F)$ and $\Gamma(F')$.
~By collision we mean that given an arbitrary order of the edges, if one edge in this order has its right endpoint the same as any other edge prior to it, then this is called a collision. 
Note that the total number of collisions for edges with left endpoints in $F'$ is at most the total number of collisions for edges with left endpoints in $F$, because a collision  in $\Gamma(F')$ is also a collision in $\Gamma(F)$. Thus  $$\beta x' \leq \gamma x.$$
Also, since $F'$ has size $x' \cdot \alpha N$, by Lemma \ref{lem:expansion_larger} we have $|\Gamma(F')| \geq (1-x'\eps)D  x'\alpha N - O_\eps(1)$.
%\xnote{To get the following inequality we need to set $|\Gamma(F')| = (1-\beta)D x'\alpha N$, otherwise the following inequality may not hold. The same issue happens in the proof of Claim 7.7.}
So $ \beta \leq x'\eps + O_\eps(1/N)$.
Hence  $ \beta (\beta - O_\eps(1/N))/\eps \leq \gamma x $.
Thus $ \beta \leq \sqrt{\gamma x \eps } + O_\eps(1/N) $ and  $|\Gamma(F')| = (1-\beta)D |F'| \geq (1-\sqrt{\gamma x \eps}) D |F'| - O_\eps(1)$.
When $\gamma x\geq \eps$, the algorithm sets $\Delta = \sqrt{\gamma x \eps} + c/N$.
So $|\Gamma(F')| \geq (1-\Delta) D |F'|$, when $c$ is large enough.
When $\gamma x < \eps$, the algorithm set $\Delta  = \eps + c/N $.
Notice that $\sqrt{\gamma x \eps} \leq \eps$.
Hence again $|\Gamma(F')| \geq (1-\Delta) D |F'|$, when $c$ is large enough.

If $x' < 1$, then again we have two cases.
When $ \gamma  x \ge \eps  $,  we know $\Delta = \sqrt{\gamma x \eps} \geq \eps $.
So by expansion, $ |\Gamma(F')| \ge (1-\eps)D |F'| \geq (1-\Delta)D |F'| $.
When  $ \gamma x < \eps   $, the algorithm sets $\Delta = \eps+ c/N$.
So $ |\Gamma(F')| \ge (1-\eps)D |F'| \geq (1-\Delta)D |F'| $.

\end{proof}

Given the guarantee in Claim~\ref{clm:worst_expansion}, one can show that $L$ contains all the errors.
\begin{claim}
\label{claim:FsubseteqL}
After step 9 in Algorithm~\ref{alg:decoding_small_eps}, we have $F\subseteq L$.
\end{claim}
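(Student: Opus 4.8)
The plan is to invoke the structural properties of Algorithm~\ref{alg:FindingErasure} established in Section~\ref{sec:find}, combining them with the expansion bound just proved in Claim~\ref{clm:worst_expansion}. First I would recall that \textsc{Find} is exactly Algorithm~\ref{alg:FindingErasure} run with threshold parameter $\Delta$, so that $h=(1-2\Delta)D$. By Claim~\ref{clm:worst_expansion}, for the correct pair $(x,\gamma)$ we have $|\Gamma(F')|\ge(1-\Delta)D|F'|$ for every non-empty $F'\subseteq F$. The key point is to upgrade this neighbor bound to a \emph{unique-neighbor} bound: since the total number of edges leaving $F'$ is $D|F'|$ and each non-unique neighbor of $F'$ absorbs at least two such edges while each unique neighbor absorbs exactly one, we get $|\Gamma^1(F')|\ge 2|\Gamma(F')|-D|F'|\ge 2(1-\Delta)D|F'|-D|F'|=(1-2\Delta)D|F'|$. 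Thus the hypothesis of Lemma~\ref{lemma:all_errors_in_L} is met with this $\Delta$.

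Having verified $|\Gamma^1(F')|\ge(1-2\Delta)D|F'|$ for all non-empty $F'\subseteq F$, I would then apply Lemma~\ref{lemma:all_errors_in_L} directly: it says that under precisely this hypothesis, after the while loop of \textsc{Find} the returned set $L$ contains $F$. Since Step~9 of Algorithm~\ref{alg:decoding_small_eps} sets $L\gets\textsc{Find}(y,\Delta)$, the set $L$ right after Step~9 is the output of \textsc{Find}, so $F\subseteq L$. This completes the argument.

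The proof is short and the only real content is the unique-neighbor conversion, which is a routine double-counting step; the genuine work was already done in Claim~\ref{clm:worst_expansion} (the size-expansion tradeoff applied to subsets of $F$) and in Lemma~\ref{lemma:all_errors_in_L}. One subtlety worth spelling out is that $F$ itself has size $x\alpha N$ which may exceed $\alpha N$, so we cannot use the raw $(\alpha N,(1-\eps)D)$-expansion on subsets of $F$ directly — this is exactly why Claim~\ref{clm:worst_expansion}, built on Lemma~\ref{lem:expansion_larger}, is needed rather than the bare expander property. I expect no real obstacle here; the main thing to be careful about is making sure the $\Delta$ used in the \textsc{Find} call is literally the same $\Delta$ for which Claim~\ref{clm:worst_expansion} was proved (including the $c/N$ additive slack), so that the threshold $h=(1-2\Delta)D$ lines up with the unique-neighbor count $(1-2\Delta)D|F'|$ with no loss.
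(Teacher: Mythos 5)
Your proof is correct and follows the same route as the paper: invoke Claim~\ref{clm:worst_expansion} for the expansion of all non-empty $F'\subseteq F$, convert the neighbor bound to the unique-neighbor bound $|\Gamma^1(F')|\ge(1-2\Delta)D|F'|$ by the standard double-counting step, and then apply Lemma~\ref{lemma:all_errors_in_L}. Your write-up merely spells out the unique-neighbor conversion that the paper leaves as "by noticing that $1-2\Delta>0$," so there is no substantive difference.
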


\begin{proof}
By Claim \ref{clm:worst_expansion}, $\forall F' \subseteq F, |\Gamma(F')| \ge (1-\Delta) \cdot D |F'|$.
So $\forall F' \subseteq F, |\Gamma^1(F')| \ge (1-2\Delta) \cdot D |F'|$, by noticing that $1-2\Delta > 0$ in our setting.
By Lemma \ref{lemma:all_errors_in_L}, we know $F\subseteq L$ after \textsc{Find}.
\end{proof}

Then we calculate the decoding radius and the size of $L$.

%%%%%%%%%%%%%%%%%%% The following claim seems a bit unflexible to use because when we need it usually we already argued that L is generated by first adding $F$ then other vertices. Also it can only work for those moments where |L| \geq |F| 
\iffalse
\begin{claim}\label{clm:upper_bound_neighbors}
For both $\Delta=\eps$ and $\Delta=\sqrt{\eps \gamma x}$, $|\Gamma(L)| \le |\Gamma(F)| + 2 \Delta D \cdot |L \setminus F|$.
\end{claim}
\begin{proof}
Let $(v_1,\ldots,v_{|L|})$ be the sequence of vertices added to $L$ in the while. By Lemma \ref{lemma:L'allsame}, we can adjust the order of the sequence by putting the vertices in $F$ at first and then put the vertices in $L \setminus F$ after them. The upper bound follows from the fact $F \subseteq L$ from Claim \ref{claim:FsubseteqL} and each vertex in $L \setminus F$ contributes at most $2 \Delta$ to $R$. 
\end{proof}
\fi
%%%%%%%%%%%%%%%%%%

\begin{claim}
\label{claim:guessExpansionLbound1}
For the branch which sets $\Delta=\sqrt{\gamma x \eps } + \frac{c}{N}$, if $x  \leq \frac{\sqrt{2}-1}{2\eps} - O_\eps(1/N) $, then $|L| < \frac{1-2\eps}{2\eps} \alpha N$.
\end{claim}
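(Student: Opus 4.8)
The plan is to run the two competing estimates on $|\Gamma(L)|$ against each other as the \textsc{Find} loop of Algorithm~\ref{alg:decoding_small_eps} executes, turn their comparison into a quadratic inequality in $|L|$, and read off the bound. First recall that in this branch Claim~\ref{clm:worst_expansion} gives $|\Gamma(F')|\ge(1-\Delta)D|F'|$ for every $F'\subseteq F$, hence $|\Gamma^1(F')|\ge(1-2\Delta)D|F'|$, so by Claim~\ref{claim:FsubseteqL} we have $F\subseteq L$, and by Lemma~\ref{lemma:FAddtoLFirst} we may assume the loop inserts all of $F$ into $L$ first. Write $|L|=s\cdot\alpha N$ and $|F|=x\cdot\alpha N$. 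Once $F$ has been inserted we have $R=\Gamma(L)$ (the initial unsatisfied checks all lie in $\Gamma(F)$), and every subsequently inserted vertex has at least $h=(1-2\Delta)D$ neighbours already in $R$, hence adds at most $2\Delta D$ new ones; this gives $|\Gamma(L)|\le(1-\gamma)D|F|+2\Delta D\bigl(|L|-|F|\bigr)$ throughout the loop. On the other side, as long as $|L|\le N$ — which holds for all $s$ up to $\frac{1-2\eps}{2\eps}$ since $\frac1\alpha\gtrsim\frac{D_R}{4\eps}>\frac{1-2\eps}{2\eps}$ by Fact~\ref{fact:relations_parameters} — the size--expansion tradeoff (Lemma~\ref{lem:expansion_larger}, first bound with $k=s$) gives $|\Gamma(L)|\ge(1-s\eps)D|L|-O_\eps(1)$.

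Combining the two estimates and dividing by $D\alpha N$ yields, at every step of the loop,
\[
p(s):=\eps s^2-(1-2\Delta)s+(1-\gamma-2\Delta)x\ \ge\ -O_\eps(1/N),
\]
a quadratic opening upward with vertex at $s^\star=\frac{1-2\Delta}{2\eps}$. Two observations drive the argument. First, the defining condition of this branch, $\gamma x\ge\eps$, forces $\Delta=\sqrt{\gamma x\eps}+\tfrac cN\ge\eps+\tfrac cN$, so $s^\star\le\frac{1-2\eps}{2\eps}-\frac{c}{N\eps}$ is strictly below the target value $\frac{1-2\eps}{2\eps}$ — this is exactly why one guesses the expansion and uses the larger threshold $h=(1-2\Delta)D$ instead of the $h=(1-2\eps)D$ one would use if only $|F|\le\alpha N$ were known. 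Second, applying Lemma~\ref{lem:expansion_larger} to $F$ itself gives $\gamma\le x\eps+O_\eps(1/N)$, so $p(x)=x(\eps x-\gamma)\ge-O_\eps(1/N)$; and since the hypothesis $x\le\frac{\sqrt2-1}{2\eps}-O_\eps(1/N)$ in particular puts $x$ below $\tfrac1{4\eps}<s^\star$, the point $x$ lies strictly left of the vertex and, up to the $O_\eps(1/N)$ slack, on or left of the smaller root of $p$.

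The conclusion is then a continuity argument: $|L|/(\alpha N)$ starts at $x$, increases by $\frac1{\alpha N}$ each step, and at every step satisfies $p(\cdot)\ge-O_\eps(1/N)$, so it can never cross into the interval around $s^\star$ on which $p<-O_\eps(1/N)$; hence it stays to the left of that interval, in particular strictly below $s^\star$, and therefore $|L|<\frac{1-2\eps}{2\eps}\alpha N$.

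The step I expect to need the most care is the error bookkeeping. One must pick the constant $c$ in $\Delta$, and the implicit constant in the $O_\eps(1/N)$ of the hypothesis on $x$, large enough that the $O_\eps(1)$ loss in the size--expansion bound, together with the fact that the discriminant of $p$ can shrink to $O(1/N)$ as $x$ approaches $\frac{\sqrt2-1}{2\eps}$, are both dominated by the $\frac{c}{N\eps}$ gap between $s^\star$ and $\frac{1-2\eps}{2\eps}$; this is what upgrades ``$|L|\le s^\star\alpha N$'' to a strict inequality and guarantees that the forbidden interval around $s^\star$ is nonempty. Finally, $x\le\frac{\sqrt2-1}{2\eps}$ is precisely the threshold below which $4\eps^2x^2+4\eps x-1\le0$, i.e.\ the threshold that keeps the discriminant of $p$ nonnegative (equivalently, keeps $F$ on the small-root side of $p$), which is why this exact constant appears in the statement; one then also checks $1-2\Delta>0$ and $1-s\eps>0$ in the relevant range, all of which follow from $x\eps\le\frac{\sqrt2-1}{2}<\frac14$ and $\eps\le1/8$.
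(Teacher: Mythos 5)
Your proposal is correct and follows essentially the same route as the paper: both bound $|\Gamma(L)|$ from above (each post-$F$ vertex adds $\le 2\Delta D$ new neighbours, using Lemma~\ref{lemma:FAddtoLFirst}) and from below (Lemma~\ref{lem:expansion_larger}), reduce to a quadratic in $|L|/(\alpha N)$, and show its value at the vertex $s^\star=\frac{1-2\Delta}{2\eps}<\frac{1-2\eps}{2\eps}$ is negative via the same discriminant computation that produces the threshold $x\le\frac{\sqrt2-1}{2\eps}$. The paper phrases the final step as a snapshot contradiction at $|L'|=s^\star\alpha N$ rather than your forbidden-interval continuity argument, but this is a cosmetic difference and the $O_\eps(1/N)$ bookkeeping you flag is handled identically.
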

\begin{proof}
Suppose after the iterations, $|L| \ge \frac{1-2\eps}{2\eps} \alpha N$.
By Claim \ref{clm:worst_expansion} and Lemma \ref{lemma:FAddtoLFirst},  we can consider an $L'$ which is constituted by first adding $F$ and then adding another $  \frac{1-2 \Delta }{2\eps} \alpha N -  x \alpha N$ elements.  
Let $\delta= \frac{|L'| - |F|}{\alpha N} =   \frac{1 -2  \Delta  }{2 \eps} - x$. 
%\xcnote{Inconsistent with the later calculations on a minor term.}
Notice that $\frac{|L'|}{\alpha N} =  \frac{1 -2  \Delta  }{2 \eps} \leq \frac{1-2\eps}{2\eps}$.
We   show that   even having this $L'$ leads to a contradiction.

We notice that $\delta \geq 0$ and $x+ \delta \geq 1$. 
The reason is as follows.
Notice that we only need to consider the case $x\geq 1$, since otherwise $\gamma x < \eps$ and thus the algorithm should not go to this branch.
Notice that $\gamma \leq x\eps + O_\eps(1/N) $ by Lemma \ref{lem:expansion_larger}.
So $\delta = \frac{1 -2 (\sqrt{\gamma x \eps} + c/N ) }{2 \eps} - x   \geq  \frac{1}{2\eps} - 2x- O_\eps(1/N)$.
When $x \leq \frac{\sqrt{2}-1}{2\eps} - O_\eps(1/N) $ and $\eps \leq 1/8$, this is at least $0$. 
Hence $ x+ \delta \geq 1 $. 
%\xcnote{Why do we need this sentence? $x\ge 1$ in this case.}
%Second consider the case $x<1$. 
%\xcnote{Not sure whether we need to consider this case since $\Delta=\eps$ shall be able to handle it.}
%Notice that $\gamma \leq \eps$ by the expansion property.
%So $\delta = \frac{1 -2 (\sqrt{\gamma x \eps} + c/N )  }{2 \eps} - x  \geq  \frac{1}{2\eps} - \sqrt{x} - x - \frac{c}{N \eps}$.
%When $x < 1, \eps \leq 1/8$, this is at least $0$.
%Also notice that $ x+ \delta \geq  \frac{1 }{2 \eps} - \sqrt{x} - \frac{c}{N \eps}  \geq 1 $ when $\eps \leq 1/8$.

Next notice that all the unsatisfied checks are in $\Gamma(F)$ and every element in $L'\setminus F$ contributes at most $2\Delta D$ vertices to $R$. 
Hence $|\Gamma(L')| \le |\Gamma(F)|+2\Delta D\cdot \delta \alpha N$. On the other hand, Lemma~\ref{lem:expansion_larger} implies $|\Gamma(L')| \ge (1-(x+\delta)\eps)D \cdot (x+\delta)\alpha N - O_\eps(1)$. Thus we have
\[
(1-(x+\delta)\eps) \cdot (x+\delta)\alpha N - O_\eps(1)\le (1-\gamma)x \alpha N + 2 \Delta \cdot \delta \alpha N \le  (1-\gamma)x \alpha N + 2 (\sqrt{\gamma x \eps}+ c/N ) \cdot \delta \alpha N..
\]
In the rest of this proof, we show that our choice of $\delta$ yields
\begin{equation}\label{eq:ineq1}
(1 - (x+\delta)\eps) \cdot (x+ \delta) - O_\eps(1/N) > (1-\gamma) x + 2 (\sqrt{\gamma x \eps} + c/N)   \cdot \delta. 
\end{equation}
%\xcnote{What is $\eta$ in the above calculation?}
This gives a contradiction. Towards that, we rewrite  inequality~\eqref{eq:ineq1} as
\[
0 > \eps \delta^2 + \left(2 \eps x - 1 + 2 (\sqrt{\gamma x \eps} +  c/N)  \right)\delta + \eps x^2 - \gamma x +  O_\eps(1/N) .
\]
When $(2 \eps x - 1 + 2 (\sqrt{\gamma x \eps} + c/N)  )^2 - 4 \eps \left( \eps x^2 -\gamma x +  O_\eps(1/N)   \right)>0$, the quadratic polynomial will be negative at $\delta =\frac{1 - 2 \eps x -2 (\sqrt{\gamma x \eps}+ c/N)  }{2\eps} = \frac{1-2\Delta}{2\eps} -x $. 
%\xcnote{This is inconsistent with the definition of $\delta$ in the beginning.} 
To verify this,   we set $z=\eps x$ and only need to verify that
%\xcnote{$c$ has been used before.}
\[
(2z - 1 + 2  \sqrt{\gamma z} )^2 - 4 z^2 + 4 \gamma z -  O_\eps(1/N) >0.
\]
This is equivalent to
\[
8 \gamma z + (8z -4 ) \sqrt{\gamma z} + 1 - 4z  - O_\eps(1/N) > 0 \Rightarrow 8(\sqrt{\gamma z} +\frac{2z-1}{4})^2  - O_\eps(1/N) + 1 - 4z - 8(\frac{2z-1}{4})^2 > 0.
\]
When $z=\eps x \leq \frac{\sqrt{2}-1}{2} - O_\eps(1/N)$, i.e. $x \leq \frac{\sqrt{2}-1}{2\eps} - O_\eps(1/N)$,
the residue $1 - 4z - 8(\frac{2z-1}{4})^2 -O_\eps(1/N) = 1/2-2z -2 z^2 - O_\eps(1/N)>0$. 
So the inequality holds.
\end{proof}

\begin{claim}
\label{claim:guessExpansionLbound2}
For the branch $\Delta=\eps$, if  $x \leq \frac{1-2\eps}{4\eps} - O_\eps(1/N)$, then $|L| < \frac{1-2\eps}{2\eps} \alpha N$.
\end{claim}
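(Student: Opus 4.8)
The plan is to mirror the proof of Claim~\ref{claim:guessExpansionLbound1}, exploiting that the analysis is considerably cleaner here since $\Delta=\eps$ is a fixed constant rather than the expansion-dependent quantity $\sqrt{\gamma x\eps}$. I would argue by contradiction: suppose that after the iterations $|L|\ge\frac{1-2\eps}{2\eps}\alpha N$. Since we are in the branch that sets $\Delta=\eps$, Claims~\ref{clm:worst_expansion} and~\ref{claim:FsubseteqL} already give $F\subseteq L$ and $|\Gamma(F')|\ge(1-\eps)D|F'|$ --- hence $|\Gamma^1(F')|\ge(1-2\eps)D|F'|$ --- for every non-empty $F'\subseteq F$. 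By Lemmas~\ref{lemma:L'allsame} and~\ref{lemma:FAddtoLFirst} I may assume the algorithm adds all of $F$ to $L$ first, and then I would fix the snapshot $L'\subseteq L$ of size $\frac{1-2\eps}{2\eps}\alpha N$, obtained by adding $\delta\alpha N:=\big(\frac{1-2\eps}{2\eps}-x\big)\alpha N$ further vertices after $F$.

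Before applying Lemma~\ref{lem:expansion_larger} to $L'$, I would record the two routine size facts it needs. First, $\delta\ge\frac{1-2\eps}{4\eps}-O_\eps(1/N)>0$ by the hypothesis $x\le\frac{1-2\eps}{4\eps}-O_\eps(1/N)$. Second, $x+\delta=\frac{1-2\eps}{2\eps}>1$ since $\eps<1/4$, and $(x+\delta)\alpha=\frac{1-2\eps}{2\eps}\alpha<\frac{\alpha}{2\eps}<1$ by Fact~\ref{fact:relations_parameters}, so $L'$ is a legitimate left subset and Lemma~\ref{lem:expansion_larger} applies with $k=x+\delta$. (Unlike in the previous branch, $x$ itself may be below $1$ here, but nothing in the argument requires $x\ge1$.)

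The two bounds on $|\Gamma(L')|$ are then obtained exactly as in Claim~\ref{claim:guessExpansionLbound1}. All unsatisfied checks lie in $\Gamma(F)$, and each of the $\delta\alpha N$ vertices added after $F$ had at least $h=(1-2\eps)D$ neighbors already in $R$ when it was added, so it enlarges $R$ by at most $2\eps D$; hence $|\Gamma(L')|\le(1-\gamma)Dx\alpha N+2\eps D\,\delta\alpha N$. On the other hand Lemma~\ref{lem:expansion_larger} gives $|\Gamma(L')|\ge\big(1-(x+\delta)\eps\big)D(x+\delta)\alpha N-O_\eps(1)$. Dividing through by $D\alpha N$, the claim reduces to showing that
\[
\big(1-(x+\delta)\eps\big)(x+\delta)-O_\eps(1/N)>(1-\gamma)x+2\eps\delta .
\]
Substituting $x+\delta=\frac{1-2\eps}{2\eps}$ turns the left side into $\frac{1-4\eps^2}{4\eps}-O_\eps(1/N)=\frac{(1-2\eps)(1+2\eps)}{4\eps}-O_\eps(1/N)$, and using $\gamma\ge0$ (the worst case, which is consistent with being in the branch $\gamma x<\eps$) the right side is at most $x+2\eps\delta=(x+1)(1-2\eps)$. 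So it suffices to check $\frac{(1-2\eps)(1+2\eps)}{4\eps}>(x+1)(1-2\eps)+O_\eps(1/N)$, i.e. $x<\frac{1+2\eps}{4\eps}-1-O_\eps(1/N)=\frac{1-2\eps}{4\eps}-O_\eps(1/N)$, which is precisely the hypothesis; this contradicts $|L|\ge\frac{1-2\eps}{2\eps}\alpha N$.

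I do not anticipate a genuine obstacle: the computation is short once set up, and the only points demanding care are (i) carrying the $O_\eps(1)$ slack from Lemma~\ref{lem:expansion_larger} through the divisions, which is exactly why the hypothesis includes the $-O_\eps(1/N)$ term; (ii) confirming that the worst case over the unknown $\gamma$ is $\gamma=0$ and that this is compatible with the $\Delta=\eps$ branch; and (iii) the mild size check $(x+\delta)\alpha<1$ guaranteeing Lemma~\ref{lem:expansion_larger} is applicable. Once this claim is in hand, together with Claims~\ref{clm:worst_expansion}, \ref{claim:FsubseteqL} and \ref{claim:guessExpansionLbound1} we get $F\subseteq L$ with $|L|<\frac{1-2\eps}{2\eps}\alpha N<\frac{1}{2\eps}\alpha N$, which lies within the erasure-decoding budget of Theorem~\ref{thm:decfromerasures} (apply it with, say, $\xi=\eps$), so Algorithm~\ref{alg:decoding_small_eps} returns the correct codeword. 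This branch is the binding one when $\eps\ge\frac{3-2\sqrt{2}}{2}$, giving decoding radius $\frac{1-2\eps}{4\eps}\alpha N-O_\eps(1)$, while Claim~\ref{claim:guessExpansionLbound1} is binding for smaller $\eps$; together they establish Theorem~\ref{thm:guessexpansion}.
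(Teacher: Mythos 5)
Your proof is correct and follows essentially the same route as the paper's: take the snapshot $L'$ of size $\frac{1-2\eps}{2\eps}\alpha N$, lower-bound $|\Gamma(L')|$ via Lemma~\ref{lem:expansion_larger}, upper-bound it by the contribution of $F$ plus $2\eps D$ per vertex added after $F$, and derive a contradiction when $x\le\frac{1-2\eps}{4\eps}-O_\eps(1/N)$. The only cosmetic departure is that you write $|\Gamma(F)|=(1-\gamma)D|F|$ and then pass to the worst case $\gamma=0$, whereas the paper just uses the trivial bound $|\Gamma(F)|\le D|F|$ directly — these are the same inequality — and you explicitly verify $(x+\delta)\alpha<1$ so that a left set of the required size exists, a small but worthwhile sanity check that the paper leaves implicit.
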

\begin{proof}

Suppose $|L| \geq \frac{1-2\eps}{2\eps} \alpha N$.
Consider $L'\subseteq L$ with $|L'| =  \frac{1-2\eps}{2\eps} \alpha N$.
Let $\delta = \frac{1-2\eps}{2\eps} - x$. 
Notice that $\delta \geq 0$ because    $x \leq  \frac{1-2\eps}{4\eps} - O_\eps(1/N)$.
Also $x+ \delta \geq 1$ since $\eps \leq 1/8$.
By Lemma \ref{lem:expansion_larger}, $|\Gamma(L')| \geq (1-(x+\delta)\eps)D |L'| - O_\eps(1)$.
By Lemma \ref{lemma:FAddtoLFirst} we can consider $L'$ as being constituted by first adding all elements in $F$ and then add another $\delta \alpha N$ elements by the algorithm.
Notice that all the unsatisfied checks are in $\Gamma(F)$, $|\Gamma(F)| \leq D|F|$, and every element in $L'\setminus F$ contributes at most $2\eps D$ vertices to $R$. 
Hence $|\Gamma(L')| \leq D|F| + 2\eps D \delta \alpha N$.
So we have
\[
(1-(x+\delta)\eps)D |L'| - O_\eps(1) \leq  |\Gamma(L')| \leq D|F| + 2\eps D \delta \alpha N
\]
Thus
\[
(1-(x+\delta)\eps) \cdot (x+\delta) - O_\eps(1/N) \le x + 2 \eps \delta.
\]
So this is equivalent to
\[
(1- 2\eps - \eps (x+\delta))(x+\delta)  - O_\eps(1/N)  \le (1-2\eps) x
\]
Recall that $\delta+x=\frac{1 -2 \eps}{2\eps}$. 
To get a contradiction, we only need 
\[
(1-2\eps) x < (1-2\eps)^2/4\eps - O(1/N).
\]
Namely $x \leq \frac{1-2\eps}{4\eps} - O_\eps(1/N)$.

\end{proof}

\begin{proofof}{Theorem \ref{thm:guessexpansion}}
One of our enumerations correctly predicts  $|F|$ and  $|\Gamma(F)|$.
Consider   Algorithm \ref{alg:decoding_small_eps} under this enumeration.
After the function $\textsc{Find}$, all the errors are in $L$ by Claim \ref{claim:FsubseteqL}. 

Now we bound $|L|$. 
We can pick the smaller bound of $x$ from Claim   \ref{claim:guessExpansionLbound1} and Claim   \ref{claim:guessExpansionLbound2}.
If $\eps < \frac{3-2\sqrt{2}}{2} $, then $\frac{\sqrt{2}-1}{2\eps} < \frac{1-2\eps}{4\eps}$. 
So by Claim   \ref{claim:guessExpansionLbound1} and Claim   \ref{claim:guessExpansionLbound2}  when $x \leq \frac{\sqrt{2}-1}{2\eps} - O_\eps(1/N) $ we have $|L| < \frac{1-2\eps}{2\eps}\alpha N  $.
If $\eps \in [\frac{3-2\sqrt{2}}{2}, 1/8]$, then $\frac{\sqrt{2}-1}{2\eps} \geq \frac{1-2\eps}{4\eps}$.
So by Claim   \ref{claim:guessExpansionLbound1} and Claim   \ref{claim:guessExpansionLbound2} , when 
$x \leq \frac{1-2\eps}{4\eps} - O_\eps(1/N) $, we have $|L| < \frac{1-2\eps}{2\eps} \alpha N$. 
Since the expander is an $(\alpha N, (1-\eps)D)$ expander,  by Theorem \ref{thm:decfromerasures}, one can correct all the errors efficiently using $L$ (as the set of erasures) and the corrupted codeword.

\end{proofof}

%\xcnote{We may need to revise the corresponding parts in Section 7.2.}

\subsection{Linear time decoding}
\begin{algorithm}[H]
\caption{Decoding Algorithm for $\eps\le 1/8$}\label{alg:decoding_small_eps_lineartime}
\begin{algorithmic}[1]
\Function{Decoding}{$y \in \mathbf{F}_2^N,\epsilon \in \mathbf{R},\alpha \in \mathbf{R},\eta' \in \mathbf{R}$}
\State Enumerate  $\tilde{\gamma}\tilde{x}$ from $\{ \eta , 2\eta , \ldots, \lceil\frac{1}{\eta } \rceil \eta  \}$, where $\eta=\Theta(\eps \cdot \eta')$.
	    \If{$\tilde{\gamma} \tilde{x}\ge \eps   $  }
	        \State $\Delta \gets \sqrt{\tilde{\gamma} \tilde{x} \eps}   + \eta $.
	    \Else
	        \State $\Delta  \gets \eps + 2\eta $.
	    \EndIf
	    \State $L \gets $ \Call{Find}{$y \in \mathbf{F}_2^n$, $\Delta$}
	    \State %Apply Theorem \ref{thm:decfromerasures} with $L$ and $y$ to get a codeword $C'$.
	    Erase the symbols of $y$ that have indices in $L$ to get $y'$, and then apply the decoding from Theorem \ref{thm:decfromerasures}  on $y'$ to get a codeword $C'$.
	   % \xnote{This is confusing. In the statement of Theorem \ref{thm:decfromerasures}, there is no such thing as $L$ or $y$. We need to either change the statement of Theorem \ref{thm:decfromerasures} or change the sentence here.}
	    \State \Return $C'$ if distance between $C'$ and $y$ is $\le \frac{1-2\eps}{4 \eps}\alpha N $ where the distance comes from Theorem~\ref{thm:dist_expander}.
\EndFunction
\end{algorithmic}
\end{algorithm}

\begin{theorem}
\label{thm:guessexpansion_lineartime}
For all constants $\eps \in (0, 1/8], \eta' > 0$,  if $\mathcal{C}$ is an expander code defined by an $(\alpha N, (1-\eps)D)$ expander, then there is a linear time  decoding algorithm for $\mathcal{C}$ with decoding  radius $(\frac{\sqrt{2}-1}{2\epsilon}\alpha -  \eta' ) N  $ when $\eps< \frac{3-2\sqrt{2}}{2}$ and decoding  radius $(\frac{1-2\eps}{4\eps}\alpha -  \eta'  ) N   $ when $\eps \ge \frac{3-2\sqrt{2}}{2}$.

%For every $\eps< \frac{3-2\sqrt{2}}{2}$,  $\eta' > 0$, if $\mathcal{C}$ is  an  expander  code  defined  by  an $(\alpha N, (1 - \eps)D)$ expander, then there is a linear time decoding algorithm for $\mathcal{C}$ with decoding  radius $(\frac{\sqrt{2}-1}{2\epsilon}\alpha -  \eta' ) N  $. 
%Next for every $\eps \in [\frac{3-2\sqrt{2}}{2}, 1/8]$,  $\eta' > 0$,
%f $\mathcal{C}$ is  an  expander  code  defined  by  an $(\alpha N,(1-\eps)D)$ expander, then there is a linear time decoding algorithm which has decoding   radius $(\frac{1-2\eps}{4\eps}\alpha -  \eta'  ) N   $.
\end{theorem}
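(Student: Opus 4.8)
The plan is to show that Algorithm~\ref{alg:decoding_small_eps_lineartime} is essentially a ``coarsened'' version of Algorithm~\ref{alg:decoding_small_eps}: instead of enumerating $|F|$ and $|\Gamma(F)|$ exactly (which costs $O(N^2)$ iterations), it only guesses the single product $\gamma x$ on an $\eta$-net. The reason this suffices is that every quantity appearing in the proof of Theorem~\ref{thm:guessexpansion} depends on the pair $(x,\gamma)$ only through the product $\gamma x$, and on $x$ alone only via the one-sided bound $\gamma \le x\eps + O_\eps(1/N)$ from Lemma~\ref{lem:expansion_larger}. Since there are $O(1/\eta)$ guesses, each running \textsc{Find} and the erasure decoder of Theorem~\ref{thm:decfromerasures} in linear time, the total running time is linear; the only real work is to check that discretization degrades the relative decoding radius by at most an additive $\eta'$, which we arrange by taking $\eta$ a sufficiently small constant multiple of $\eps\eta'$ (also absorbing the constant $\alpha$).

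Concretely, fix the transmitted codeword $C_0$, let $F$ be the set of corrupted coordinates with $|F| = x\alpha N$ and $|\Gamma(F)| = (1-\gamma)D|F|$, and let $q := \tilde\gamma\tilde x$ be the grid point with $q \in [\gamma x, \gamma x + \eta)$; such a $q$ exists on the enumerated $\eta$-net, whose range we take to be the (constant-sized) feasible interval $[0, O(1/\eps)]$ for $\gamma x$. I would analyze the iteration of this $q$. First I would re-establish the analogue of Claim~\ref{clm:worst_expansion}: since $\sqrt{q\eps} \ge \sqrt{\gamma x\eps}$ and the additive constant $\eta$ dominates the $O_\eps(1/N)$ slack in the original proof, the value $\Delta = \sqrt{q\eps}+\eta$ (when $q\ge\eps$) or $\Delta = \eps+2\eta$ (when $q<\eps$) still satisfies $|\Gamma(F')| \ge (1-\Delta)D|F'|$ for every nonempty $F'\subseteq F$. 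Moreover, inside the target radius one checks $\sqrt{\gamma x\eps} \le 1/4$ (e.g., in the first branch $\eps x \le \tfrac{\sqrt2-1}{2}$ forces $\gamma x\eps \le (\tfrac{\sqrt2-1}{2})^2$), so $\Delta < 1/2$, hence $|\Gamma^1(F')|\ge(1-2\Delta)D|F'|$ and $F\subseteq L$ exactly as in Claim~\ref{claim:FsubseteqL}.

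Next I would redo the two size bounds on $L$, i.e., the analogues of Claim~\ref{claim:guessExpansionLbound1} and Claim~\ref{claim:guessExpansionLbound2}, carrying the extra additive $O(\eta)$ in $\Delta$ through the same quadratic-in-$\delta$ computation. Its only effect is to replace the residual slack ``$1/2 - 2z - 2z^2 - O_\eps(1/N) > 0$'' (with $z=\eps x$) by ``$\,\cdots - O(\eta) > 0$'' in the first branch, and ``$(1-2\eps)x < (1-2\eps)^2/4\eps - O_\eps(1/N)$'' by ``$\,\cdots - O(\eta)$'' in the second; hence the admissible range of $x$ shrinks from $\tfrac{\sqrt2-1}{2\eps}$ (resp.\ $\tfrac{1-2\eps}{4\eps}$) by an additive $O(\sqrt\eta/\eps)$ (resp.\ $O(\eta/\eps)$), which is below $\eta'/\alpha$ once $\eta$ is a small enough constant. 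So for every $F$ with $|F| \le (\tfrac{\sqrt2-1}{2\eps}\alpha - \eta')N$ (resp.\ $\le(\tfrac{1-2\eps}{4\eps}\alpha-\eta')N$) we get $|L| < \tfrac{1-2\eps}{2\eps}\alpha N$. Since $\tfrac{1-2\eps}{2\eps} < \tfrac{1}{2\eps}$, Theorem~\ref{thm:decfromerasures} recovers $C_0$ from the erasures $L$; and since $2\cdot\tfrac{1-2\eps}{4\eps}\alpha N = \tfrac{\alpha}{2\eps}N - \alpha N$ is strictly below the distance $\tfrac{\alpha}{2\eps}N - O(1/\eps)$ of Theorem~\ref{thm:dist_expander}, the returned word is the unique codeword within the checked distance, so the final test accepts $C_0$ and never accepts a spurious codeword produced by a bad guess. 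The main obstacle I anticipate is precisely the bookkeeping in this last step: one must track each $O(\eta)$ term faithfully through the discriminant of the quadratic and confirm that the loss in the decoding radius is genuinely only additive and linear in $\eta$ (hence killable by $\eta=\Theta(\eps\eta')$), while keeping the number of guesses $O(1)$; a secondary point worth spelling out is that, although the algorithm never learns $x$ or $\gamma$ individually, the entire chain of inequalities factors through the product $\gamma x$ together with the one-sided bound $\gamma\le x\eps+O_\eps(1/N)$, which is exactly why a single scalar guess $\tilde\gamma\tilde x$ is enough.
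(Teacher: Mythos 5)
Your proposal follows the paper's proof in every essential respect: guess the single scalar $\gamma x$ on an $\eta$-net with $\eta=\Theta(\eps\eta')$, re-derive Claims~\ref{clm:worst_expansion}--\ref{claim:guessExpansionLbound2} with an additive $O(\eta)$ slack, and use the distance bound of Theorem~\ref{thm:dist_expander} to justify the final acceptance test, your central observation that the whole argument factors through $\gamma x$ together with the one-sided bound $\gamma\le x\eps+O_\eps(1/N)$ being exactly the paper's reason a single scalar guess suffices. One computation you flagged as needing confirmation: the residual $\tfrac12-2z-2z^2$ vanishes \emph{linearly} at $z_0=\tfrac{\sqrt2-1}{2}$ (its derivative there is $-2\sqrt2\neq 0$), so the penalty in the first branch is $O(\eta/\eps)$ rather than your provisional $O(\sqrt\eta/\eps)$, confirming that $\eta=\Theta(\eps\eta')$ works directly; also, your remark that the grid must cover $\gamma x$ up to $O(1/\eps)$ (via $\gamma x\le x^2\eps$) is a genuine sharpening of the paper's stated range $\{\eta,\ldots,\lceil 1/\eta\rceil\eta\}$, which tops out near $1$ and can miss admissible values of $\gamma x$ when $\eps$ is small.
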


We prove the correctness of Algorithm~\ref{alg:decoding_small_eps_lineartime} and the theorem in the rest of this section. Again $F$ always denotes the set of corrupted entries, and assume $|F| = x\alpha N$, $|\Gamma(F)| = (1-\gamma)D|F|$.
Since we enumerate $\tilde{\gamma}\tilde{x}$ from a sequence with gap $\eta$, one  of them is such that $  \gamma x \in [\tilde{\gamma}\tilde{x}  , \tilde{\gamma}\tilde{x} + \eta]$.
Now we only consider this  enumeration  in the following analysis.

Next we can bound the expansion of all subsets in $F$.
\begin{claim}\label{clm:worst_expansion_lineartime}
Our choice of $\Delta$ always satisfies that
\[
\forall F' \subseteq F,   |\Gamma(F')| \ge (1-\Delta) \cdot D |F'|.
\]
\end{claim}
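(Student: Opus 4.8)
The plan is to follow the proof of Claim~\ref{clm:worst_expansion} almost verbatim, the only new ingredient being that here $\tilde\gamma\tilde x$ merely approximates the true $\gamma x$ from below up to the additive discretization error $\eta$, so I must check that this extra slack is absorbed by the $\eta$-terms already built into $\Delta$. Throughout, fix a nonempty $F'\subseteq F$, write $|F'|=x'\alpha N$ and $|\Gamma(F')|=(1-\beta)D|F'|$, and recall that we are in the enumeration with $\gamma x\in[\tilde\gamma\tilde x,\tilde\gamma\tilde x+\eta]$; the goal is to show $\beta\le\Delta$.

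I would first dispose of the trivial range $x'\le 1$: then $|F'|\le\alpha N$, so the $(\alpha N,(1-\eps)D)$-expansion gives $\beta\le\eps$ directly, and in both branches $\Delta\ge\eps$ — if $\tilde\gamma\tilde x\ge\eps$ then $\sqrt{\tilde\gamma\tilde x\,\eps}\ge\eps$, so $\Delta=\sqrt{\tilde\gamma\tilde x\,\eps}+\eta>\eps$, and if $\tilde\gamma\tilde x<\eps$ then $\Delta=\eps+2\eta>\eps$ — hence $\beta\le\Delta$.

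For the main case $x'>1$ (hence $x>1$), I would reuse the collision count from the proof of Claim~\ref{clm:worst_expansion}: the number of repeated right-endpoints among the edges leaving $F'$ equals $D|F'|-|\Gamma(F')|=\beta Dx'\alpha N$, and since $F'\subseteq F$ every such repetition also occurs among the $\gamma Dx\alpha N$ repetitions for $F$, so $\beta x'\le\gamma x$. The first bullet of Lemma~\ref{lem:expansion_larger} gives $\beta\le x'\eps+O_\eps(1/N)$, hence $x'\ge(\beta-O_\eps(1/N))/\eps$, and combining the two bounds yields $\beta^2\le\eps\gamma x+O_\eps(1/N)$. Plugging in $\gamma x\le\tilde\gamma\tilde x+\eta$ and taking $N$ larger than a constant depending only on $\eps$ and $\eta'$ (so that the $O_\eps(1/N)$ term is at most $\eps\eta$) gives $\beta^2\le\eps\tilde\gamma\tilde x+2\eps\eta$.

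It then remains to verify $\sqrt{\eps\tilde\gamma\tilde x+2\eps\eta}\le\Delta$ in each branch, which is where the exact form of $\Delta$ is used. If $\tilde\gamma\tilde x\ge\eps$, concavity of the square root gives $\sqrt{\eps\tilde\gamma\tilde x+2\eps\eta}\le\sqrt{\eps\tilde\gamma\tilde x}+\frac{\eps\eta}{\sqrt{\eps\tilde\gamma\tilde x}}\le\sqrt{\eps\tilde\gamma\tilde x}+\eta=\Delta$, using $\sqrt{\eps\tilde\gamma\tilde x}\ge\eps$ in the last step. If $\tilde\gamma\tilde x<\eps$, then $\beta^2<\eps^2+2\eps\eta$, so $\beta<\eps\sqrt{1+2\eta/\eps}\le\eps+\eta<\eps+2\eta=\Delta$. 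I expect the only real point of care to be the bookkeeping of the two small error terms — the $\eta$ from discretizing $\tilde\gamma\tilde x$ and the $O_\eps(1/N)$ from Lemma~\ref{lem:expansion_larger} — and the reason it goes through is structural: on the branch whose $\Delta$ has the $\sqrt{\tilde\gamma\tilde x\,\eps}$ shape we are guaranteed $\sqrt{\tilde\gamma\tilde x\,\eps}\ge\eps$, which caps the derivative of $\sqrt{\cdot}$ at $1/(2\eps)$ and thus keeps the combined slack within the additive $\eta$ that $\Delta$ already carries; one just fixes the hidden constant in $\eta=\Theta(\eps\eta')$ and the threshold on $N$ accordingly.
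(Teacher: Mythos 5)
Your proof is correct and follows essentially the same route as the paper's: dispose of $x'\le 1$ using the base expansion and the observation $\Delta\ge\eps$ in both branches, and for $x'>1$ derive $\beta^2\lesssim\eps\gamma x$ from the collision count combined with the first bullet of Lemma~\ref{lem:expansion_larger}, then absorb the discretization gap $\gamma x\le\tilde\gamma\tilde x+\eta$ and the $O_\eps(1/N)$ term into the $\eta$-slack built into $\Delta$, using concavity of $\sqrt{\cdot}$ and the fact that $\sqrt{\eps\tilde\gamma\tilde x}\ge\eps$ on the first branch. The only cosmetic difference is that you track the two small errors a bit more explicitly than the paper's terse ``Notice that $\sqrt{\gamma x\eps}\le\Delta-\eta/2$.''
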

\begin{proof}
Let $F' \subseteq F$ be an arbitrary non-empty set, and $|F'| = x' \cdot \alpha N$.
\iffalse
If $x' \geq 1$, then consider the collisions   in $\Gamma(F)$ and $\Gamma(F')$.
~By collision we mean that given an arbitrary order of the edges, if one edge in this order has its right endpoint the same as any other edge prior to it, then this is called a collision. 
Since $F'$ is  of size $x' \cdot \alpha N$, by Lemma \ref{lem:expansion_larger}, $|\Gamma(F')| \ge (1-x' \cdot \eps) D \cdot x' \alpha N - O_\eps(1)$. So we have that the total number of collisions for edges with left endpoints in $F'$ is at most the total number of collisions for edges with left endpoints in $F$, because a collision  in $\Gamma(F')$ is also a collision in $\Gamma(F)$, according to an arbitrary order of all edges connected to vertices in $F$. 
So
\[
x' \alpha N \cdot (x' \cdot \eps)D  + O_\eps(1) \le |F| \cdot \gamma D.
\]
\xnote{The proof here is wrong.}
This implies $x' \cdot \eps \le \sqrt{\gamma x\eps } $.
When $\tilde{\gamma} \tilde{x}\ge \eps   $, $  \sqrt{\gamma x\eps }  \leq \Delta - \Theta(\eta)$
because $\gamma x \leq \tilde{\gamma}\tilde{x} + \eta$, $\Delta = \sqrt{ \tilde{\gamma}\tilde{x}\eps} + \eta $.
Hence $|\Gamma(F')| \ge (1-x'\eps)D |F'| - O_\eps(1) \ge (1-\Delta+ \Theta(\eta)) D \cdot |F'| - O_\eps(1) \geq  (1-\Delta) D \cdot |F'|$, by noticing $|F'|= x'\alpha N \geq \alpha N$.
When $\tilde{\gamma} \tilde{x} < \eps  $, $  \sqrt{\gamma x\eps }   \leq \eps + \eta.$
Hence   $|\Gamma(F')| \ge (1-\eps -\eta) D \cdot |F'| - O_\eps(1) \geq  (1-\eps - 2\eta) D \cdot |F'|$, noticing $|F'|= x'\alpha N \geq \alpha N$.
\fi

If $x'>1$, then assume $|\Gamma(F')| = (1-\beta)D x'\alpha N$.
we consider the collisions   in $\Gamma(F)$ and $\Gamma(F')$.
~Recall that by collision we mean that given an arbitrary order of the edges, if one edge in this order has its right endpoint the same as any other edge prior to it, then this is called a collision. Note that the total number of collisions for edges with left endpoints in $F'$ is at most the total number of collisions for edges with left endpoints in $F$, because a collision  in $\Gamma(F')$ is also a collision in $\Gamma(F)$. Thus
$$\beta x' \leq \gamma x.$$
Also, since $F'$ has size $x' \cdot \alpha N$, by Lemma \ref{lem:expansion_larger} we have $|\Gamma(F')| \geq (1-x'\eps)D  x'\alpha N - O_\eps(1)$.
So $ \beta \leq x'\eps + O_\eps(1/N)$.
Hence  $ \beta (\beta - O_\eps(1/N))/\eps \leq \gamma x $.
Thus $ \beta \leq \sqrt{\gamma x \eps } + O_\eps(1/N) $ and  $|\Gamma(F')| = (1-\beta)D |F'| \geq (1-\sqrt{\gamma x \eps}) D |F'| - O_\eps(1)$.
When $ \tilde{\gamma} \tilde{x}\ge \eps $, the algorithm sets $\Delta = \sqrt{\tilde{\gamma} \tilde{x}\eps} + \eta$.
Notice that $  \sqrt{\gamma x\eps }  \leq \Delta - \eta/2$.
So $|\Gamma(F')| \geq (1-\Delta) D |F'|$.
When $ \tilde{\gamma} \tilde{x} < \eps$, the algorithm sets $\Delta  = \eps + 2\eta$.
Notice that $\sqrt{\gamma x \eps} \leq \eps + \eta$.
Hence again $|\Gamma(F')| \geq (1-\Delta) D |F'|$.

If $x' < 1$, then again we have two cases.
When $\tilde{\gamma} \tilde{x}\ge \eps  $,  we know $\Delta \geq \eps + \eta$.
So by expansion, $ |\Gamma(F')| \ge (1-\eps)D |F'| \geq (1-\Delta)D |F'| $.
When  $\tilde{\gamma} \tilde{x}< \eps   $, the algorithm sets $\Delta = \eps+2\eta$.
So $ |\Gamma(F')| \ge (1-\eps)D |F'| \geq (1-\Delta)D |F'| $.

\end{proof}

Given the guarantee in Claim~\ref{clm:worst_expansion_lineartime}, one can show that $L$ contains all the errors.
\begin{claim}
\label{claim:FsubseteqL_lineartime}
After step 9 in Algorithm~\ref{alg:decoding_small_eps_lineartime}, we have $F\subseteq L$.
\end{claim}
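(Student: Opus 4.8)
The plan is to reproduce, essentially verbatim, the argument of Claim~\ref{claim:FsubseteqL} from the polynomial-time version, now fed by Claim~\ref{clm:worst_expansion_lineartime} in place of Claim~\ref{clm:worst_expansion}. Throughout I would work with the single enumeration $\tilde{\gamma}\tilde{x}$ fixed at the start of the analysis, i.e. the one satisfying $\gamma x \in [\tilde{\gamma}\tilde{x}, \tilde{\gamma}\tilde{x} + \eta]$, so that Claim~\ref{clm:worst_expansion_lineartime} applies and guarantees $|\Gamma(F')| \ge (1-\Delta)D|F'|$ for every non-empty $F' \subseteq F$, where $\Delta$ is the value picked by the algorithm in this enumeration.

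The next step is to upgrade this expansion bound to a lower bound on the number of unique neighbors. For any non-empty $F' \subseteq F$, the number of neighbors of $F'$ that receive at least two edges from $F'$ is at most $D|F'| - |\Gamma(F')|$, so $|\Gamma^1(F')| \ge |\Gamma(F')| - (D|F'| - |\Gamma(F')|) = 2|\Gamma(F')| - D|F'| \ge (1-2\Delta)D|F'|$. This is a nonvacuous bound because $1 - 2\Delta > 0$ in the present regime: since $\eps \le 1/8$, the branch $\Delta = \eps + 2\eta$ gives $\Delta < 1/2$ immediately, and in the branch $\Delta = \sqrt{\tilde{\gamma}\tilde{x}\eps} + \eta$ one uses $\gamma \le x\eps + O_\eps(1/N)$ from Lemma~\ref{lem:expansion_larger} together with the bound on $x$ implied by the decoding radius ($x\eps \le \frac{\sqrt{2}-1}{2} < \frac12$ or $x\eps \le \frac{1-2\eps}{4} < \frac12$) to get $\sqrt{\tilde{\gamma}\tilde{x}\eps} \le \sqrt{\gamma x \eps} + O(\eta) \le x\eps + O_\eps(1/N) + O(\eta) < 1/2$.

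Finally I would apply Lemma~\ref{lemma:all_errors_in_L} with this $\Delta$: its hypothesis, that $|\Gamma^1(S)| \ge (1-2\Delta)D|S|$ for every non-empty $S \subseteq F$, is exactly what was just verified, and its conclusion is that $F$ is contained in $L$ once the while loop inside \textsc{Find}$(y, \Delta)$ terminates. Since step~9 of Algorithm~\ref{alg:decoding_small_eps_lineartime} is precisely the call $L \gets \textsc{Find}(y, \Delta)$, this yields $F \subseteq L$ after step~9. I do not expect any real obstacle here: all the genuine content already lives in Claim~\ref{clm:worst_expansion_lineartime} and Lemma~\ref{lemma:all_errors_in_L}, and the only points requiring (minor) care are checking $1 - 2\Delta > 0$ so the unique-neighbor inequality is meaningful, and making sure the argument is run for the single ``good'' enumeration of $\tilde{\gamma}\tilde{x}$.
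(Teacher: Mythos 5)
Your proposal matches the paper's proof essentially verbatim: both invoke Claim~\ref{clm:worst_expansion_lineartime} to get $|\Gamma(F')|\ge(1-\Delta)D|F'|$ for every nonempty $F'\subseteq F$, convert this to $|\Gamma^1(F')|\ge(1-2\Delta)D|F'|$ using $1-2\Delta>0$, and conclude via Lemma~\ref{lemma:all_errors_in_L}. The only difference is that the paper asserts $1-2\Delta>0$ ``in our setting'' without elaboration, whereas you spell out the check on both branches of $\Delta$; this is a harmless elaboration, not a divergence.
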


\begin{proof}
By Claim \ref{clm:worst_expansion_lineartime}, $\forall F' \subseteq F, |\Gamma(F')| \ge (1-\Delta) \cdot D |F'|$.
So $\forall F' \subseteq F, |\Gamma^1(F')| \ge (1-2\Delta) \cdot D |F'|$, since $(1-2\Delta) > 0$ in our setting.
By Lemma \ref{lemma:all_errors_in_L}, we know $F\subseteq L$ after \textsc{Find}.
\end{proof}

Then we calculate the decoding radius and the size of $L$.

\begin{claim}
\label{claim:guessExpansionLbound1_lineartime}
 
For the branch $\Delta=\sqrt{\tilde{\gamma}\tilde{x} \eps } + \eta  $, if $x \leq \frac{\sqrt{2}-1}{2\eps} - O(\eta/\eps) $, then $|L| < \frac{1-2\eps}{2\eps} \alpha N$.
\end{claim}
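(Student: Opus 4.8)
\emph{Proof plan.}\quad The plan is to re-run the proof of Claim~\ref{claim:guessExpansionLbound1} almost verbatim, with every $O_\eps(1/N)$ slack replaced by the $O(\eta)$ slack coming from the coarser enumeration, and with one extra bit of care because here $\Delta=\sqrt{\tilde\gamma\tilde x\eps}+\eta$ is built from the rounded quantity $\tilde\gamma\tilde x$ rather than from $\gamma x$ itself. So I would assume for contradiction that $|L|\ge\frac{1-2\eps}{2\eps}\alpha N$ after \textsc{Find}.

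First I would collect the facts about $F$ that I need. By Lemma~\ref{lem:expansion_larger} applied to $F$ we have $\gamma\le x\eps+O_\eps(1/N)$; and since we are in the branch $\tilde\gamma\tilde x\ge\eps$ with $\gamma x\ge\tilde\gamma\tilde x$, we have $\gamma x\ge\eps$, which together with the expander bound $\gamma\le\eps$ (valid when $x\le 1$) forces $x\ge 1$. Claim~\ref{clm:worst_expansion_lineartime} gives $|\Gamma^1(F')|\ge(1-2\Delta)D|F'|$ for every nonempty $F'\subseteq F$ (note $1-2\Delta>0$ in our range), so Lemma~\ref{lemma:FAddtoLFirst} lets me assume that \textsc{Find} first inserts all of $F$ into $L$; since $\Delta=\sqrt{\tilde\gamma\tilde x\eps}+\eta\ge\eps$ we have $\frac{1-2\Delta}{2\eps}\alpha N\le\frac{1-2\eps}{2\eps}\alpha N\le|L|$, so I may pass to the prefix $L'$ of $L$ with $|L'|=\frac{1-2\Delta}{2\eps}\alpha N$ and set $\delta:=\frac{|L'|-|F|}{\alpha N}=\frac{1-2\Delta}{2\eps}-x$.

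The core is then the same quadratic estimate as in Claim~\ref{claim:guessExpansionLbound1}. On one side, all unsatisfied checks lie in $\Gamma(F)$ and each of the $\delta\alpha N$ vertices added after $F$ contributes at most $2\Delta D$ new neighbors, so $|\Gamma(L')|\le(1-\gamma)Dx\alpha N+2\Delta D\,\delta\alpha N$. On the other side, one verifies $\delta\ge 0$ (from $\sqrt{\gamma x\eps}\le x\eps+O_\eps(1/N)$, the hypothesis $x\le\frac{\sqrt{2}-1}{2\eps}-O(\eta/\eps)$, and $\eps\le 1/8$), hence $x+\delta\ge x\ge 1$, so Lemma~\ref{lem:expansion_larger} applied with $k=x+\delta$ gives $|\Gamma(L')|\ge(1-(x+\delta)\eps)D(x+\delta)\alpha N-O_\eps(1)$. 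Dividing by $D\alpha N$ and using $\Delta\le\sqrt{\gamma x\eps}+\eta$ (since $\tilde\gamma\tilde x\le\gamma x$), the contradiction reduces to showing that the upward parabola
\[
q(\delta):=\eps\delta^2+\bigl(2\eps x-1+2\sqrt{\gamma x\eps}+2\eta\bigr)\delta+\eps x^2-\gamma x+O_\eps(1/N)
\]
is strictly negative at our $\delta$. Here I would set $z:=\eps x$, note $\gamma x\eps=\gamma z$, and compute the discriminant of $q$ to be $\bigl(2z-1+2\sqrt{\gamma z}+2\eta\bigr)^2-4z^2+4\gamma z-O_\eps(1/N)$; completing the square in $\sqrt{\gamma z}$ shows it is at least $\bigl(\tfrac{1}{2}-2z-2z^2\bigr)-O(\eta)-O_\eps(1/N)$. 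The hypothesis, i.e. $z\le\frac{\sqrt{2}-1}{2}-\Omega(\eta)$, keeps $\tfrac{1}{2}-2z-2z^2$ bounded below by $\Omega(\eta)$ (this expression vanishes at $z_0=\frac{\sqrt{2}-1}{2}$ with slope $-2\sqrt{2}$ there), so for a suitable choice of the constant hidden in the $O(\eta/\eps)$ of the hypothesis the discriminant is $\Omega(\eta)$. Finally, our $\delta$ differs from the vertex of $q$ by only $\frac{\sqrt{\gamma x\eps}-\sqrt{\tilde\gamma\tilde x\eps}}{\eps}=O(\eta/\eps)$ (using $\gamma x\ge\eps$ to lower bound the denominators), whence $q(\delta)\le-\frac{\mathrm{disc}}{4\eps}+O(\eta^2/\eps)=-\Omega(\eta/\eps)+O(\eta^2/\eps)<0$ once $\eta$ is a small enough constant. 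This contradicts $|L|\ge\frac{1-2\eps}{2\eps}\alpha N$.

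I expect the last step to be the only genuine obstacle. Because $\Delta$ is defined through $\tilde\gamma\tilde x$ rather than $\gamma x$, we do not evaluate $q$ exactly at its vertex, so the plan must quantify two competing quantities: the distance from our $\delta$ to the vertex ($O(\eta/\eps)$, contributing $O(\eta^2/\eps)$ to $q(\delta)$) and the amount by which the discriminant exceeds $0$ ($\Omega(\eta/\eps)$ under the hypothesis on $x$), and then check that the former is dominated by the latter. This is exactly why the admissible range for $x$ must shrink by an additive $O(\eta/\eps)$; everything else is a mechanical transcription of Claim~\ref{claim:guessExpansionLbound1} with $O_\eps(1/N)$ uniformly replaced by $O(\eta)$.
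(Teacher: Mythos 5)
Your proof is correct and follows the same route as the paper's: invoke Claim~\ref{clm:worst_expansion_lineartime} and Lemma~\ref{lemma:FAddtoLFirst} to put $F$ first in $L$, pass to a prefix $L'$, bound $|\Gamma(L')|$ above by $(1-\gamma)D|F|+2\Delta D\cdot\delta\alpha N$ and below via Lemma~\ref{lem:expansion_larger}, and reduce to the negativity of a quadratic in $\delta$, exactly as in Claim~\ref{claim:guessExpansionLbound1}. The single deviation is cosmetic: the paper picks $|L'|=\frac{1-2(\sqrt{\gamma x\eps}+\eta)}{2\eps}\alpha N$, which places $\delta$ exactly at the vertex so that positivity of the discriminant alone suffices, whereas your choice $|L'|=\frac{1-2\Delta}{2\eps}\alpha N$ with $\Delta=\sqrt{\tilde\gamma\tilde x\eps}+\eta$ forces the extra (but valid) bookkeeping at the end, weighing the $O(\eta/\eps)$ offset from the vertex against the $\Omega(\eta)$ discriminant.
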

\begin{proof}
First of all, we will use the fact $\Delta \le \sqrt{\gamma x \epsilon} + \eta$ (since $\gamma x \in [\tilde{\gamma}\tilde{x},\tilde{\gamma}{\tilde{x}}+\eta]$ in the correct guessing) extensively in this proof. Now suppose after the iterations, $|L| \ge \frac{1-2\eps}{2\eps} \alpha N$.
By Claim \ref{clm:worst_expansion_lineartime} and Lemma \ref{lemma:FAddtoLFirst},  we can consider an $L'$ which is constituted by first adding $F$ and then adding another $  \frac{1-2 (\sqrt{\gamma x \eps} + \eta ) }{2\eps} \alpha N - x \alpha N$ elements.  
Let $\delta= \frac{|L| - |F|}{\alpha N} =   \frac{1 -2 (\sqrt{\gamma x \eps} + \eta)  }{2 \eps} - x$.
Notice that $|L'| =  \frac{1 -2 (\sqrt{\gamma x \eps} + \eta)  }{2 \eps} \leq \frac{1-2\eps}{2\eps}$.
We   show that   even having this $L'$ leads to a contradiction.

We notice that $\delta \geq 0,  x+ \delta \geq 1$. 
The reason is as follows.
First consider the case $x\geq 1$.
Notice that $\gamma \leq x\eps + O_\eps(1/N) $ by Lemma \ref{lem:expansion_larger}.
So $\delta = \frac{1 -2 (\sqrt{\gamma x \eps} + \eta) }{2 \eps} - x   \geq  \frac{1}{2\eps} - 2x- O_\eps(\eta)$.
when $x \leq \frac{\sqrt{2}-1}{2\eps} - O_\eps(\eta) $, $\eps \leq 1/8$, this is at least $0$. 
Also notice that $ x+ \delta \geq  \frac{1 }{2 \eps} - x  -\Theta_\eps(\eta) \geq 1 $ when $\eps \leq 1/8$.
Second if $x<1$, then $\tilde{\gamma}\tilde{x} \leq \gamma x < \eps $ and thus the algorithm should not go to this branch.
%Notice that $\gamma \leq \eps$ by the expansion property.
%So $\delta = \frac{1 -2 (\sqrt{\gamma x \eps} + \eta)  }{2 \eps} - x  \geq  \frac{1}{2\eps} - \sqrt{x} - x- O_\eps(\eta)$.
%When $x < 1, \eps \leq 1/8$, this is at least $0$.
%Also notice that $ x+ \delta \geq  \frac{1 }{2 \eps} - \sqrt{x}  -\Theta_\eps(\eta) \geq 1 $ when $\eps \leq 1/8$.

Next notice that all the unsatisfied checks are in $\Gamma(F)$ where $|\Gamma(F)| = (1-\gamma) D|F|$, and every element in $L'\setminus F$ contributes at most $2\Delta D$ vertices to $R$. 
Hence $|\Gamma(L')| \le |\Gamma(F)|+2\Delta D\cdot \delta \alpha N$. On the other hand, Lemma~\ref{lem:expansion_larger} implies $|\Gamma(L')| \ge (1-(x+\delta)\eps)D \cdot (x+\delta)\alpha N - O_\eps(1)$. Thus we have
\[
(1-(x+\delta)\eps) \cdot (x+\delta)\alpha N - O_\eps(1)\le (1-\gamma)x \alpha N + 2 \Delta \cdot \delta \alpha N \le  (1-\gamma)x \alpha N + 2 (\sqrt{\gamma x \eps} + \eta) \cdot \delta \alpha N..
\]
In the rest of this proof, we show that our choice of $\delta$ yields
\begin{equation}\label{eq:ineq1_lineartime}
(1 - (x+\delta)\eps) \cdot (x+ \delta) - O_\eps(1/N) > (1-\gamma) x + 2 (\sqrt{\gamma x \eps} + \eta)   \cdot \delta.
\end{equation}
This gives a contradiction. Towards that, we rewrite  inequality~\eqref{eq:ineq1_lineartime} as
\[
0 > \eps \delta^2 + \left(2 \eps x - 1 + 2 (\sqrt{\gamma x \eps} + \eta)  \right)\delta + \eps x^2 - \gamma x +  O_\eps(1/N) .
\]
When $(2 \eps x - 1 + 2 (\sqrt{\gamma x \eps} + \eta)  )^2 - 4 \eps \left( \eps x^2 -\gamma x +  O_\eps(1/N)   \right)>0$, the quadratic polynomial will be negative at $\delta =\frac{1 - 2 \eps x -2 (\sqrt{\gamma x \eps} + \eta)  }{2\eps} $. 
To verify this,   we set $z=\eps x$ and only need to guarantee that
\[
(2z - 1 + 2  \sqrt{\gamma z} )^2 - 4 z^2 + 4 \gamma z -  2(1 - 2\eps x - 2 \sqrt{\gamma x \epsilon)}\eta >0.
\]
%\textcolor{red}{Xue: The calculation seems fishy. There is an $2\eta$ term in the first square omitted here. I did not see why the last term is $-O_{\eps}(\eta)$ instead of $+O_{\eps}(\eta)$.}
%\knote{I think when $x\eps$ is small, then because of the ``-1'' in the first square, it should be $-O_{\eps}(\eta)$. Also if  the inequality with $-O_{\eps}(\eta)$ holds, then the inequality with $+O_{\eps}(\eta)$ also holds. }
This is equivalent to
\[
8 \gamma z + (8z -4 ) \sqrt{\gamma z} + 1 - 4z  - 2\eta > 0 \Rightarrow 8(\sqrt{\gamma z} +\frac{2z-1}{4})^2  - 2\eta + 1 - 4z - 8(\frac{2z-1}{4})^2 > 0.
\]
When $z=\eps x \leq \frac{\sqrt{2}-1}{2} - O(\eta)$ (namely $x \leq \frac{\sqrt{2}-1}{2\eps} - O(\eta/\epsilon)$),
the residue $1 - 4z - 8(\frac{2z-1}{4})^2 - 2 \eta = \frac{1}{2} -2z -2z^2 - 2 \eta >0$. 
So the inequality holds.
\end{proof}

\begin{claim}
\label{claim:guessExpansionLbound2_lineartime}
For the branch $\Delta=\eps + 2\eta$, if  $x \leq \frac{1-2\eps}{4\eps} - 2\eta/\epsilon$, then $|L| < \frac{1-2\eps}{2\eps} \alpha N$.
\end{claim}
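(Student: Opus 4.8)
The plan is to run the proof of Claim~\ref{claim:guessExpansionLbound2} essentially verbatim, the only change being that the threshold fed to \textsc{Find} is now $\Delta=\eps+2\eta$ instead of $\Delta=\eps$, so I would carry the extra additive $2\eta$ through the algebra and check that the slightly shrunken hypothesis $x\le\frac{1-2\eps}{4\eps}-\frac{2\eta}{\eps}$ is precisely what the argument needs. First I would assume for contradiction that $|L|\ge\frac{1-2\eps}{2\eps}\alpha N$. By Claim~\ref{claim:FsubseteqL_lineartime} we have $F\subseteq L$ after \textsc{Find}; and since Claim~\ref{clm:worst_expansion_lineartime} gives $|\Gamma(F')|\ge(1-\Delta)D|F'|$, hence $|\Gamma^1(F')|\ge(1-2\Delta)D|F'|$, for every nonempty $F'\subseteq F$, Lemma~\ref{lemma:FAddtoLFirst} lets me assume the execution adds all of $F$ before any other vertex. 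As $x\le\frac{1-2\eps}{4\eps}-\frac{2\eta}{\eps}<\frac{1-2\eps}{2\eps}$, I can then fix a prefix $L'\subseteq L$ of size exactly $\frac{1-2\eps}{2\eps}\alpha N$ that consists of $F$ together with $\delta\alpha N$ further vertices, where $\delta:=\frac{1-2\eps}{2\eps}-x\ge0$ and $x+\delta=\frac{1-2\eps}{2\eps}$.

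Next I would sandwich $|\Gamma(L')|$. For the upper bound, every unsatisfied check of $y$ lies in $\Gamma(F)$ and $|\Gamma(F)|\le D|F|=Dx\alpha N$, while each of the $\delta\alpha N$ vertices added after $F$ had at least $(1-2\Delta)D$ of its neighbours already in $R$ when it was added, hence contributes at most $2\Delta D=2(\eps+2\eta)D$ fresh right vertices; so $|\Gamma(L')|\le Dx\alpha N+2(\eps+2\eta)D\,\delta\alpha N$. For the lower bound, $x+\delta=\frac{1-2\eps}{2\eps}>1$ because $\eps\le 1/8<1/4$, so the first bound of Lemma~\ref{lem:expansion_larger} with $k=x+\delta$ gives $|\Gamma(L')|\ge\bigl(1-(x+\delta)\eps\bigr)(x+\delta)D\alpha N-O_\eps(1)$. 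Dividing by $D\alpha N$ and combining the two estimates, I get
\[
\bigl(1-(x+\delta)\eps\bigr)(x+\delta)-O_\eps(1/N)\;\le\;x+2(\eps+2\eta)\,\delta .
\]

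Finally I would substitute $x+\delta=\frac{1-2\eps}{2\eps}$, so that $\eps(x+\delta)=\frac{1-2\eps}{2}$ and the left-hand side equals $\frac{1-4\eps^2}{4\eps}-O_\eps(1/N)$, and write $\delta=\frac{1-2\eps}{2\eps}-x$ on the right; simplifying, the inequality rearranges to
\[
x\,(1-2\eps-4\eta)\;\ge\;\frac{(1-2\eps)^2}{4\eps}-\frac{2\eta(1-2\eps)}{\eps}-O_\eps(1/N),
\]
whose right-hand side equals $\frac{(1-2\eps)(1-2\eps-8\eta)}{4\eps}-O_\eps(1/N)$. Dividing by $1-2\eps-4\eta>0$ and using $\frac{1-2\eps}{1-2\eps-4\eta}>1$, this forces $x\ge\frac{1-2\eps}{4\eps}-\frac{2\eta}{\eps}+\Theta(\eta/\eps)-O_\eps(1/N)$; since $\eta$ is a fixed positive constant the $\Theta(\eta/\eps)$ term dominates the $O_\eps(1/N)$ error once $N$ is large, giving $x>\frac{1-2\eps}{4\eps}-\frac{2\eta}{\eps}$ and contradicting the hypothesis. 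I expect the only delicate point to be exactly this last step: verifying that the slack $\frac{2\eta}{\eps}$ written into the statement genuinely absorbs both the inflation of the threshold from $\eps$ to $\eps+2\eta$ and the additive $O_\eps(1/N)$ from Lemma~\ref{lem:expansion_larger}. This is careful bookkeeping rather than a new idea --- everything upstream is identical to the proof of Claim~\ref{claim:guessExpansionLbound2}.
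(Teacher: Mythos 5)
Your proposal is correct and follows essentially the same route as the paper's proof: contradiction via the prefix $L'$ of size $\frac{1-2\eps}{2\eps}\alpha N$, the lower bound from Lemma~\ref{lem:expansion_larger} with $k=x+\delta$, the upper bound from ``$F$ first, then each new vertex contributes at most $2\Delta D$ fresh checks,'' and the substitution $x+\delta=\frac{1-2\eps}{2\eps}$. The only cosmetic difference is how the final algebra is organized: the paper rearranges to $(1-2\eps)x \ge \frac{(1-2\eps)^2}{4\eps}-4\delta\eta$ and checks the hypothesis directly, while you collect the coefficient $1-2\eps-4\eta$ on $x$ and divide; both show the slack $\frac{2\eta}{\eps}$ absorbs the inflated threshold and the $O_\eps(1/N)$ error, with your version being slightly more explicit about the surviving $\Theta(\eta x)$ margin.
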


\begin{proof}
Suppose $|L| \geq \frac{1-2\eps}{2\eps} \alpha N$.
Consider $L'\subseteq L$ with $|L'| =  \frac{1-2\eps}{2\eps} \alpha N$.
Let $\delta = \frac{1-2\eps}{2\eps} - x$. 
Notice that $\delta \geq 0$ because    $x \leq  \frac{1-2\eps}{4\eps} - O_\eps(\eta), \eps \leq 1/8$.
Also $x+ \delta \geq 1$ since $\eps \leq 1/8$.

By Lemma \ref{lem:expansion_larger}, $|\Gamma(L')| \geq (1-(x+\delta)\eps)D |L'| - O_\eps(1)$.
By Lemma \ref{lemma:FAddtoLFirst} we can consider $L'$ as being constituted by first adding all elements in $F$ and then add another $\delta \alpha N$ elements by the algorithm.
Notice that all the unsatisfied checks are in $\Gamma(F)$, $|\Gamma(F)| \leq D|F|$, and every element in $L'\setminus F$ contributes at most $2\Delta D$ vertices to $R$. 
Hence $|\Gamma(L')| \leq D|F| + 2\Delta D \delta \alpha N$.
So we have
\[
(1-(x+\delta)\eps)D |L'| - O_\eps(1) \leq  |\Gamma(L')| \leq D|F| + 2\Delta D \delta \alpha N
\]
Thus
\[
(1-(x+\delta)\eps) \cdot (x+\delta) - O_\eps(1/N) \le x + 2 \Delta \delta = x + 2 (\eps + 2\eta) \delta.
\]
So this is equivalent to
\[
(1- 2\eps - \eps (x+\delta))(x+\delta)  - 4 \delta \eta  \le (1-2\eps) x
\]
Recall that $\delta+x=\frac{1 -2 \eps}{2\eps}$. 
To get a contradiction, we only need 
\[
(1-2\eps) x < (1-2\eps)^2/4\eps - 4 \delta \eta.
\]
This is satisfied by $x \leq \frac{1-2\eps}{4\eps} - 2\eta/\epsilon$.

\end{proof}

\begin{proofof}{Theorem \ref{thm:guessexpansion_lineartime}}
One of our enumerations has  $\tilde{\gamma}\tilde{x}$ such that $   \gamma x \in [\tilde{\gamma}\tilde{x} ,\tilde{\gamma}\tilde{x}+ \eta]$.
Consider \textsf{Decoding} in  Algorithm \ref{alg:decoding_small_eps_lineartime} under this enumeration.
After the function $\mathrm{Find}$, all the errors are in $L$ by Claim \ref{claim:FsubseteqL_lineartime}. 

Now we bound $|L|$. 
We can pick the smaller bound of $x$ from Claim   \ref{claim:guessExpansionLbound1_lineartime} and Claim   \ref{claim:guessExpansionLbound2_lineartime}.
If $\eps < \frac{3-2\sqrt{2}}{2} $, then $\frac{\sqrt{2}-1}{2\eps} < \frac{1-2\eps}{4\eps}$. 
So by Claim   \ref{claim:guessExpansionLbound1_lineartime} and Claim   \ref{claim:guessExpansionLbound2_lineartime}  when $x \leq \frac{\sqrt{2}-1}{2\eps} - O(\eta/\eps) $ we have $|L| < \frac{1-2\eps}{2\eps}\alpha N  $.
If $\eps \in [\frac{3-2\sqrt{2}}{2}, 1/8]$, then $\frac{\sqrt{2}-1}{2\eps} \geq \frac{1-2\eps}{4\eps}$.
So by Claim   \ref{claim:guessExpansionLbound1_lineartime} and Claim   \ref{claim:guessExpansionLbound2_lineartime} , when 
$x \leq \frac{1-2\eps}{4\eps} - O(\eta/\eps) $, we have $|L| < \frac{1-2\eps}{2\eps} \alpha N$. 
Since the expander is an $(\alpha N, (1-\eps)D)$ expander,  by Theorem \ref{thm:decfromerasures}, one can correct all the errors efficiently using $L$ (as the set of erasures) and the corrupted codeword.

The decoding algorithm runs in linear time because we only have a constant number of enumerations, and each enumeration takes linear time.
\end{proofof}

%!TEX root = main.tex

\section{List-decoding Radius}\label{sec:list_decode}
In this section, we consider expander graphs with bounded maximum degree $D_{\max}=O(1)$. Our main result of this section is the following theorem about the list-decoding radius of almost-regular expander codes. For convenience, we only consider relative distance and relative radii. Throughout this section, $\delta=\alpha/2\eps$ denotes the relative distance, $r$ denotes the relative decoding radius from the Johnson bound, and $\rho$ denotes the relative decoding radius that we will prove.
\begin{theorem}\label{thm:list_decoding_radius}
Given any $(\alpha N,(1-\eps)D)$-expander $G$ with a regular degree $D$ in $V_L$ and a maximum degree $D_{\max}$ in $V_R$,  its expander code has a relative list decoding radius $\rho = (\frac{1}{2} + \Omega(1/D_{\max}) ) \cdot \delta$ and list size $N^{O(1)}$.

In particular, when $\eps\le 1/4$, $\alpha/\eps \le 0.1$, and $D_{\max} \le 1.1 D_R$ for the average right degree $D_R$, the relative list-decoding radius $\rho$ is strictly  larger than the Johnson bound $r$ of binary codes with relative distance $\delta=\frac{\alpha}{2\eps}$.
% Then the list decoding radius $\rho$ is strictly larger than the Johnson bound $r$ whenever $1/D_R > r$.
\end{theorem}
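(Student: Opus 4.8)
The plan is to run a sharpened version of the double-counting argument behind the Johnson bound, using the expansion of $G$ to show that the error profile of the list is forced to be lopsided. Fix $y\in\mathbf{F}_2^N$ and let $C_1,\dots,C_L$ be all codewords within relative distance $\rho$ of $y$; write $E_j=\{i:(C_j)_i\neq y_i\}$, so $|E_j|\le\rho N$, and for each coordinate $i$ set $\tau_i=|\{j:i\in E_j\}|$. Let $S=\sum_i\tau_i=\sum_j|E_j|\le L\rho N$. Counting triples $(i,j_1,j_2)$ with $(C_{j_1})_i\neq(C_{j_2})_i$ gives, on one hand, $T=\sum_i\tau_i(L-\tau_i)=LS-\sum_i\tau_i^2$, and on the other hand $E_{j_1}\triangle E_{j_2}=\supp(C_{j_1}-C_{j_2})$ has size $\ge\delta N$ by Theorem~\ref{thm:dist_expander}, so $T\ge\binom{L}{2}\delta N$. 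The Johnson bound is exactly what one extracts using $\sum_i\tau_i^2\ge S^2/N$; the goal is to improve this lower bound on $\sum_i\tau_i^2$.

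\textbf{The structural lemma.} The heart of the proof is to show that a constant fraction of the error mass lies on \emph{heavy} coordinates: $S_H:=\sum_{i:\ \tau_i\ge 0.9L/D_{\max}}\tau_i=\Omega(\delta NL)$. Let $U$ be the set of parity checks unsatisfied by $y$; note $\Gamma^1(E_j)\subseteq U$ for every $j$. First I would lower bound $|U|$. Assuming $L\ge 2$ (the case $L\le 1$ is trivial), at most one $E_j$ can have size $<\delta N/2$ (two such would violate $|E_j\triangle E_k|\ge\delta N$), so pick $E_{j^*}$ with $|E_{j^*}|\ge\delta N/2$; since $|E_{j^*}|\le\rho N<\tfrac{\alpha}{2\eps}N$, Lemma~\ref{lem:expansion_larger} gives $|\Gamma(E_{j^*})|\ge(1-k^*\eps)D|E_{j^*}|-O_\eps(1)$ with $k^*=|E_{j^*}|/(\alpha N)\in[\tfrac1{4\eps},\tfrac1{2\eps})$, hence $|U|\ge|\Gamma^1(E_{j^*})|\ge(1-2k^*\eps)D|E_{j^*}|-O_\eps(1)=\Omega(D\delta N)$. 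Second, for each $v\in U$ we have $\sum_{i\in\Gamma(v)}\tau_i=\sum_j|\Gamma(v)\cap E_j|\ge L$; summing over $v\in U$ and splitting coordinates into heavy and light (the light ones contribute at most $\tfrac{0.9L}{D_{\max}}\sum_{v\in U}|\Gamma(v)|\le 0.9L|U|$, using that every right vertex has degree $\le D_{\max}$, while the heavy ones contribute at most $D\,S_H$) yields $L|U|\le D\,S_H+0.9L|U|$, i.e. $S_H\ge\tfrac{0.1\,L|U|}{D}=\Omega(\delta NL)$.

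\textbf{Finishing.} Splitting $T=\sum_{i\text{ heavy}}\tau_i(L-\tau_i)+\sum_{i\text{ light}}\tau_i(L-\tau_i)$ and using $L-\tau_i\le L(1-0.9/D_{\max})$ on heavy coordinates gives $T\le LS-\tfrac{0.9L}{D_{\max}}S_H\le L^2N\bigl(\rho-\Omega(\delta/D_{\max})\bigr)$; combined with $T\ge\binom{L}{2}\delta N$ this gives $\tfrac{L-1}{2L}\delta\le\rho-\Omega(\delta/D_{\max})$, so either $\rho\ge\tfrac{\delta}{2}+\Omega(\delta/D_{\max})$ or $L=O(D_{\max})=O(1)$. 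Hence the relative list-decoding radius is $\rho=(\tfrac12+\Omega(1/D_{\max}))\delta$ with list size $O(1)\le N^{O(1)}$. For the comparison with Johnson, expand $r=\tfrac12(1-\sqrt{1-2\delta})=\tfrac\delta2+\tfrac{\delta^2}{4}+O(\delta^3)$; by $D_{\max}\le 1.1D_R$ and Fact~\ref{fact:relations_parameters} (which gives $\tfrac1{D_R}\ge\tfrac{\alpha}{4\eps}-o(1)=\tfrac\delta2-o(1)$, so $D_{\max}\le\tfrac{2.2}{\delta}+o(1)$) the improvement term $\Omega(\delta/D_{\max})$ is of order $\delta^2$ with a leading constant that beats $\tfrac14$ precisely when $\delta=\tfrac{\alpha}{2\eps}$ is small; the hypotheses $\eps\le 1/4$ and $\alpha/\eps\le 0.1$ are what make this inequality strict.

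\textbf{Main obstacle.} The difficulty is concentrated in the structural lemma and in the constant bookkeeping at the end. Getting $|U|=\Omega(D\delta N)$ relies on the size-expansion tradeoff being usable, which is where $\eps\le 1/4$ enters: it keeps $k^*\eps$ bounded away from $\tfrac12$, so that $(1-2k^*\eps)D|E_{j^*}|$ is a genuine positive fraction of $D\delta N$ rather than a vanishing one; one also has to absorb the $O_\eps(1)$ error terms of Lemma~\ref{lem:expansion_larger} and check that $D_{\max}$ is large enough that the ``$\tfrac12-\Omega(1/D_{\max})$''-type quantities stay positive. The more delicate point is quantitative: since the gain over Johnson is only of order $\delta/D_{\max}$ while Johnson's second-order term is $\delta^2/4$, one must push the hidden constants (the heavy threshold $0.9/D_{\max}$, the loss in the triple count, the unique-neighbor fraction) far enough that the comparison is strict throughout the range $\alpha/\eps\le 0.1$, $D_{\max}\le 1.1D_R$; this is where treating the expander parameters as fixed constants and keeping track of low-order terms is essential.
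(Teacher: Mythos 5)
Your overall strategy matches the paper's: the same triple count $T=\sum_i\tau_i(L-\tau_i)$, the same set $U=\Gamma_{odd}$ of unsatisfied checks (every codeword in the list has at least one error in each $\Gamma(v)$ for $v\in U$, so $\sum_{i\in\Gamma(v)}\tau_i\ge L$), the same heavy threshold $0.9L/D_{\max}$, the same lower bound $|U|\ge(1-2k^*\eps)D|E_{j^*}|$ via unique neighbors and Lemma~\ref{lem:expansion_larger}, and the same conclusion $S_H\gtrsim 0.1L|U|/D=\Omega(\delta NL)$. For the \emph{first} conclusion ($\rho=(\tfrac12+\Omega(1/D_{\max}))\delta$) your calculation is essentially the paper's, and the constants agree.

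However, there is a genuine gap in your argument for the \emph{second} conclusion (strictly beating the Johnson bound). Your final bound $T\le LS-\tfrac{0.9L}{D_{\max}}S_H$ is obtained by estimating $\sum_i\tau_i^2\ge\sum_{\text{heavy }i}\tau_i^2\ge\theta L\,S_H$, which \emph{discards entirely} the contribution of the light coordinates to $\sum_i\tau_i^2$. This means you are \emph{replacing} the Johnson term, not \emph{adding} to it. Quantitatively this loses: with $S_H\gtrsim 0.045\rho NL$ and $\theta=0.9/D_{\max}$, you get $\rho\ge\delta/2+(\text{const})\cdot\delta/D_{\max}$ with the constant roughly $0.02$; and the best available lower bound on $1/D_{\max}$ (even using the paper's sharper Claim~\ref{clm:threshold_theta} rather than Fact~\ref{fact:relations_parameters}, which only yields $1/D_R\gtrsim\delta/2$) is about $0.6\delta$, so the improvement over $\delta/2$ is of order $0.01\delta^2$. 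The Johnson bound $r=\delta/2+\delta^2/4+O(\delta^3)$ already improves by $0.25\delta^2$, so your version of $\rho$ is strictly \emph{smaller} than $r$, not larger; the claimed ``leading constant that beats $\tfrac14$'' does not materialize from this bound. The paper avoids this by solving the full optimization (Claim~\ref{clm:upper_bound_triples}) that keeps both the heavy and the light quadratic terms; in the reorganized form this gives $\delta/2+\rho^2+\frac{N_h}{N}\frac{N-N_h}{N}(\theta-e)^2-\rho\ge 0$, i.e.\ the Johnson quadratic \emph{plus} a strictly positive variance correction, so $\rho^*-r=\Omega(D_{\max}\delta^3)>0$ follows automatically. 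To repair your proof you would also need to lower-bound $\sum_{\text{light }i}\tau_i^2\ge(S-S_H)^2/(N-N_h)$ by Cauchy--Schwarz and combine the two, and you would need the sharper $1/D_R$ bound to ensure $\theta>\rho$ (so $\theta-e>0$); the bound from Fact~\ref{fact:relations_parameters} alone is too weak for that.

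Two smaller points: you should record explicitly that one may assume $\rho<0.54\delta$ (otherwise the theorem is vacuous), as the paper does — otherwise the factor $(1-2k^*\eps)$ in your $|U|$ bound can degenerate as $\rho\to\delta$; and the step ``either $\rho\ge\delta/2+\Omega(\delta/D_{\max})$ or $L=O(D_{\max})$'' is the right inference, but it only delivers the first claim, not the Johnson comparison.
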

We remark that $D_{\max} \le 1.1 D_R$ is a relaxation for $D_R$-regular graphs, which are a standard instantiation of LDPC codes.
%\xnote{If $d$ is the relative distance then $\rho$ should also be the relative list decoding radius?}
%\textcolor{blue}{Xue: We may be able to relax the regularity of $D_R$. But the current proof of Claim 6.2 needs $D N= D_R M$.}

We finish the proof of Theorem~\ref{thm:list_decoding_radius} in the rest of this section. First of all, recall that the Johnson bound $r$ of binary codes with relative distance $\delta$ is $\frac{1-\sqrt{1-2\delta}}{2}$, which is the limit of the inequality
\[
\delta/2 + r^2 - r > 0.
\]

To prove Theorem~\ref{thm:list_decoding_radius}, the basic idea is to use locality (which we will define more precisely in the proof) of expander codes to improve the average case in the argument of the Johnson bound. In particular, for $L$ codewords $C_1,\ldots,C_L$ within distance $\rho N$ to some string $y$, we will show that the 1s in $C_1 \oplus y, \ldots, C_L \oplus y$ are concentrated on a constant fraction of positions. More precisely, we pick a threshold $\theta>r$ to show the concentration of 1s. We use the following fact about $\theta$ and $r$ in the proof.

%\begin{remark}
%First of all, we discuss the last condition $1/D_R > r$ in the above theorem. When $d=\alpha/2\eps$ is small, $r=d/2+O(d^2)=\alpha/4\eps + O(\alpha/\eps)^2$. At the same time, we always have $1/D_R \ge \frac{\alpha}{4\eps} - O(1/M)$ from Fact~\ref{fact:relations_parameters}.

%Next, the list size is upper bounded by $\frac{d-\rho}{d-(2-O(1)/D_R)\rho}$. In fact, we could relax the regularity on the right hand side of the expander as most right vertices have $D_R$ neighbors.
%\end{remark}

\begin{claim}\label{clm:threshold_theta}
When $\eps \le 1/4$ and $\alpha/\eps \le 0.1$, the relative list decoding radius $r$ of the Johnson bound of relative distance $\delta:=\alpha/2\eps$ of binary codes is less than $0.53 \delta$. Furthermore, when $D_{\max} \le 1.1 D_R$, our choice $\theta:=0.9/D_{\max}$ is at least $0.544 \delta$, which is greater than $r$.
\end{claim}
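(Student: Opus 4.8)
\emph{Overall plan.} I would verify the two numerical facts separately; then $\theta>r$ is immediate, since $r<0.53\,\delta<0.544\,\delta\le\theta$. For $r$, start from the closed form $r=\tfrac{1-\sqrt{1-2\delta}}{2}$ of the Johnson radius, recall that $\alpha/\eps\le 0.1$ gives $\delta=\alpha/(2\eps)\le 0.05$, and rewrite $r<0.53\,\delta$ as $\sqrt{1-2\delta}>1-1.06\,\delta$. For $\delta\le 0.05$ the right-hand side is positive, so squaring yields the equivalent inequality $0.12\,\delta>1.1236\,\delta^2$, i.e.\ $\delta<0.12/1.1236\approx 0.1068$, which holds. (Equivalently, expand $r=\delta/2+\delta^2/2+O(\delta^3)$ and use $\delta\le 0.05$.)

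\emph{Bounding $\theta$.} Since $D_{\max}\le 1.1\,D_R$ we have $\theta=0.9/D_{\max}\ge\tfrac{0.9}{1.1}\cdot\tfrac{1}{D_R}$, so it suffices to show $\tfrac{1}{D_R}\ge 0.665\,\delta$ (because $\tfrac{0.9}{1.1}\cdot 0.665>0.544$). The ``cheap'' estimate $\tfrac{1}{D_R}\ge\tfrac{\alpha}{4\eps}=\delta/2$ from Fact~\ref{fact:relations_parameters} only gives $\theta\ge 0.409\,\delta<r$, so I would instead use the linear-programming refinement of the size--expansion tradeoff developed inside the proof of Lemma~\ref{lem:expansion_larger}. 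For a left set $S$ with $|S|=k\alpha N$, that argument gives $|\Gamma(S)|\ge\Phi(k)\cdot D\alpha N-O(1)$, where $\Phi(k)$ is the optimum of the linear program of Section~\ref{sec:proof_lem_expansion}; when $k$ lies in the range whose optimal solution is supported on $\{\beta_j,\beta_{j+1}\}$ and $D_{\max}\ge j+1$, one obtains the explicit value $\Phi(k)=\big((1-\eps)-(1-1/k)^{j}\big)\big/\big(1-(1-1/k)^{j}(1+j/k)\big)$. Since $|\Gamma(S)|\le M=DN/D_R$ for every such $S$, this forces $\tfrac{1}{D_R}\ge\alpha\,\Phi(k)=2\eps\delta\,\Phi(k)$ (dropping the lower-order term against the $\Theta(N)$ main term), so it is enough to produce one admissible $k$ with $2\eps\,\Phi(k)\ge 0.665$.

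\emph{Choosing $k$, and the small-degree case.} If $D_{\max}\le 18$ there is nothing to prove: then $\theta=0.9/D_{\max}\ge 0.05\ge\delta\ge 0.544\,\delta$. If $D_{\max}\ge 19$, take $j=5$ (so $j+1=6\le D_{\max}$) and let $k$ be at the right end of the corresponding range; one checks $k=O(1/\eps)\le 1/\alpha$ (using $1/\alpha\ge 10/\eps$), so this $k$ is admissible, and a direct one-variable computation shows $2\eps\,\Phi(k)\ge 0.665$ for every $\eps\in(0,1/4]$ --- the quantity is decreasing in $\eps$, near $0.8$ as $\eps\to 0$ and about $0.69$ at $\eps=1/4$. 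Hence $\tfrac{1}{D_R}\ge 0.665\,\delta$, and $\theta\ge 0.544\,\delta>0.53\,\delta>r$ in all cases. The main obstacle is exactly this last verification and the bookkeeping around it: the margins are tight (one must improve the constant $\tfrac14$ of Fact~\ref{fact:relations_parameters} to essentially $\tfrac13$), the admissible LP index $j$ grows with $\eps$, and one has to split on $D_{\max}$ so that it is large enough to ``afford'' the chosen $j$ while keeping $k\le 1/\alpha$, all the while checking the $\eps$-inequality uniformly up to $\eps=1/4$.
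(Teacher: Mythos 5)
Your overall plan and your verification of $r<0.53\,\delta$ are sound and match the paper's (you square the defining inequality directly; the paper Taylor-expands $r=\tfrac{1-\sqrt{1-2\delta}}{2}$ and bounds it by $1.06\cdot\delta/2$; same thing). The interesting divergence is in proving $\theta\ge 0.544\,\delta$, and here your route is genuinely different: the paper does \emph{not} go through the general LP optimum $\Phi(k)$ with a high index $j$; it applies only the explicit ``2nd lower bound'' of Lemma~\ref{lem:expansion_larger} (which is the $j=2,3$ dual) at the single value $k=0.95/\eps$, substitutes $DN=D_RM$, and then simplifies the factor $\bigl(1-\tfrac{0.9}{3-2/k}\bigr)$ to $\bigl(1-\tfrac{0.9}{3}\bigr)=0.7$ to conclude $1/D_R\ge 0.3325\,\alpha/\eps-O(1/M)$.

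That simplification step appears to go in the wrong direction, and your instinct to reach for a better lower bound on $1/D_R$ than the $j=2,3$ formula provides is exactly what is needed. Concretely: the factor $1-\tfrac{0.9}{3-2\eps/0.95}$ is \emph{decreasing} in $\eps$, so for $\eps\le 1/4$ its value is at most $0.7$ (attained only as $\eps\to 0$) and equals about $0.636$ at $\eps=1/4$. To lower-bound $1/D_R$ one needs a \emph{lower} bound on this factor, not an upper bound. The best the $j=2,3$ dual gives at $\eps=1/4$, even optimizing over $k$, is roughly $1/D_R\ge 0.303\,\alpha/\eps$, i.e.\ $\theta\ge 0.9\cdot 0.303/1.1\cdot\alpha/\eps\approx 0.496\,\delta$, which is \emph{below} the Johnson radius (which is $\approx 0.513\,\delta$ at $\delta=0.05$). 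So the paper's proof as written does not close the gap; your higher-$j$ LP dual is the correct repair (for instance, $j=5$, $k\approx 8$ at $\eps=1/4$ gives $1/D_R\gtrsim 0.35\,\alpha/\eps$, hence $\theta\ge 0.58\,\delta>0.544\,\delta$). This is a real contribution of your approach, not an over-complication.

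Two smaller remarks. First, your case split on $D_{\max}$ is unnecessary: the LP dual lower bound obtained from the $\{j,j+1\}$ pair is valid regardless of $D_{\max}$, because one may take the LP primal over indices $1,\dots,J$ for any $J\ge D_{\max}$ (in particular $J\ge j+1$) without losing validity — enlarging the index set can only shrink the primal optimum, and the dual solution you exhibit is feasible for all $j'$ by the concavity check in Section~\ref{sec:proof_lem_expansion}. Second, the crux of the proof — ``a direct one-variable computation shows $2\eps\,\Phi(k)\ge 0.665$ for every $\eps\in(0,1/4]$'' — is asserted rather than carried out; with the margins as tight as they are (as you yourself flag), this verification \emph{is} the proof and needs to be displayed, including the choice of $k=k(\eps)$, the check that $k\le 1/\alpha$, and the monotonicity (or a finite grid check) over $\eps$.
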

%\xnote{Better rephrase this claim to first say the list decoding radius from the Johnson bound is $r<0.53d$, then say $\theta:=0.9/D_R$ is greater than $r$. Also, $\alpha/2\eps$ could be understood as $\alpha/2 \cdot \eps$, so better change a format.}

We defer the proof of Claim~\ref{clm:threshold_theta} to Section~\ref{sec:proof_threshold} and finish the proof of Theorem~\ref{thm:list_decoding_radius} here.

\begin{proofof}{Theorem~\ref{thm:list_decoding_radius}}
We first show $\rho>r$ given $\eps \le 1/4$, $\alpha/\eps \le 0.1$ and $D_{\max} \le 1.1 D_R$ then discuss how to prove $\rho=\delta/2 \cdot (1 + \Omega(1/D_{\max}))$ in general. We fix the threshold $\theta:=0.9/D_{\max}$ as in Claim~\ref{clm:threshold_theta}. %We first show the list-decoding radius is strictly larger than the Johnson bound $r$ for binary codes with small relative distance $d=\alpha/2\eps$, then prove $\rho=d/2 \cdot (1 + \Omega(1/D_R))$. 
For convenience, we assume that the decoding radius $\rho$ is always less than $0.54 \delta$ in this proof (otherwise it already satisfies $\rho = \frac{\delta}{2} (1 + \Omega(1/D_{\max})$ and is strictly larger than $r$ from the above claim $r<0.53 \delta$). %\xnote{Should also mention that $\rho=d/2 \cdot (1 + \Omega(1/D_R))$ otherwise? } 
Moreover, $\theta:=0.9/D_{\max} \ge 0.544 \delta$ from Claim~\ref{clm:threshold_theta} is larger than $\rho$.

We fix an arbitrary string $y \in \{0,1\}^N$ and consider the number of codewords within relative distance $\rho$ to it, say, there are $L$ codewords $C_1,\ldots,$ and $C_L$. Let $\Gamma_{odd}(S)$ denote the neighbors of $S$ with an odd number of edges to $S$. Given a $\{0,1\}$-string $z$, let $S_z$ denote the set of 1-entries and $\Gamma_{odd}(z):=\Gamma_{odd}(S_z)$. Back to the $L$ codewords, since $(y\oplus  C_i)\oplus 
(y \oplus C_j)$ is a codeword,  $\Gamma_{odd}(y \oplus C_1)=\cdots=\Gamma_{odd}(y \oplus C_L)$ from the definition of the expander code --- all codewords satisfy those parity checks. Hence we use $\Gamma_{odd}$ to denote this neighbor set $\Gamma_{odd}(y \oplus C_1)=\cdots=\Gamma_{odd}(y \oplus C_L)$. 

First of all, we lower bound $|\Gamma_{odd}|$. We pick $C_i$ such that $|y \oplus C_i| \in [0.5\delta \cdot N, \rho \cdot N]$. Note that such a $C_i$ exists as long as $L \ge 2$. Then $|\Gamma_{odd}(y\oplus C_i)| \ge (1-2\eps \cdot \frac{|y+C_i|}{\alpha N}) D \cdot |y+C_i| - O_{\eps}(1)$ from Lemma~\ref{lem:expansion_larger}, %\xnote{Need some explanation of using Lemma 2.3, and that $|y+C_i| \geq \alpha N$. Also there is a $-O(1)$ term, so might be slightly less than $0.45 \rho D \cdot N$?} 
which is at least $0.46 \rho D \cdot N - O_{\eps}(1)$ given $\rho \le 0.54 \delta$ and the range of $|y \oplus C_i|$. For ease of exposition, we use the lower bound $|\Gamma_{odd}| \ge 0.45 \rho \cdot D N$ in the rest of this proof.

Let $\tau_i$  denote how many codewords of $C_i$ whose $i$th bit is different from the corresponding bit in $y$, i.e., $\sum_{j=1}^L 1\{i \in \supp(y\oplus C_j)\}$. %\xnote{Definition not consistent with the expression. Should be how many $y\oplus C_j$ have bit-1 in position $i$?} 
Since $|y \oplus C_i| \le \rho N$, we have $\sum_i \tau_i \le \rho N \cdot L$ --- in another word, $\E_i [\tau_i] \le \rho L$. The key difference between our calculation and the Johnson bound is that we will prove $\tau_1,\ldots,\tau_n$ have a large deviation. %\xnote{``non-uniform" sounds a little confusing. Maybe use something like ``have a large deviation"?} 
For convenience, we call $i \in V_L$ heavy if and only if $\tau_i \ge \theta \cdot L$ for $\theta=0.9/D_{\max}$ and show that their sum is $\Theta(N L)$:
\begin{equation}\label{eq:heavy_elements}
S_h:= \sum_{\text{heavy } i} \tau_i \ge 0.45 \rho N \cdot (L -  D_{\max} \cdot \theta L).
\end{equation}
Since $\theta>\rho$ and $\E_i [\tau_i] \le \rho L$, this implies that $\tau_1,\ldots,\tau_N$ have a large deviation. %\xnote{I don't understand this sentence. If we choose a larger $\theta$, then the lower bound above is smaller. Why does this imply a constant fraction of mass of $\sum_i \tau_i$ is concentrated on heavy elements?}

To prove Eq~\eqref{eq:heavy_elements}, the starting observation is that for each $v \in \Gamma_{odd} \subseteq V_R$, $\sum_{i \in \Gamma(v)} \tau_i \ge L$, by the definition of $\Gamma_{odd}$. Since $v$ has $\le D_{\max}$ neighbors,
\[
\sum_{\text{heavy } i\in \Gamma(v)} \tau_i \ge L -  D_{\max} \cdot \theta L.
\]
By the double counting argument, \[
\sum_{v \in \Gamma_{odd}} \sum_{\text{heavy }i \in \Gamma(v)} \tau_i \ge (L -  D_{\max} \cdot \theta L) \cdot |\Gamma_{odd}| \ge (L-D_{\max} \cdot \theta L) \cdot 0.45 \rho D \cdot N. %\ge |L|/2 \cdot \rho \cdot (1 - 2 \eps \frac{\rho}{\gamma N} ) D.
\]
So 
\[
\sum_{\text{heavy } i} \tau_i \ge \frac{\sum_{v \in \Gamma_{odd}} \sum_{\text{heavy }i \in N(v)} \tau_i}{D} \ge 0.45 \rho N \cdot (L -  D_{\max} \cdot \theta L).
\]
Moreover, let $N_h$ denote the number of heavy elements. We have $\theta L \cdot N_h \le S_h$, which upper bounds $N_h$ by $S_h/\theta L$.

Similar to the argument of the Johnson bound, let $T$ denote all triples of the form $(i,j_1,j_2)$ where $i \in [N]$, $j_1,j_2 \in [L]$ and $C_{j_1}(i) \neq C_{j_2}(i)$. Since the distance between $C_{j_1}$ and $C
_{j_2}$ is at least $\delta N$ for any $j_1 \neq j_2$, the number of triples is at least ${L \choose 2} \cdot \delta N$. 

On the other hand, $T$ is equal to $\sum_{i \in [n]} \tau_i (L-\tau_i)$. Then we provide a upper bound on $\sum_{i \in [n]} \tau_i (L-\tau_i)$ under the two  constraints $\sum_i \tau_i \le \rho N \cdot L$ and $\sum_{\text{heavy } i} \tau_i \ge 0.45 \rho N \cdot (L- D_{\max} \cdot \theta L)$. 

\begin{claim}\label{clm:upper_bound_triples}
Given $\sum_i \tau_i \le \rho N \cdot L$, the heavy threshold $\theta$, and $S_h \ge 0.45 \rho N \cdot (L- D_{\max} \cdot \theta L)$, $\sum_{i \in [n]} \tau_i (L-\tau_i) \le N_h \cdot \theta L (L-\theta L) + (N-N_h) \cdot e L (L - eL) $
where $S_h = 0.45 \rho N \cdot (L- D_{\max} \cdot \theta L)$, $e=\frac{\rho L N - S_h}{L(N-N_h)}$, and $N_h$ is equal to the upper bound $S_h/\theta L$.
\end{claim}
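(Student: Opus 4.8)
The plan is to read this inequality as the value of a small optimization problem and pin down its extremal configuration by a smoothing argument. Set $f(t):=t(L-t)$, which is concave with maximum at $t=L/2$; since $D_{\max}>3$ we have $\theta=0.9/D_{\max}<1/2$, so $f$ is increasing on $[0,\theta L]$. Given the true vector $(\tau_i)_{i\in[N]}$, let $N_h$ and $S_h$ be the number of heavy coordinates ($\tau_i\ge\theta L$) and their sum; by \eqref{eq:heavy_elements}, $S_h\ge S_h^{\min}:=0.45\rho N(L-D_{\max}\theta L)$ (this is the value written $S_h$ in the statement), and since each heavy coordinate is at least $\theta L$, also $S_h\ge\theta L\,N_h$. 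It suffices to bound $\sum_{i\in[N]}\tau_i(L-\tau_i)=\sum_i f(\tau_i)$ over all real vectors in $[0,L]^N$ satisfying these constraints together with $\sum_i\tau_i\le\rho NL$ — replacing integer $\tau_i$ by real ones only enlarges the feasible set, so this is harmless for an upper bound.

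First I would split the sum over heavy and light coordinates and equalize inside each group. Jensen gives $\sum_{\text{heavy}}f(\tau_i)\le N_h f(S_h/N_h)$. For the light coordinates, whose average is at most $\theta L$, concavity together with monotonicity of $f$ on $[0,\theta L]$ gives $\sum_{\text{light}}f(\tau_i)\le (N-N_h)f\!\big(\tfrac{\rho NL-S_h}{N-N_h}\big)$; one checks that $\tfrac{\rho NL-S_h}{N-N_h}$ always lies in $[0,\theta L]$, which is automatic from $0\le S_h\le\rho NL$, $S_h\ge\theta L N_h$ and $\rho<\theta$, and which is exactly the quantity called $e$ in the statement. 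Hence $\sum_i f(\tau_i)\le g(N_h,S_h):=N_h f(S_h/N_h)+(N-N_h)f\!\big(\tfrac{\rho NL-S_h}{N-N_h}\big)$.

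Next I would optimize $g$. Using the identity $f(u)-uf'(u)=u^2$ for $f(t)=Lt-t^2$, one gets $\partial_{S_h}g=f'(S_h/N_h)-f'\!\big(\tfrac{\rho NL-S_h}{N-N_h}\big)\le 0$, because $f'$ is decreasing and the heavy average $S_h/N_h\ge\theta L$ dominates the light average. So at the optimum $S_h$ equals its smallest feasible value $\max\{S_h^{\min},\theta L N_h\}$, and $g$ becomes a function of $N_h$ alone. A short computation shows this function is non-decreasing for $N_h\le S_h^{\min}/(\theta L)$ (there $S_h=S_h^{\min}$ and $\partial_{N_h}g=(S_h^{\min}/N_h)^2-\big(\tfrac{\rho NL-S_h^{\min}}{N-N_h}\big)^2\ge 0$) and non-increasing for $N_h\ge S_h^{\min}/(\theta L)$ (there $S_h=\theta L N_h$, and writing $h(N_h):=N_h(\theta L)^2+\tfrac{(\rho NL-\theta L N_h)^2}{N-N_h}$ one finds $g=\rho NL^2-h(N_h)$ with $h'(N_h)=(\theta L-v)^2\ge 0$, where $v=\tfrac{\rho NL-\theta L N_h}{N-N_h}$). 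Hence $g$ is maximized at $N_h=S_h^{\min}/(\theta L)$ with $S_h=S_h^{\min}$; there the heavy average is exactly $\theta L$ and the light average is $eL$, giving precisely $\sum_i f(\tau_i)\le N_h\,\theta L(L-\theta L)+(N-N_h)\,eL(L-eL)$, the claimed bound.

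I expect the main obstacle to be not the smoothing itself but the boundary bookkeeping: checking that the extremal configuration is feasible and that the relaxations are legitimate, i.e. $e\le\theta$, $N_h=S_h^{\min}/(\theta L)\le N$, $S_h^{\min}\le\rho NL$, and that the "all heavy equal, all light equal'' reduction never pushes a common value outside $[\theta L,L]$ or $[0,\theta L]$. Each of these collapses to the single inequality $\rho<\theta$; indeed $D_{\max}\theta=0.9$, so $S_h^{\min}=0.045\rho NL$ and all the quantities become explicit. And $\rho<\theta$ is guaranteed inside the proof of Theorem~\ref{thm:list_decoding_radius} by $\rho\le 0.54\delta<0.544\delta\le\theta$ from Claim~\ref{clm:threshold_theta}.
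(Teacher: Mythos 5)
Your proof is correct and follows the same overall strategy as the paper's four-step smoothing argument: Jensen to equalize within each group, saturate $\sum_i\tau_i$ at $\rho NL$ using monotonicity of $f(t)=t(L-t)$ on $[0,\theta L]$, and then optimize the two-group profile over $(N_h,S_h)$. The only real difference is the order of the last two optimizations: the paper fixes $S_h$, pushes $N_h$ to its maximum $S_h/\theta L$ using $\partial_{N_h}g=(S_h/N_h)^2-\bigl(\tfrac{\rho LN-S_h}{N-N_h}\bigr)^2\ge 0$, and then pushes $S_h$ down to $S_h^{\min}$ by convexity of $\sum_i\tau_i^2$; you fix $N_h$, push $S_h$ down via $\partial_{S_h}g\le 0$, and then verify the resulting one-variable function is unimodal with peak at $N_h=S_h^{\min}/\theta L$. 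Both routes reach the same vertex $\bigl(N_h,S_h\bigr)=\bigl(S_h^{\min}/\theta L,\,S_h^{\min}\bigr)$ of the feasible polytope. Your version is somewhat more careful on the boundary bookkeeping, in particular that $e=\tfrac{\rho NL-S_h}{L(N-N_h)}\le\theta$, that $\theta<1/2$ (so $f$ is increasing where you apply Jensen on the light side), and that the unimodality in $N_h$ really splits into the two regimes $\theta L N_h\lessgtr S_h^{\min}$; the paper states these steps at a higher level, and it is worth noting, as you do, that they all trace back to $\rho<\theta$ via Claim~\ref{clm:threshold_theta} and the standing assumption $\rho\le 0.54\delta$.
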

%\knote{It seems we have two $N_h$ defined. One is upper bounded by $S_h/\theta L$, the other is exactly $S_h/\theta L$, maybe we should distinguish them?}

We defer the proof of Claim~\ref{clm:upper_bound_triples} to Section~\ref{sec:bound_triples} and summarize the two bounds to get \[
{L \choose 2} \delta N \le T \le N_h \cdot \theta L (L-\theta L) + (N-N_h) \cdot e L (L - eL) \]
where $e=\frac{\rho L N - S_h}{L(N-N_h)}$ and $N_h=S_h/\theta L$.
This implies 
\[
\left( \delta/2 + \frac{N_h}{N} \theta^2 + \frac{N-N_h}{N} e^2 - \rho \right) L \le \delta/2.
\]
So $L$ is upper bounded when the decoding radius $\rho$ satisfies $\delta/2 + \frac{N_h}{N} \theta^2 + \frac{N-N_h}{N} e^2 - \rho>N^{-O(1)}$ where $\frac{N_h}{N} \theta + \frac{N-N_h}{N} e = \rho$. For convenience, let $\rho^*$ be the limit of $\rho$ satisfying the above inequality, i.e., $\delta/2 + \frac{N_h}{N} \theta^2 + \frac{N-N_h}{N} e^2 - (\rho^*)=0$. We show $\rho^*>r$ and provide an explicit lower bound in the rest of this proof.

\paragraph{Comparing with the Johnson bound.} Recall that the Johnson bound $r$ is obtained from the equation
\[
\delta/2 + r^2 - r = 0,
\]
which implies $r = \frac{1 - \sqrt{1-2 \delta}}{2}$. Back to the equation of $\rho^*$,
\begin{equation}\label{eq:new_radius}
\delta/2 + \frac{N_h}{N} \theta^2 + \frac{N-N_h}{N} e^2 - (\rho^*)=0.
\end{equation}
Given $\frac{N_h}{N} \theta + \frac{N-N_h}{N} e = \rho^*$, the two middle terms
\[
\frac{N_h}{N} \theta^2 + \frac{N-N_h}{N} e^2 = (\rho^*)^2 + \frac{N_h}{N} (\theta - \rho^*)^2 + \frac{N-N_h}{N} (e-\rho^*)^2 = (\rho^*)^2 + \frac{N_h}{N} \cdot \frac{N-N_h}{N} (\theta - e)^2.
\]
This would always increase the range of $\rho^*$ since $\frac{N_h}{N} \cdot \frac{N-N_h}{N} (\theta - e)^2$ is positive. Specifically, the two equations imply
\begin{align*}
(\rho^*)^2 - r^2 + \frac{N_h}{N} \cdot \frac{N-N_h}{N} (\theta - e)^2 - \rho^* + r & = 0 \\
\Leftrightarrow
 (\rho^* - r) \cdot (1 - \rho^* - r) & = \frac{N_h}{N} \cdot \frac{N-N_h}{N} (\theta - e)^2  \\
\Leftrightarrow
\rho^* - r & = \frac{\frac{N_h}{N} \cdot \frac{N-N_h}{N} (\theta - e)^2}{1-\rho^*-r}.
\end{align*}
Since $\theta>0.544 \delta$ and $e<\rho \in [0.5 \delta, 0.54 \delta]$, we have $\theta-e=\Omega( \delta)$. Moreover, $N_h/N=\Omega(D_{\max} \cdot \delta)$ from Claim~\ref{clm:upper_bound_triples} and both $r$ and $\rho$ are at most $0.1$; so we have $\rho^*-r = \Omega(D_{\max} \cdot \delta^3)$.

%In another word, one need $r<1/D_R$ to guarantee the existence of heavy elements such that $\rho>r$ from the above discussion.

%In particular, when $\alpha/\eps<0.1$, the list-decoding radius of the Johnson bound $\frac{1}{2}(1-\sqrt{1-2d}) = \frac{d}{2} + \frac{2^{-2}}{2 \cdot 2!} \cdot (2d)^2 + \cdots < 1.06 \cdot \frac{d}{2}=\frac{0.265 \alpha}{\eps}$ for $d=\alpha/2\eps$. 
\paragraph{Showing $\rho^*=\bigg( \frac{1}{2} + \Omega(1/D_{\max}) \bigg) \delta$.} We only need to consider $1/D_{\max} \ge 2\delta$ such that $\theta=0.9/D_{\max}$ is greater than $\rho$ \big(otherwise $r=( \frac{1}{2} + \Omega(1/D_{\max}) ) \delta$ \big). By the same argument, we will get \eqref{eq:new_radius} as the limit of $\rho$. Then we simplify \eqref{eq:new_radius} to
\[
\rho^*>\delta/2 + \frac{N_h}{N} \theta^2 = \delta/2+ \frac{S_h}{N L} \cdot \theta = \delta/2 + 0.045 \rho^* \cdot 0.9 /D_{\max}.
\]
This implies $\rho^* > \frac{d/2}{1 - 0.04 /D_{\max}} = d/2 \cdot (1+\Omega(1/D_{\max}))$\footnote{While a better constant in $\Omega(1/D_{\max})$ is $0.1125$ obtained via $\theta=1/2D_{\max}$, we did not intend to optimize the constant in this work.}.
\end{proofof}

\subsection{Proof of Claim~\ref{clm:threshold_theta}}\label{sec:proof_threshold}
When $\alpha/\eps \le 0.1$, the Johnson bound $r=\frac{1}{2}(1-\sqrt{1-2\delta})$ has a Taylor expansion $\frac{\delta}{2} + \frac{2^{-2}}{2 \cdot 2!} \cdot (2\delta)^2 + \cdots$. This is at most $1.06 \cdot \frac{\delta}{2}=\frac{0.265 \alpha}{\eps}$ for $\delta=\alpha/2\eps$. 

Then, we show $\frac{1}{D_{R}} \ge \frac{0.33 \alpha }{\eps}-O(1/M)$. We plan to apply the 2nd lower bound in Lemma~\ref{lem:expansion_larger} for $k:=0.95/\eps$. %It is in the range of the 2nd bound in Lemma~\ref{lem:expansion_larger} whose upper bound $\frac{1+\sqrt{1 - 4\eps/3}}{2\eps}=\frac{1+\sqrt{2/3}}{2\eps}>0.9/\eps$ for $\eps \le 1/4$. Moreover, 
A subset of size $k \alpha N$ exists because $0.95 \alpha /\eps \le 3.8/D_R + O(1)/M$ from Fact~\ref{fact:relations_parameters}. Since $D_R\ge 4$, this is less than 1 such that one could find a subset of size $k \alpha N$ in $V_L$. Next we apply Lemma~\ref{lem:expansion_larger} to such a  subset in $V_L$ of size $k \alpha N$ and have
\[
\frac{k}{2} (1- \frac{2k\eps -1}{3-2/k}) \cdot D \alpha N - O(1) \le M.
\]
For $k=0.95/\eps$, we use $D N = D_R M$ to simplify it to
\[
\frac{0.95}{2 \eps} \cdot (1- \frac{0.9}{3-2\eps/0.9}) \cdot \alpha D_R M - O(1) \le M.
\]
Since $\eps \le 1/4$, we have 
\[
\frac{0.95 \alpha}{2 \eps} \cdot (1-\frac{0.9}{3}) \le 1/D_R + O(1/M),
\]which shows $1/D_R \ge \frac{0.3325 \alpha}{\eps} - O(1/M)$

Given $D_{\max} \le 1.1 D_R$, we have that $\theta:=0.9/D_{\max} \ge 0.9/1.1 D_R \ge 0.272 \alpha /\eps$ is strictly larger than $r<0.265\alpha/\eps$.

\subsection{Proof of Claim~\ref{clm:upper_bound_triples}}\label{sec:bound_triples}
We divide the proof into four steps:
\begin{itemize}
    \item When $\sum_{i} \tau_i$, $S_h$ and $N_h$ are fixed, $\sum_{i \in [n]} \tau_i (L-\tau_i)$ is maximized at $\tau_i=S_h/N_h$ for heavy elements and $\tau_i=(\sum_{i} \tau_i- S_h)/(N-N_h)$ for non-heavy elements. So we assume heavy elements and non-heavy elements have the same values of $\tau_i$ separately.

    \item Then we fix $S_h$ and $N_h$ and focus on $\sum_i \tau_i$. Since $\tau_i=(\sum_{i} \tau_i- S_h)/(N-N_h)$ for non-heavy $i$'s is less than $L/2$, increasing $\sum_i \tau_i$ will make $\sum_i \tau_i (L-\tau_i)$ larger. So we assume $\sum_i \tau_i=\rho L N$ to estimate an upper bound.
    
    \item Next, when $S_h$ is fixed, the upper bound 
    \begin{equation}\label{eq:upper_bound_johnson}
        N_h \cdot (S_h/N_h) \cdot (L-S_h/N_h) + (N-N_h) \cdot \frac{\rho L N - S_H}{N-N_h} \cdot (L-\frac{\rho L N - S_H}{N-N_h})
    \end{equation} has a positive derivative $(\frac{S_h}{N_h})^2 - (\frac{\rho L N - S_h}{N-N_h})^2$ with $N_h$ from the definition of heavy elements. To estimate an upper bound, we fix $N_h=S_h/\theta L$.
    
    \item Finally, by the convexity of $\sum_i \tau_i^2$, the upper bound in \eqref{eq:upper_bound_johnson} is maximized at $S_h=0.45 \rho N (L - D_{\max} \cdot \theta L)$.
\end{itemize}
So we obtain a upper bound where $N_h=S_h/\theta L$ heavy elements have $\tau_i=\theta L$ and the rest elements have $\tau_i=\frac{\rho L N-S_h}{N-N_h}$.

\section{Open Questions}\label{sec:open}
Our work leaves many intriguing open questions, and we list some of them here.
\begin{enumerate}
    \item Our distance in Theorem~\ref{thm:infor_distance} is only shown to be tight by a graph that is not strictly regular on the right. For bipartite expander graphs that are regular on both sides, is it possible to get an improved distance bound, or is the bound in Theorem~\ref{thm:infor_distance} still tight?
    \item Can one design efficient algorithms to correct more errors? Specifically, it would be nice to get close to the half distance bound. Alternatively, is there any hardness result that prevents us from achieving this?
    \item Can one design efficient algorithms to correct more errors for the case of $\eps \ge 1/4$? So far all our improvements over previous results are only for the case of $\eps < 1/4$.
    \item Can one get a better list decoding radius for general expander codes? Can one design efficient list decoding algorithms? As mentioned before, any efficient list decoding algorithm would also immediately improve our results on unique decoding, and in particular imply unique decoding up to half distance. If there is any hardness result for unique decoding close to half distance, this would also rule out the possibility of list decoding for general expander codes.
\end{enumerate}

\bibliographystyle{alpha} 
\bibliography{test}

\newcommand{\etalchar}[1]{$^{#1}$}
\begin{thebibliography}{MRRZ{\etalchar{+}}20}

\bibitem[AC88]{AlonChung1988}
N.~Alon and F.R.K. Chung.
\newblock Explicit construction of linear sized tolerant networks.
\newblock {\em Discrete Mathematics}, 72(1):15--19, 1988.

\bibitem[AC02]{AlonCapalbo}
Noga Alon and Michael~R. Capalbo.
\newblock Explicit unique-neighbor expanders.
\newblock In {\em Proceedings of the 43rd Symposium on Foundations of Computer
  Science}, FOCS '02, page~73, USA, 2002. IEEE Computer Society.

\bibitem[ADS12]{ADS12}
Sanjeev Arora, Constantinos Daskalakis, and David Steurer.
\newblock Message-passing algorithms and improved {LP} decoding.
\newblock {\em {IEEE} Trans. Inf. Theory}, 58(12):7260--7271, 2012.

\bibitem[CRVW02]{CRVW}
Michael Capalbo, Omer Reingold, Salil Vadhan, and Avi Wigderson.
\newblock Randomness conductors and constant-degree lossless expanders.
\newblock In {\em Proceedings of the 34th Annual ACM STOC}, pages 659--668.
  ACM, 2002.

\bibitem[DSV12]{dimakis2012ldpc}
A.~G. {Dimakis}, R.~{Smarandache}, and P.~O. {Vontobel}.
\newblock Ldpc codes for compressed sensing.
\newblock {\em IEEE Transactions on Information Theory}, 58(5):3093--3114,
  2012.

\bibitem[Eli57]{Elias57listdecoding}
Peter Elias.
\newblock {\em List decoding for noisy channels}.
\newblock In Wescon Convention Record. Research Laboratory of Electronics,
  Massachusetts Institute of Technology, 1957.

\bibitem[FMS{\etalchar{+}}07]{FMSSW07}
J.~{Feldman}, T.~{Malkin}, R.~A. {Servedio}, C.~{Stein}, and M.~J.
  {Wainwright}.
\newblock Lp decoding corrects a constant fraction of errors.
\newblock {\em IEEE Transactions on Information Theory}, 53(1):82--89, 2007.

\bibitem[FWK05]{FWK05}
Jon Feldman, Martin~J. Wainwright, and David~R. Karger.
\newblock Using linear programming to decode binary linear codes.
\newblock {\em {IEEE} Trans. Inf. Theory}, 51(3):954--972, 2005.

\bibitem[Gal63]{Gallager1963}
Robert~G. Gallager.
\newblock {\em {Low-Density Parity-Check Codes}}.
\newblock The MIT Press, 09 1963.

\bibitem[HNL06]{Hoory2006ExpanderGA}
S.~Hoory and A.~Wigderson N.~Linial.
\newblock Expander graphs and their applications.
\newblock {\em Bulletin of the American Mathematical Society}, 43:439--561,
  2006.

\bibitem[Kah95]{Kahale}
Nabil Kahale.
\newblock Eigenvalues and expansion of regular graphs.
\newblock {\em J. ACM}, 42(5):1091--1106, September 1995.

\bibitem[LPS88]{LPS88}
A.~Lubotzky, R.~Phillips, and {Peter Clive} Sarnak.
\newblock Ramanujan graphs.
\newblock {\em Combinatorica}, 8(3):261--277, September 1988.

\bibitem[LSMS98]{luby1998improved}
M.G. {Luby}, M.~Amin {Shokrolloahi}, M.~{Mizenmacher}, and D.A. {Spielman}.
\newblock Improved low-density parity-check codes using irregular graphs and
  belief propagation.
\newblock In {\em Proceedings. 1998 IEEE International Symposium on Information
  Theory (Cat. No.98CH36252)}, page 117, 1998.

\bibitem[MRRZ{\etalchar{+}}20]{JNNSM20LDPC}
Jonathan Mosheiff, Nicolas Resch, Noga Ron-Zewi, Shashwat Silas, and Mary
  Wootters.
\newblock Ldpc codes achieve list decoding capacity.
\newblock In {\em 2020 IEEE 61st Annual Symposium on Foundations of Computer
  Science (FOCS)}, pages 458--469, 2020.

\bibitem[RU01]{richardson2001the}
T.J. {Richardson} and R.L. {Urbanke}.
\newblock The capacity of low-density parity-check codes under message-passing
  decoding.
\newblock {\em IEEE Transactions on Information Theory}, 47(2):599--618, 2001.

\bibitem[RZWZ21]{ron2021linear}
Noga Ron-Zewi, Mary Wootters, and Gilles Z{\'e}mor.
\newblock Linear-time erasure list-decoding of expander codes.
\newblock {\em IEEE Transactions on Information Theory}, 2021.

\bibitem[Sho04]{Shok}
Amin Shokrollahi.
\newblock Ldpc codes: An introduction.
\newblock In Keqin Feng, Harald Niederreiter, and Chaoping Xing, editors, {\em
  Coding, Cryptography and Combinatorics}, pages 85--110, Basel, 2004.
  Birkh{\"a}user Basel.

\bibitem[SS96]{SS96}
M.~{Sipser} and D.~A. {Spielman}.
\newblock Expander codes.
\newblock {\em IEEE Transactions on Information Theory}, 42(6):1710--1722,
  1996.

\bibitem[Sud00]{Sudan2000note}
Madhu Sudan.
\newblock A crash course on coding theory.
\newblock availabel at
  \url{http://people.seas.harvard.edu/~madhusudan/MIT/coding/ibm/}, 2000.

\bibitem[Tan81]{Tanner81}
R.~Tanner.
\newblock A recursive approach to low complexity codes.
\newblock {\em IEEE Transactions on Information Theory}, 27(5):533--547, 1981.

\bibitem[Vid13a]{Viderman13b}
Michael Viderman.
\newblock Linear-time decoding of regular expander codes.
\newblock {\em ACM Trans. Comput. Theory}, 5(3), August 2013.

\bibitem[Vid13b]{Viderman13}
Michael Viderman.
\newblock Lp decoding of codes with expansion parameter above 2/3.
\newblock {\em Inf. Process. Lett.}, 113(7):225–228, April 2013.

\bibitem[Woz58]{Wozencraft58}
John~M. Wozencraf.
\newblock {\em List decoding}.
\newblock In Quarterly Progress Report. Research Laboratory of Electronics,
  Massachusetts Institute of Technology, 1958.

\end{thebibliography}

\appendix

%%%%%%%%%%%%%%%%%%%%%%%%%%%%%%%%%%%%%%%%%%

%%%%%%%%%%%%%%%%%%%%%%%%%%%%%%%%%%%%%%%%%%%%

\section{Supplemental Proofs}\label{appen:proof_expansion}

We finish the calculation omitted in Section~\ref{sec:bound_dist} here. We provide one calculation for graphs that is not necessarily regular on the right and another calculation for regular graphs.
\begin{proposition}\label{appendix:random_graph}
    If parameters $\alpha, \epsilon, M, N, D$ satisfies $\left( \frac{e}{\alpha} \right)\cdot\left(\frac{e\alpha ND}{\epsilon M}\right)^{\epsilon D} < 1$, then the probability of a random bipartite graph, where each vertex in $V_L$ has $D$ random neighbros, is $(\alpha N, (1-\epsilon)D)$--expander is close to 1. 
\end{proposition}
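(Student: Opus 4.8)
The plan is a first-moment (union-bound) argument over the left subsets that could witness a failure of expansion, using the standard collision-counting estimate for the number of distinct right-endpoints of a fixed batch of random edges. I work in the model where each left vertex chooses its $D$ right-neighbours i.i.d.\ uniformly in $[M]$ (requiring them to be distinct only makes things easier). A first useful observation is that the hypothesis $(e/\alpha)(e\alpha ND/(\epsilon M))^{\epsilon D}<1$ already forces $e\alpha ND/(\epsilon M)<1$ — since $\alpha,\epsilon<1<e$ and $\epsilon D\ge1$ — hence $Ds<M$ for all $s\le\alpha N$, so the per-edge collision probabilities below are genuinely small.

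For a fixed $S\subseteq V_L$ of size $s$, reveal its $Ds$ random right-endpoints sequentially and let $X_i=1$ if the $i$-th one repeats an earlier value; then $|\Gamma(S)|=Ds-\sum_i X_i$, so $|\Gamma(S)|<(1-\epsilon)Ds$ iff $\sum_i X_i>\epsilon Ds$. Since $\Pr[X_i=1\mid\text{history}]\le (i-1)/M\le Ds/M$ always, a sequential conditioning argument gives $\Pr[\bigwedge_{i\in I}\{X_i=1\}]\le (Ds/M)^{|I|}$ for any index set $I$, and a union bound over the $\binom{Ds}{\lceil\epsilon Ds\rceil}$ candidate sets $I$ (with $\binom nk\le(en/k)^k$) gives $\Pr[|\Gamma(S)|<(1-\epsilon)Ds]\le\big(\tfrac{eDs}{\epsilon M}\big)^{\epsilon Ds}$, up to the harmless rounding of $\epsilon Ds$. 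Union-bounding this over the $\binom Ns\le(eN/s)^s$ subsets of each size $s\le\alpha N$, the failure probability is at most
\[
\sum_{s=1}^{\lfloor\alpha N\rfloor}\binom Ns\Big(\tfrac{eDs}{\epsilon M}\Big)^{\epsilon Ds}\;\le\;\sum_{s=1}^{\lfloor\alpha N\rfloor}\phi(s/N)^{s},\qquad \phi(\beta):=\frac e\beta\Big(\frac{e\beta ND}{\epsilon M}\Big)^{\epsilon D}=e\Big(\frac{eND}{\epsilon M}\Big)^{\epsilon D}\beta^{\,\epsilon D-1}.
\]
Because $\epsilon D\ge1$ (necessary for any such expander, by Fact~\ref{fact:relations_parameters}; in the construction of Section~\ref{sec:bound_dist} in fact $\epsilon D\gg1$), $\phi$ is non-decreasing, so $\phi(s/N)\le\phi(\alpha)=(e/\alpha)(e\alpha ND/(\epsilon M))^{\epsilon D}=:q$, which is exactly the quantity the hypothesis says is $<1$. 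Hence the failure probability is at most $\sum_{s\ge1}q^{s}=q/(1-q)$, so the graph is an $(\alpha N,(1-\epsilon)D)$-expander with probability $\ge 1-q/(1-q)$; in the relevant regime ($M=\Theta(N)$, $\alpha$ arbitrarily small) one gets $q=o(1)$, i.e.\ probability $1-o(1)$.

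The only subtle step is the tightness of the single-subset estimate: the crude bound ``$\Gamma(S)$ lies in some fixed set of $(1-\epsilon)Ds$ vertices'', $\binom{M}{(1-\epsilon)Ds}(\tfrac{(1-\epsilon)Ds}{M})^{Ds}$, produces (after the $\binom nk\le(en/k)^k$ cleanup) a strictly weaker threshold than the one claimed, so one really does need the collision-counting form with the $\binom{Ds}{\epsilon Ds}$ factor, which is what makes the exponents telescope to exactly $\phi(\alpha)$. The remaining loose ends — the rounding of $\epsilon Ds$ to an integer, and the small-$s$ terms where $\binom Ns$ is large but the collision probability is correspondingly minuscule since $M$ is large — are routine and do not affect the threshold.
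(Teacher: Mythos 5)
Your proof is correct and uses essentially the same collision-counting argument as the paper: fix a left set $S$, reveal the $D|S|$ random right-endpoints one at a time, bound each conditional collision probability by $D|S|/M$, union bound over collision-position sets (giving the $\binom{D|S|}{\epsilon D|S|}$ factor), and then union bound over left subsets, producing exactly $(e/\alpha)(e\alpha ND/\epsilon M)^{\epsilon D}$ per vertex in the exponent.

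Two small differences worth flagging. (1) You sum over all sizes $s\le\alpha N$ and use the monotonicity of $\phi$ (via $\epsilon D\ge 1$) to dominate by the $s=\alpha N$ term. The paper's proof only writes the bound for subsets of size exactly $\alpha N$, so your version is the more complete one on this point. (2) At the end, however, you bound the failure probability by $\sum_{s\ge1}q^s=q/(1-q)$; this is $o(1)$ only when $q=o(1)$, whereas the paper's single-size bound $q^{\alpha N}$ already vanishes (exponentially in $N$) for any constant $q<1$, which is what literally justifies ``close to $1$'' under the stated hypothesis. If you keep the per-size bound $\phi(s/N)^s$ rather than immediately replacing $\phi(s/N)$ by $q$, the small-$s$ terms are $\phi(s/N)^s\le\bigl(e(eND/\epsilon M)^{\epsilon D}\bigr)^s N^{-(\epsilon D-1)s}$, which is $o(1)$ term by term when $\epsilon D>1$, and the $s=\alpha N$ term is $q^{\alpha N}$; that recovers the stronger conclusion without needing $q$ itself to be small. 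This is a cosmetic tightening, not a flaw in your approach.
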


\begin{proof}
    Suppose the left part of the bipartite graph is $[N]$. Fix a subset $X$ of $[N]$ with size $\alpha N$, and let $y_i^1, \cdots, y_i^D$ be the neighbours of the $i$--th vertex in $X$. Then the expansion of $X$ is less than $(1-\epsilon)D$ is equivalent to $\#\left\{y_i^j\right\} < (1-\epsilon)D\alpha N$, where $i \in X$ and $j \in [D]$. 
    
    Arrange $y_i^j$ in the lexicographic order of $(i, j)$. The probability of the value of $y_i^j$ has been taken before it does not exceeds $\frac{\#\left\{y_{i'}^{j'}\ \middle|\ (i', j') \prec (i, j)\right\}}{M} < \frac{\alpha ND}{M}$. 
    
    So the probability of the expansion of $X$ is less than $(1-\epsilon)D$ is less than $\binom{\alpha ND}{\epsilon \alpha ND}\cdot \left(\frac{\alpha ND}{M}\right)^{\epsilon \alpha ND}$. 
    
    Hence, the probability of the random graph is not $(\alpha N, (1-\epsilon)D)$--expander is less than
    \begin{equation}\label{appendix:formula}
        \binom{N}{\alpha N}\cdot \binom{\alpha ND}{\epsilon \alpha ND}\cdot \left(\frac{\alpha ND}{M}\right)^{\epsilon \alpha ND}
    \end{equation}
    
    By the approximation of binomial coefficient: $\binom{A}{B} < \left( \frac{eA}{B} \right)^B$, $(\ref{appendix:formula})$ is less than
    \[ \left(\frac{eN}{\alpha N}\right)^{\alpha N} \cdot \left( \frac{e\alpha ND}{\epsilon\alpha ND} \right)^{\epsilon \alpha ND} \cdot \left(\frac{\alpha ND}{M}\right)^{\epsilon \alpha ND} = \left(\left( \frac{e}{\alpha} \right)\cdot \left( \frac{e\alpha ND}{\epsilon M} \right)^{\epsilon D}\right)^{\alpha N} \]
\end{proof}
%Now we show $G_1$ constructed in Section~\ref{sec:bound_dist} is an $(\alpha N, (1-\eps)D)$-expander for a small constant $\alpha$. 
\iffalse
Now we choose the parameters to prove the expansion. From the above proposition, we show that when $\alpha$ is sufficiently small,
    \begin{equation} \label{appendix:formula2}
        \left( \frac{e}{\alpha} \right)\cdot\left(\frac{e\alpha (N-N')D}{\epsilon (M-M')}\right)^{\epsilon D} < 1. \end{equation}
Since $N'<N/2$ and $M'<M/2$ for $M=N/2$ and $\eps D \ge 2$, we  choose $\alpha$ such that
\[
\left( \frac{e}{\alpha} \right) \cdot \left(\frac{4e\alpha D}{\epsilon }\right)^2 = \frac{4 e^3\alpha D^2}{\epsilon^2} < 1.
\]
Namely $\alpha=\frac{\eps^2}{8e^3 D^2}$.
\fi

Given any constant $\eps \in (0, 1)$, by choosing a large enough constant $D$ and let $D_R=\frac{D N}{M}$ be the average degree on the right, Proposition~\ref{appendix:random_graph} immediately implies the following proposition.

\begin{proposition}\label{prop:parameterrelation}
For any constants $\eps, \eta \in (0, 1)$, there exist a constant $D$ and $(\alpha N, (1-\eps)D)$-expanders such that $\frac{\alpha}{\eps} \geq \frac{1/e-\eta}{D_R}$.
\end{proposition}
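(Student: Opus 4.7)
The plan is to apply Proposition~\ref{appendix:random_graph} with carefully tuned parameters. First, the statement is vacuous when $\eta \ge 1/\e$ (the bound $(1/\e-\eta)/D_R$ becomes non-positive), so I will assume $0 < \eta < 1/\e$ and set $\lambda := 1/\e - \eta \in (0, 1/\e)$.

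I will engineer the equality $\alpha D_R / \eps = \lambda$. Since $D_R = DN/M$, this amounts to choosing the right-vertex count as $M = \alpha D N/(\eps \lambda)$. Under this choice, the quantity $\e \alpha D N/(\eps M)$ appearing in the hypothesis of Proposition~\ref{appendix:random_graph} becomes exactly $\e \lambda = 1 - \e \eta \in (0,1)$, and the hypothesis simplifies to
\[
\left(\frac{\e}{\alpha}\right)(1 - \e\eta)^{\eps D} < 1, \quad \text{i.e.,} \quad \alpha > \e (1 - \e\eta)^{\eps D}.
\]

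Since $(1-\e\eta)^{\eps D}$ decays exponentially in $D$ while the upper bound $\eps \lambda / D$ (which we impose below) decays only polynomially, I can pick a sufficiently large constant $D = D(\eps, \eta)$ so that the window
\[
\e(1-\e\eta)^{\eps D} \;<\; \alpha \;\le\; \frac{\eps \lambda}{D}
\]
is non-empty, and then fix any constant $\alpha$ in this window. The upper constraint $\alpha \le \eps \lambda / D$ is included only to guarantee $M \le N$, so that the construction is a genuine bipartite graph with at least as many left vertices as right ones. For an arithmetic progression of $N$ making both $\alpha N$ and the induced $M = \alpha D N/(\eps \lambda)$ integers, Proposition~\ref{appendix:random_graph} then certifies that a random left-$D$-regular bipartite graph with these parameters is an $(\alpha N, (1-\eps)D)$-expander with probability tending to $1$, and in particular at least one such graph exists. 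By construction, this graph satisfies $\alpha/\eps = \lambda/D_R = (1/\e - \eta)/D_R$, giving the claim.

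The only mildly subtle step is verifying that the window for $\alpha$ is non-empty for large $D$, but this is immediate from comparing exponential decay $(1-\e\eta)^{\eps D}$ against the polynomial $1/D$; everything else is a direct substitution into Proposition~\ref{appendix:random_graph}. Integer rounding of $\alpha N$ and $M$ is handled routinely by restricting $N$ to a suitable arithmetic progression.
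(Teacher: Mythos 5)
Your proof is correct and takes the same route as the paper: the paper's argument for this proposition is a single sentence ("by choosing a large enough constant $D$ ... Proposition~\ref{appendix:random_graph} immediately implies the following proposition"), and you have simply made explicit the parameter choices that make that sentence go through. Your substitution $M=\alpha DN/(\eps\lambda)$ with $\lambda=1/\e-\eta$, the resulting reduction of the hypothesis of Proposition~\ref{appendix:random_graph} to $\alpha>\e(1-\e\eta)^{\eps D}$, the constraint $\alpha\le\eps\lambda/D$ to keep $M\le N$, and the observation that the window is non-empty for large $D$ because the lower end decays exponentially while the upper end decays polynomially, are all correct and are exactly what the paper's terse remark presupposes.
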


One can also obtain a regular expander by choosing an integer $D_R=\frac{D N}{M}$ and generating $D_R$ permutations. Such a random expander has been proved in \cite{SS96}. We provide an argument for completeness. 

Here is a technical lemma summarized from \cite{SS96}.
\begin{proposition}\label{Prop:regular}
	Let $B$ be a random $(D, D_R)$--regular bipartite graph with left size $N$ and right size $\frac{D \cdot N}{D_R}$. Then for all $0 < \alpha < 1$, with exponentially high probability all sets of $\alpha n$ vertices in the left part have at least
	\[ N\left( \frac{D}{D_R}\left( 1-(1-\alpha)^{D_R} \right) - 2\alpha \cdot \sqrt{D \log e/\alpha}\right) \]
	neighbours.%, where $H(x) = -x\log_2 x - (1-x)\log_2 (1-x)$ is the binary entropy function. 
\end{proposition}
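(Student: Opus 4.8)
The plan is the standard first-moment-plus-concentration argument for random biregular graphs. First I would realize $B$ via the \emph{configuration model}: attach $D$ half-edges to each of the $N$ left vertices and $D_R$ half-edges to each of the $M = DN/D_R$ right vertices, and take a uniformly random perfect matching between the $DN$ left half-edges and the $DN$ right half-edges. For a fixed $S \subseteq V_L$ with $|S| = \alpha N$, write $X_S := |\Gamma(S)|$. The goal is then to (i) compute $\E[X_S]$, (ii) show $X_S$ is sharply concentrated below its mean, and (iii) union-bound the bad event over all $\binom{N}{\alpha N}$ such sets $S$ (and, under the reading that the bound should hold simultaneously for all $\alpha$, over the $\le N$ admissible values of $\alpha N$, which costs only a polynomial factor and is harmless).

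For step (i): by linearity of expectation, $\E[X_S] = M\bigl(1 - \Pr[v \notin \Gamma(S)]\bigr)$ for a fixed right vertex $v$, and since each of the $D_R$ half-edges of $v$ must be matched to one of the $DN(1-\alpha)$ left half-edges lying outside $S$, we get $\Pr[v \notin \Gamma(S)] = \prod_{i=0}^{D_R - 1} \frac{DN(1-\alpha) - i}{DN - i} = (1-\alpha)^{D_R}\bigl(1 \pm O(D_R^2/(DN))\bigr)$. Hence $\E[X_S] = \frac{DN}{D_R}\bigl(1 - (1-\alpha)^{D_R}\bigr) \pm O(D_R)$, and for $N$ large the $O(D_R)$ error is swallowed by the slack term $2\alpha N\sqrt{D\log(e/\alpha)}$ in the statement.

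For step (ii): expose the matching by deferred decisions, revealing the partners of the $D\alpha N$ left half-edges incident to $S$ one at a time (the partners of the remaining left half-edges are irrelevant to $X_S$). This exhibits $X_S$ as a function of a uniformly random injective sequence of $D\alpha N$ right half-edges, and the key claim is that this function is $1$-Lipschitz in each coordinate: replacing the partner assigned to one $S$-half-edge and re-coupling the conditional distribution of the remaining partners by the obvious swap changes the matched multiset of right half-edges in exactly one position, which changes the number of distinct right vertices it meets by at most one. Azuma's inequality applied to the resulting Doob martingale of length $D\alpha N$ with unit increments then gives $\Pr\bigl[X_S \le \E[X_S] - t\bigr] \le \exp\bigl(-t^2/(2 D\alpha N)\bigr)$.

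For step (iii): use $\binom{N}{\alpha N} \le e^{N H(\alpha)} \le e^{\alpha N\log(e/\alpha)}$ (since $(1-\alpha)\log\frac{1}{1-\alpha}\le\alpha$). Taking $t = 2\alpha N\sqrt{D\log(e/\alpha)}$ yields $t^2/(2D\alpha N) = 2\alpha N\log(e/\alpha)$, so the union bound over all $S$ is at most $e^{\alpha N\log(e/\alpha)}\cdot e^{-2\alpha N\log(e/\alpha)} = e^{-\alpha N\log(e/\alpha)}$, which is exponentially small; the $O(D_R)$ from step (i) together with the factor-$2$ in Azuma can be absorbed into a leading constant that is still at most $2$, matching the claimed $2\alpha\sqrt{D\log(e/\alpha)}$. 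The step I expect to require the most care is the $1$-Lipschitz claim in (ii): one must argue cleanly, via an explicit coupling of the two conditional distributions, that re-routing a single half-edge perturbs $|\Gamma(S)|$ by at most one \emph{despite} the global matching constraint. A possible alternative to bypass the martingale — bounding $\Pr[X_S \le k]$ directly by a union over candidate complements of $\Gamma(S)$ — is tempting but turns out to be too lossy (the number of candidate missed sets overwhelms the probability of any one being missed), which is precisely why the mean-plus-concentration route is the right one.
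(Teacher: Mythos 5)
Your proposal follows essentially the same route as the paper: compute the mean of $|\Gamma(S)|$ by linearity of expectation, set up a Doob edge-exposure martingale of length $D\alpha N$ with unit increments, apply Azuma with $\lambda\sqrt{D\alpha N}=2\alpha N\sqrt{D\log(e/\alpha)}$, and union bound over the $\binom{N}{\alpha N}\le (e/\alpha)^{\alpha N}$ subsets. The only difference is that you flag the $1$-Lipschitz property as needing a coupling argument (the paper asserts it without proof) and track an $\pm O(D_R)$ error in the mean that the paper avoids by noting the product bound for $\Pr[v\notin\Gamma(S)]$ is one-sided; both are minor and do not change the argument.
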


Before we prove this proposition, we show how to choose the parameters to make the expansion at least $(1-\eps)D$. Recall that in the proof of Theorem~\ref{thm:tight_distance} in Section~\ref{sec:bound_dist}, we are looking at a random bipartite graph with $N_1=N-N' \ge N/2$ left vertices, $M_1=M-DN'/2$ right vertices, regular left degree $D$ and regular right degree $D_R=N_1 \cdot D/M_1$. Since $M_1 \ge M/2 \ge N/4$ and $N_1 \le N$, we have $D_R \le 4D$. Next we choose $\alpha = 10^{-3} \cdot (\eps/D)^2$ such that for any $\alpha' \leq 2 \alpha$, $(1-\alpha')^{D_R} \in [1-\alpha' D_R, 1-(1-\eps/2)\alpha' D_R]$ and  $1-(1-\alpha')^{D_R} \in \big[  (1-\eps/2) \alpha' D_R, \alpha' D_R \big]$. Note that any subset of size $\alpha N$ has size $\alpha' N_1$ with $\alpha \le \alpha' \le 2 \alpha$. Thus we simplify the bound in the above proposition to get the desired expansion
\begin{align*}
    & N_1\left( \frac{D}{D_R} \cdot (1-\eps/2) \alpha' D_R - 2\alpha' \cdot \sqrt{D \log (e/\alpha')}\right) = N_1 D \alpha' \cdot \left(1-\eps/2 - 2\sqrt{\frac{\log (e/\alpha')}{D}} \right) \\ \ge & N_1D \alpha' \cdot (1-\eps) = \alpha ND \cdot (1-\eps),
\end{align*}
for a sufficiently large constant $D=D(\eps)$.

%\xnote{What are $d$ and $c$? They should be $D_R$ and $D$ instead. Should add a reference to \cite{SS96}. Need a simplified version for e.g. small $\alpha < 1/D_R$ and $1-\eps$ expansion for our application. It would be nice to figure out a relation similar to Fact~\ref{fact:relations_parameters}, so that we can compare our bounds to.}

\begin{proof}[Proof of Proposition~\ref{Prop:regular}]
	First, we fix a set of $\alpha N$ vertices in the left part, $V$, and estimate the probability that $\Gamma(V)$ is small. The probability of a certain vertex in the right part is contained in $\Gamma(V)$ is at least $1 - (1-\alpha)^{D_R}$. Thus the expected number of neighbours of $V$ is at least $M \cdot (1 - (1-\alpha)^{D_R}) = \frac{nD\left(1-(1-\alpha)^{D_R}\right)}{D_R}$. We will use Azuma inequality to derive that $|\Gamma(V)|$ has a small deviation property, and hence the probability of $|\Gamma(V)|$ less than the expectation minus some deviation is exponentially small. 
	
	Actually, we number the edges outgoing from V by 1 through $D \alpha N$. Let $X_i$ be the random variable of the expected size of $|\Gamma(V)|$ given the choice of the first $i$ edges leaving from $V$. Clearly, $X_1, \cdots, X_{D \alpha N}$ form a martingale and $|X_{i+1}-X_i| \leqslant 1$. 
	
	By Azuma's inequality, we have:
	\[ \mathbb{P}\left( \mathbb{E}\left(X_{D \alpha n}\right) - X_{D \alpha N} > \lambda \sqrt{D \alpha N} \right)  < \exp\left( -\lambda^2/2 \right) \]
	
	Since there are $\binom{N}{\alpha N}$ choices for the set $V$, it suffices to choose $\lambda$ such that
	\[ \binom{N}{\alpha N}e^{-\lambda^2/2} \text{ is exponentially small}. \]
    Since ${N \choose \alpha N} \le (e/\alpha)^{\alpha N}$, we choose $\lambda=2 \cdot \sqrt{\alpha N \cdot \log (e/\alpha)}$ to make it exponentially small. Then the deviation becomes 
    \[
    \sqrt{D \alpha N} \cdot 2 \sqrt{\alpha N \cdot \log (e/\alpha)}=2\alpha N \cdot \sqrt{D \log (e/\alpha)}
    \]
%	By Stirling's formula, this holds for large $n$ if $\lambda$ satisfies
%	\[ \frac{n\operatorname{H}(\alpha)}{\log_2 \operatorname{e}} \Longleftrightarrow \sqrt{2n\operatorname{H}(\alpha)/\log_2 \operatorname{e}} < \lambda \]
\end{proof}

\end{document}